\newcommand \be  {\begin{equation}}
\newcommand \ee  {\end{equation}}
\newcommand \bea {\begin{eqnarray} }
\newcommand \eea {\end{eqnarray}}
\newtheorem{theorem}{Theorem}[section]
\newtheorem{lemma}{Lemma}[section]
\newtheorem{proposition}[lemma]{Proposition}
\theoremstyle{remark}
\newtheorem{remark}[theorem]{Remark}
\newtheorem{result}{Result}
\DeclarePairedDelimiter{\ceil}{\lceil}{\rceil}
\DeclarePairedDelimiter\floor{\lfloor}{\rfloor}
\begin{document}

% TODO: write your article's title here.
% The article title is centered, Large boldface, and should fit in two lines
\begin{center}{\Large 
{\bf Large fluctuations of the KPZ equation in a half-space}}\end{center}

%Point process density controls short time fluctuations:
%application to the half-space Kardar-Parisi-Zhang equation

% TODO: write the author list here. Use initials + surname format.
% Separate subsequent authors by a comma, omit comma at the end of the list.
% Mark the corresponding author with a superscript *.
\begin{center}
Alexandre Krajenbrink\textsuperscript{$\spadesuit \heartsuit$},
Pierre Le Doussal\textsuperscript{$\spadesuit$}
\end{center}

% TODO: write all affiliations here.
% Format: institute, city, country
\begin{center}
{\bf $\spadesuit$} 
Laboratoire de Physique Th\'eorique de l'Ecole Normale Sup\'erieure\\
PSL University, CNRS, Sorbonne Universit\'es\\
24 rue Lhomond, 75231 Paris Cedex 05, France\\
% TODO: provide email address of corresponding author
$\heartsuit$ krajenbrink@ens.fr
\end{center}

\begin{center}
\today
\end{center}

% For convenience during refereeing: line numbers
%\linenumbers

\section*{Abstract}
{\bf
% TODO: write your abstract here.
We investigate the short-time regime of the KPZ equation in $1+1$ dimensions and 
develop a unifying method to obtain the height distribution in this regime, 
valid whenever an exact solution
exists in the form of a Fredholm Pfaffian or determinant. 
These include the droplet and stationary initial conditions in full space, previously obtained by a different method. The novel results concern the droplet initial condition in a half space for several Neumann boundary conditions: hard wall, symmetric, 
and critical. In all cases, the height probability distribution takes the large deviation form
$P(H,t) \sim \exp( - \Phi(H)/\sqrt{t})$ for small time. We obtain the rate function $\Phi(H)$ 
analytically for the above cases. It has a Gaussian form in the center with asymmetric tails, 
$|H|^{5/2}$ on the negative side, and $H^{3/2}$ on the positive side. The amplitude of the 
left tail for the half-space is found to be half the one of the full space. 
As in the full space case, we find that these left
tails remain valid at all times.
In addition, we present here (i) a new Fredholm Pfaffian formula for the solution of
the hard wall boundary condition and (ii) two Fredholm determinant representations for the solutions of the hard wall and the symmetric boundary respectively.}

% TODO: include a table of contents (optional)
% Guideline: if your paper is longer that 6 pages, include a TOC
% To remove the TOC, simply cut the following block
\vspace{10pt}
\noindent\rule{\textwidth}{1pt}
\tableofcontents\thispagestyle{fancy}
\noindent\rule{\textwidth}{1pt}
\vspace{10pt}

\section{Introduction}
\label{sec:intro}
Many recent works study
the continuum KPZ equation in one dimension \cite{KPZ, directedpoly,  HairerSolvingKPZ,  reviewCorwin, HH-TakeuchiReview} 
which describes the stochastic growth of an interface parameterized by a height field $h(x,t)$ at point $x$ and time $t$ as
\be
\label{eq:KPZ}
\partial_t h(x,t) = \nu \, \partial_x^2 h(x,t) + \frac{\lambda_0}{2}\, (\partial_x h(x,t))^2 + \sqrt{D} \, \xi(x,t) \;,
\ee
starting from a given initial condition $h(x,t=0)$.
Here
%$\nu > 0$ denotes the strength of the diffusive relaxation, $\lambda_0 > 0$ is the coefficient of the 
%non-linearity and 
$\xi(x,t)$ is a centered Gaussian white noise with 
$\mathbb{E} \left[\xi(x,t) \xi(x',t')\right]= \delta(x-x')\delta(t-t')$, and we
use from now on units of space, time and heights such
that $\lambda_0=D=2$ and $\nu=1$ \cite{footnote100,footnote1}.
Exact solutions have been found for several initial conditions, notably
flat, droplet and stationary
\cite{SS10,CLR10,DOT10,ACQ11,CLDflat,SasamotoStationary,SasamotoStationary2,BCFV},
and, remarkably, can be expressed using Fredholm determinants or Pfaffians. 
The typical behavior of the KPZ height fluctuations has been obtained from them, and related
to the so-called Tracy Widom distributions in the large time limit (i.e. the distributions of the largest eigenvalues of standard Gaussian random matrix ensembles). \\

Recently, the \textit{large deviations} away from the typical behavior have been studied. 
Unlike diffusive interacting particle systems for which powerful methods \cite{Derrida_largedeviation, Derrida_MFT} were developed, systems in the KPZ class have required to develop 
new theoretical methods. A number of results have been obtained for the short time regime
$t \ll 1$. They all agree that the probability density function (PDF), $P(H,t)$, of the properly shifted height  
at one space point $x=0$, denoted $H$, takes the large deviation form 
\bea
\log P(H,t)\simeq_{t\ll 1}-\frac{\Phi(H)}{\sqrt{t}} \label{LDform}
\eea 
where $\Phi(H)$ is the short time large deviation rate function, which depends on the initial condition.
Two independent methods have been developed to show \eqref{LDform}
and obtain properties of the rate function. The first method is the weak noise theory (WNT), pioneered
in Ref. \cite{Korshunov}, which allows to obtain $\Phi(H)$ (i) for any $H$, from a numerical solution of
saddle point differential equations (ii) analytically in the limits of large $|H|$ (and small $|H|$) for a
variety of initial conditions\cite{Baruch,MeersonParabola,janas2016dynamical,meerson2017randomic,Meerson_Landau,Meerson_flatST}.
The second method uses the exact solutions mentioned above and lead to an exact
formula for $\Phi(H)$ for arbitrary $H$. It has been achieved for the droplet initial condition (IC)
\cite{le2016exact} (with an impressive confirmation from high precision numerics \cite{NumericsHartmann})
and for the stationary IC \cite{krajenbrink2017exact}. Remarkably, it has
been recently shown that the exact formula for the flat IC is also contained in Ref.
\cite{krajenbrink2017exact}, up to a proper rescaling (i.e. one has $\Phi_{\rm flat}(H)=2^{-3/2} \Phi_{\rm stat}(2 H)$ choosing the analytic branch contained in Eqs. (28,29,30,31) of Ref. \cite{krajenbrink2017exact} with the label \textit{analytic}). All of these results however concern the KPZ equation in the full space. \\

Here, we consider the KPZ equation in a half-space, where Eq. \eqref{eq:KPZ} is considered for $x \in \mathbb{R}^+$ along with the Neumann boundary condition (b.c.) 
\begin{equation}
\label{neumann}
\forall t>0, \quad \partial_x h(x,t)\mid_{x=0}\, =A
\end{equation}
where $A$ is a real parameter which describes the interaction with the boundary (a wall at $x=0$). 
This problem was considered in a pioneering paper by Kardar \cite{KardarTransition}
in the equivalent representation in terms of a directed polymer near a wall. An unbinding transition
to the wall was predicted for $A=-1/2$ (and later observed in numerical 
simulations \cite{somoza2015unbinding}). %The Neumann b.c corresponds to the presence of a wall at $x=0$.
Here and below we restrict to the droplet IC 
\begin{equation}
\label{droplet}
 h(x,t=0)=-\frac{|x-\varepsilon|}{\delta}-\log \delta
\end{equation}
with $\delta \ll 1$ and where $\varepsilon=0^+$ is introduced to regularize the solution when it is not properly defined at $x=0$, as it is the case for $A=+\infty$.  

Exact solutions of this half-space problem, for the height at the origin at all times $t$, 
have been obtained in three cases \cite{gueudre2012directed,borodin2016directed,barraquand2017stochastic}
and can be expressed in terms of Fredholm Pfaffian. 
For $A=+\infty$, which corresponds to an (absorbing) hard wall in terms of the directed polymer, the PDF of the height converges at large time, in the typical regime $H \sim t^{1/3}$, to the Tracy-Widom (TW) distribution 
associated to the Gaussian Symplectic Ensemble (GSE) of random matrices
\cite{gueudre2012directed}. For $A=0$, which corresponds to a reflecting wall (a.k.a. the symmetric case), the large time limit of the PDF also corresponds to the TW-GSE
distribution \cite{borodin2016directed}. For $A=-\frac{1}{2}$, i.e. the critical case, 
the large time PDF in the typical regime is given by the Tracy Widom distribution associated to the GOE 
\cite{barraquand2017stochastic}. These exact solutions can be used to calculate the large deviations
both in the short time and the large time regime. Concerning the large time, the tails of the PDF of the height
in the typical regime $H \sim t^{1/3}$ are summarized in the following table \ref{Table0}.
\\

\begin{table}[h!]
\begin{center}
\begin{tabular}{|c|c|c|c|}
\hline
ensemble & droplet IC & left tail $H \ll -t^{1/3}$  & right tail $H \gg t^{1/3}$  \\
\hline
GUE & full-space & $e^{-\frac{1}{12}H^3/t}$ & $e^{-\frac{4}{3}H^{3/2}/t^{1/2}}$ \\
\hline
GOE & $A=-1/2$ & $e^{-\frac{1}{24}H^3/t}$ & $e^{-\frac{2}{3}H^{3/2}/t^{1/2}}$ \\
\hline
GSE & $A=0,\infty$ & $e^{-\frac{1}{24}H^3/t}$ & $e^{-\frac{4}{3}H^{3/2}/t^{1/2}}$ \\
\hline 
\end{tabular}
\end{center}
\label{Table0} \caption{Tails of the PDF of the centered height $H$ for large time
$t \gg 1$ in the typical fluctuation regime $H \sim t^{1/3}$ in the various cases.}
\end{table}

As we discuss below these tails in typical regime are distinct from, but should
match the large deviation tails discussed below. We now turn again the
short time large deviations. 
 \\

In this paper we use these exact solutions to establish
\eqref{LDform} for the half-space problem and to calculate the large deviation rate function
$\Phi(H)$ at short time for the above three cases. The method developed here generalizes the
one introduced in \cite{KrajLedou2018} and applied there to the full space problem at 
large and short time respectively, and to the half-space critical case $A=-1/2$ in the
large time regime. It is much simpler than the one used previously
in \cite{le2016exact,krajenbrink2017exact}. 
It is based on the representation of a generating function of the KPZ field as an expectation value of a "Fermi factor" over a determinantal or Pfaffian point process. This expectation value can be
expanded in cumulants, and its truncation to the first cumulant already yields the exact rate
function $\Phi(H)$ at short time. Here we present this method in its most general formulation, so it can 
be applied readily to a variety of problems (including e.g. multicritical fermions, see 
\cite{MulticriticalFermions}). 

Our main results are listed in the following section and can be summarized as follows.
The definition of the properly centered height $H$ (such that $\mathbb{E}[H]=0$) in terms of $h(0,t)$ is given 
in Table \ref{Table1} for each of the three cases. The rate function $\Phi(H)$ is
determined from an auxiliary function $\Psi$, equivalently from (i) an implicit equation
\eqref{implicit_summary} (ii) a parametric system \eqref{parametric_summary1} and \eqref{parametric_summary2}. The function $\Psi$, which is the large deviation rate function
of the Fredholm determinant (or Pfaffian) itself, is determined by the density of the
associated point process $\rho_{\infty}$ via equation \eqref{psi_summary}. The density 
$\rho_{\infty}$ is given explicitly in the Table \ref{Table1} for each of the three cases
(the factor $\chi$ is also given there). From these formula one can derive the explicit
behavior around the center of the distribution, i.e. for small $|H|$, which is $\Phi(H) \simeq H^2/(2 C_2)$ and
the second cumulant $\mathbb{E}[H^2] \simeq C_2 \sqrt{t}$ where $C_2$ is given in
the Table \ref{Table1}. Higher orders in the power series expansion of $\Phi(H)$ and the
higher cumulants of $H$ are also given in the Table \ref{Table1}.

The tails of the large deviation rate function are given in
the Table \ref{Table1}. We find that the left tail exhibits the
$5/2$ exponent, $\Phi(H) \simeq_{H \to - \infty} \frac{2}{15 \pi} |H|^{5/2}$, with an amplitude which is
half of the amplitude of the
full space solution with droplet initial conditions. For the right tail we find the usual $3/2$ exponent,
with two distinct cases for the amplitude. For $A=+\infty$ we find $\Phi(H) \simeq_{H \to + \infty} 
\frac{4}{3} H^{3/2}$ while for $A=0,-1/2$ we find $\Phi(H) \simeq_{H \to + \infty} 
\frac{2}{3} H^{3/2}$ as indicated. 

Furthermore, we find that for both values $A=-1/2$ and $A=0$ the full rate function 
$\Phi(H)$ is equal to $\frac{1}{2} \Phi_{\rm full-space}(H)$ (both for droplet initial conditions)\footnote{Note for $A=0$ a similar result was mentionned (without details) in \cite{Meerson_flatST} on the basis of WNT.}.
In the two tails, this agrees with the result given in the previous paragraph, together with the formula given
\cite{le2016exact,MeersonParabola} for the tails of the full space problem.
Only the rate function for $A=+\infty$ is unrelated to the ones found in
previous studies. Moreover, a perturbative expansion of the stochastic heat equation,
equivalent to the KPZ equation, shows that the relevant parameter is $A \sqrt{t}$,
hence at short time we expect that any finite $A$ has identical rate function.
This is in agreement with the cases $A=0$ and $A=-1/2$ explicitly solved here. 

In addition to the interest in the large deviations of the KPZ equation at short time, 
recent works have studied the large deviations at large time $t \gg1$
\cite{LargeDevUs,sasorov2017large,CorwinGhosalTail,JointLetter,KrajLedou2018}.
The left tail was argued quite generally to take
the form $\log P(H,t) \simeq_{t \gg 1} - t^2 \Phi_-(H/t) $ for large negative fluctuations $- H \sim t$.
\cite{LargeDevUs}. In recent works the explicit expression of $\Phi_-(z)$ for droplet initial conditions
in the full space was obtained (i) using a WKB type approximation \cite{sasorov2017large} on
a non local Painleve type equation representation 
of the exact solution derived in \cite{ACQ11} (ii)
using Coulomb gas methods \cite{JointLetter}. It exhibits 
a crossover between a cubic tail $\Phi_-(z) \simeq z^3/12$
(matching the Tracy Widom distribution \cite{LargeDevUs})
and a $5/2$ tail exponent $\Phi_-(z) \simeq \frac{4}{15 \pi} z^{5/2}$.
The latter can be readily obtained \cite{KrajLedou2018}
using the method of truncation to the first cumulant
described above, and is identical to the tail behavior at short time (a signature
that the left tail remains identical at all times \cite{JointLetter}). 
A similar result was obtained for the half-space with droplet initial condition
and $A=-1/2$ in \cite{JointLetter} with the result that
$\Phi^{\rm half-space}_-(z) = \frac{1}{2} \Phi^{\rm full-space}_-(z)$
(see also \cite{KrajLedou2018}). 
In this paper we extend some of these results to the cases $A=0$ and $A=+\infty$. 
We establish that in all cases $\log P(H,t) \simeq_{t \gg 1} - \frac{2}{15 \pi} H^{5/2}/t^{1/2}$
in the regime $- H/t \gg 1$. Further arguments leads us to conjecture
that $\Phi^{\rm half-space}_-(z) = \frac{1}{2} \Phi^{\rm full-space}_-(z)$ for all $A > -1/2$.
One can check that the small $z$ cubic behavior of these large deviation predictions
matches perfectly the left tails of the typical regime see Table \ref{Table0}. 

Finally one can ask how the tails evolve in time. 
From the results mentionned in the two previous paragraphs we see
that the $|H|^{5/2}$ left tails have identical prefactor at short and large times for
all cases. For the right $|H|^{3/2}$ tails we can compare the above
results for short time and the tail behavior in the typical region in Table 1.
The prefactors are identical for for $A=-1/2$ and $A=+\infty$ suggesting that
the right tail is established at early times and does not change after that.
However for $A=0$ it has prefactor $2/3$ for short time and $4/3$ for large time,
suggesting some evolution with time, or a more complex tail structure (which may be associated to $A=0$ not being a critical fixed point)\footnote{We also rely on the conjectured exact solution in \cite{borodin2016directed}}.\\

In the case of the hard wall, we have obtained a new useful representation of the exact solution at all times in terms of a Fredholm Pfaffian with a matrix valued kernel which we show is equivalent to the solution of \cite{gueudre2012directed} expressed in terms a Fredholm determinant with a scalar valued kernel. The interest of this representation is to provide a connection with a Pfaffian point process that converges at large time to the GSE. In addition, we have also generalized this connection to a broader class of matrix valued kernels which should be useful to study further properties of determinantal point processes. \\

The outline of the paper is as follows. Two types of new results
for the solutions of the KPZ equation in half space for the droplet IC are provided here.
In Sections \ref{method11} and \ref{method22} we develop a unifying method to study the exact large deviation rate function $\Phi$ at short time, valid whenever a Pfaffian or determinantal representation of the exact solution
is available. We apply this framework in Sections \ref{hard_wall_large_dev}, \ref{critical_wall_large_dev} and \ref{reflective_wall_large_dev} to the three cases $A=+\infty,-\frac{1}{2},0$. We provide in Section \ref{hard_wall_large_dev} a new kernel representation for the hard wall case $A=+\infty$ in terms of a Fredholm Pfaffian. In Section \ref{perturbative_theory}, we study the short time perturbation theory of the KPZ equation in half-space and argue that at short time there exists only two fixed points for the large deviation rate function $\Phi$ given by the $A=\infty$ case and the $A=0$ case. Finally, in Section \ref{long_time_large}, following the approach of \cite{KrajLedou2018,JointLetter}, we show that the left tail, i.e. the left asymptotics of $\Phi$ remains valid at all times. In addition to some technical details present in Appendix A, C and D, we present in Appendix B our connection mentionned above between Fredholm Pfaffians with matrix valued kernels and Fredholm determinants with scalar valued kernels.

\newpage

\section{Presentation of the main results}
We give in this section a summary of the new results of this paper. Firstly, we exhibit a new kernel representation for the hard wall case $A=+\infty$. Secondly, we present some general mathematical rules to obtain properties about the short time distribution of the KPZ solution. A visual summary is given in Table \ref{Table1}.
\subsection{A new kernel for the hard wall}
\label{resubmission_hard_section}
A new identity for the moment generating function of the Cole-Hopf solution of the KPZ equation for the hard wall $A=+\infty$ case is given for $z \geq 0$ as 
%\begin{equation}
%\label{new_hard_wall}
% \mathbb{E}_{\mathrm{KPZ}}\left[ \exp\left( -z e^{H_1} \right)\right]=1+\sum_{n_s=1}^\infty\frac{(-1)^{n_s}}{n_s!} \prod_{p=1}^{n_s} \int_{\mathbb{R}} \mathrm{d}r_p\frac{z}{z + e^{- r_p}}\mathrm{Pf}\left[ K(r_i,r_j)\right]_{2 n_s \times 2 n_s}
%\end{equation}
%\begin{equation}
%\label{resubmission_kernel_hard_wall}
%\begin{split}
%&K_{11}(r,r')=\int_{C_{v}}\int_{C_{w}} \frac{\mathrm{d}v\mathrm{d}w}{(2i\pi)^2 \pi } \frac{v-w}{v+w}\Gamma(2v)\Gamma(2w)\cos(\pi v)\cos(\pi w)e^{ -rv-r'w + t \frac{v^3+w^3}{3} }\\
%&K_{22}(r,r')=\int_{C_{v}}\int_{C_{w}} \frac{\mathrm{d}v\mathrm{d}w}{(2i\pi)^2 \pi } \frac{v-w}{v+w}\Gamma(2v)\Gamma(2w)\sin(\pi v)\sin(\pi w)e^{ -rv-r'w + t \frac{v^3+w^3}{3} }\\
%&K_{12}(r,r')=\int_{C_{v}}\int_{C_{w}} \frac{\mathrm{d}v\mathrm{d}w}{(2i\pi)^2 \pi} \frac{v-w}{v+w}\Gamma(2v)\Gamma(2w)\cos(\pi v)\sin(\pi w)e^{ -rv-r'w + t \frac{v^3+w^3}{3} }\\
%&K_{21}(r,r')=-K_{12}(r',r) 
%\end{split}
%\end{equation}
\begin{equation} \label{sumPfaff}
 \mathbb{E}_{\mathrm{KPZ}}\left[ \exp\left( -z e^{H_1} \right)\right]=1+\sum_{n_s=1}^\infty\frac{(-1)^{n_s}}{n_s!} \prod_{p=1}^{n_s} \int_{\mathbb{R}} \mathrm{d}r_p\frac{z}{z + e^{-t^{1/3} r_p}}\mathrm{Pf}\left[ K(r_i,r_j)\right]_{ n_s \times  n_s}
\end{equation}
where $H_1=h(\varepsilon,t)+ \frac{t}{12} - 2 \log \varepsilon$ for $\varepsilon \ll 1$, the expected value of the l.h.s of \eqref{sumPfaff} is taken over the realization of the KPZ white noise, and $K$ is a $2\times 2$ block matrix with elements
\begin{equation}
\label{K_block}
\begin{split}
&K_{11}(r,r')=\int_{C_{v}}\int_{C_{w}} \frac{\mathrm{d}v\mathrm{d}w}{(2i\pi)^2 \pi t^{\frac{1}{3}}} \frac{v-w}{v+w}\Gamma(2vt^{-\frac{1}{3}})\Gamma(2wt^{-\frac{1}{3}})\cos(\pi vt^{-\frac{1}{3}})\cos(\pi wt^{-\frac{1}{3}})e^{ -rv-r'w +  \frac{v^3+w^3}{3} }\\
&K_{22}(r,r')=\int_{C_{v}}\int_{C_{w}} \frac{\mathrm{d}v\mathrm{d}w}{(2i\pi)^2 \pi t^{\frac{1}{3}}} \frac{v-w}{v+w}\Gamma(2vt^{-\frac{1}{3}})\Gamma(2wt^{-\frac{1}{3}})\sin(\pi vt^{-\frac{1}{3}})\sin(\pi wt^{-\frac{1}{3}})e^{ -rv-r'w +  \frac{v^3+w^3}{3} }\\
&K_{12}(r,r')=\int_{C_{v}}\int_{C_{w}} \frac{\mathrm{d}v\mathrm{d}w}{(2i\pi)^2 \pi t^{\frac{1}{3}}} \frac{v-w}{v+w}\Gamma(2vt^{-\frac{1}{3}})\Gamma(2wt^{-\frac{1}{3}})\cos(\pi vt^{-\frac{1}{3}})\sin(\pi wt^{-\frac{1}{3}})e^{ -rv-r'w +  \frac{v^3+w^3}{3} }\\
&K_{21}(r,r')=-K_{12}(r',r) 
\end{split}
\end{equation}
The contours $C_v$ and $C_w$ must both pass at the right of 0 because of the $\Gamma$ functions as $C_{v,w}=\frac{1}{2}a_{v,w} + i \mathbb{R}$ for $a_{v,w}\in ]0,t^{1/3}[$ and they must be such that $\mathrm{Re}(v+w)>0$ for the denominators to be well defined. We additionally have the Pfaffian point process identity for $z \geq 0$
\begin{equation}
\label{new_result_representation}
\mathbb{E}_{\mathrm{KPZ}}\left[ \exp\left( -z e^{H_1}\right)\right]=\mathbb{E}_{K}\left[\prod_{i=1}^\infty \frac{1}{1+z e^{ t^{1/3} a_i}} \right]
\end{equation}
where the r.h.s of \eqref{new_result_representation} is an average over the Pfaffian point process with kernel $K$ that generates the set $\lbrace a_i \rbrace_{i\in \mathbb{N}}$. Note that it can also be written
as a {\it Fredholm Pfaffian}, see Section \ref{sec:introcum}. For the definition of a Pfaffian point process
see \cite{rains2000correlation,borodin2009determinantal,forrester2010log}. 
For the definition and properties of Fredholm Pfaffians see Sec. 8 in \cite{rains2000correlation},
as well as e.g. Sec. 2.2. in \cite{baikbarraquandSchur}, Appendix B in 
\cite{Ortmann} and Appendix G in \cite{CLDflat}. These results are shown in Section \ref{matrix_hard_wall}.

\subsection{A unifying method for the large deviations at short time}
\label{resubmission_unifying}
Whenever a Pfaffian or determinantal representation exists for the moment generating function of the partition function $Z=e^{H_1}$ as in \eqref{new_result_representation}, we present a general method and general mathematical rules to obtain the exact distribution of $H_1$ at short time. This method extends the
ones used in \cite{le2016exact} and \cite{krajenbrink2017exact} for the droplet and stationary initial conditions respectively in the full space. 

\begin{result}[Short time large deviations properties]
We suppose that the KPZ equation has been solved and yields for the moment generating function of the partition function the following Fredholm Pfaffian point process representation for $z \geq 0$
\begin{equation}
\mathbb{E}_{\mathrm{KPZ}}\left[ \exp\left( -\frac{z\alpha}{\sqrt{t}} e^{H_1}\right)\right]=\mathbb{E}_{K}\left[\prod_{i=1}^\infty \frac{1}{[1+z e^{t^{1/3} a_i}]^{\chi}} \right]
\end{equation}
for some $\alpha>0$, $\chi>0$ and a set of points $\lbrace a_i \rbrace_{i\in \mathbb{N}}$  forming a Pfaffian point process with a $2\times 2$ kernel $(K_{ij})_{i,j=1,2}$.
We suppose the following properties on the off-diagonal kernel
\begin{enumerate}
\item  $ K_{12}(at^{-1/3},at^{-1/3})\simeq_{t \ll 1} t^{-1/6}\rho_{\infty}(a) \theta(a\leq \Xi)$  for some finite $\Xi<\infty$ where $\theta$ is the Heaviside function.
\item $\rho_\infty$ is positive real-valued and strictly decreasing on $\left]-\infty,\Xi\right]$ and grows towards $-\infty$ as $ \rho_\infty(a)\simeq_{-a\gg1}\beta_1  [-a]^{\gamma_1}$ for some $\beta_1 >0$ and $\gamma_1>0$.
\item $\rho_\infty$ vanishes algebraically at the right edge $\Xi$ as $\rho_{\infty}(a)\simeq_{ a\to \Xi} (\Xi-a)^{\nu}$ for some $0<\nu\leq 1$.
\item The extension of $\rho_\infty$ on the interval $]\Xi,+\infty[$ is purely imaginary-valued and grows toward $+\infty$ as $\rho_\infty(a)\simeq_{a \gg 1} \beta_2 [-a]^{\gamma_2}$ for some $\beta_2>0$ and $\gamma_2>0$. It requires $\gamma_2$ to be half-integer as discussed below.
\end{enumerate}
We introduce the functions $\Psi$ defined on $\left[-e^{-\Xi},+\infty \right[$ and $f$ defined on $\left]0,+\infty\right[$
\begin{equation}
\label{psi_summary}
\Psi(z)=\chi \int_{-\infty}^\Xi \mathrm{d}a \,  \log(1+z e^{a}) \rho_\infty(a) \quad , \quad f(y)=\chi \int_{-\log y}^{\Xi}\mathrm{d}v\, \rho_\infty(v)
\end{equation}
Then, the random variable defined as $H=H_1+\log \alpha -\log \Psi'(0)$ is centered, i.e. $\mathbb{E}[H]=0$, and its probability density function takes the large deviation form at short time
\begin{equation}
\log P(H,t)\underset{t\ll 1}{\simeq}-\frac{\Phi(H)}{\sqrt{t}}
\end{equation}
Introducing the ``branching field" $H_c=\log [\Psi'(-e^{-\Xi})/\Psi'(0)]$, see Section \ref{subs:range}, $\Phi$ is the solution of the implicit equations
\begin{equation}
\label{implicit_summary}
\begin{split}
&\forall H\leq H_c, \;  \Phi ( H)- \Phi '( H)=\Psi(-\frac{e^{- H} \Phi '( H)}{\Psi'(0)})\\
&\forall H\geq H_c, \;  \Phi ( H)- \Phi '( H)=\Psi(-\frac{e^{- H} \Phi '( H)}{\Psi'(0)})+2i\pi f(\frac{e^{- H} \Phi '( H)}{\Psi'(0)})
\end{split}
\end{equation}
or equivalently of the parametric equations 
\begin{itemize}
\item $\forall H\leq H_c$, or equivalently, $\forall z\in [-e^{-\Xi},+\infty[$ (see Section \ref{resubmission_general_framework})
\begin{equation}
\label{parametric_summary1}
\begin{split}
&\Psi'(0) e^{H}= \Psi'(z)\\
&\Phi(H)=\Psi(z)-z\Psi'(z)
\end{split}
\end{equation}
\item $\forall H \geq H_c$, or equivalently, $\forall z\in [-e^{-\Xi},0[$ (see Section \ref{subs:range})
\begin{equation}
\label{parametric_summary2}
\begin{split}
&\Psi'(0) e^{H}= \Psi'(z)-2i\pi f'(-z)\\
&\Phi(H)=\Psi(z)-z\Psi'(z)+2i\pi f(-z) +2i\pi z f'(-z)
\end{split}
\end{equation}
\end{itemize}

The properties of $\Phi$ are the following :
\begin{enumerate}
\item $\Phi$ is analytic, i.e. infinitely differentiable everywhere on the real line, see Section \ref{regularity_cont}.
\item $\Phi$ is quadratic for small argument, i.e. $\Phi(0)=\Phi'(0)=0$, $\Phi''(0)=-\Psi'(0)^2/\Psi''(0)$ so that the distribution of $H$ is gaussian for small $H$ around 0. The second cumulant of $H$ is $\mathbb{E}[H^2]=-\frac{\Psi''(0)}{\Psi'(0)^2}\sqrt{t}$ and the higher ones are provided in Section \ref{subs:centering} Eq. \eqref{derivatives_phi}. 
\item The left tail of $\Phi$ is $\Phi(H)\simeq_{H\rightarrow -\infty}  \dfrac{\chi \beta_1 }{(\gamma_1 +1)(\gamma_1+2)}|H|^{\gamma_1+2}$, see Section \ref{left_tail}.
\item The right tail of $\Phi$ is $\Phi(H)\simeq_{H\rightarrow +\infty}  \dfrac{2\pi \chi \beta_2}{\gamma_2+1} H^{\gamma_2+1}$, see Section \ref{right_tail}.
\end{enumerate}
\begin{remark}[Determinantal point process]
This unifying method is also valid for a determinantal point process with Kernel $K$ up to the replacement of $K_{12}(at^{-1/3},at^{-1/3})$ in the Hypothesis 1 by $K(at^{-1/3},at^{-1/3})$.
\end{remark}
\begin{remark}[Dynamical phase transition]
Under our set of hypothesis, as $\Phi$ is analytic, there cannot be a dynamical phase transition for the large deviation statistics. In Ref. \cite{krajenbrink2017exact}, it was shown exactly for the Brownian initial condition that $\Phi$ exhibits a singularity in its second derivative. The reason for that is that in the Brownian case, the Hypothesis 4 about the growth of $\rho_\infty$ is violated. 
This phase transition was unveiled in the context of WNT \cite{janas2016dynamical,Meerson_Landau}. 
\end{remark}
\end{result}
This very general framework is then applied to specific examples: we summarize below the results and properties obtained for the half-space droplet KPZ solution for $A=+\infty,-\frac{1}{2},0$ respectively in Sections \ref{hard_wall_large_dev}, \ref{critical_wall_large_dev} and \ref{reflective_wall_large_dev}. The general features have been discussed in the Introduction and the details are summarized in the following Table \ref{Table1}. 
\newpage
\begin{table}[h!]
\makebox[\textwidth][c]{
\begin{tabular}{|c||c||c||c|}
\hline 
&&&\\[-1.5ex]
Properties & $A=+\infty$ & $A=0$ & $A=-\frac{1}{2}$ \\ [1ex]
\hline 
\hline 
&&&\\[-1.5ex]
Fermi factor power $\chi$ & 1 & 1 & $\frac{1}{2}$ \\ [1ex]
\hline 
&&&\\[-1.5ex]
$\rho_\infty(a)$ & $\frac{1}{2\pi }[\sqrt{-W_{-1}(-e^{a})}-\sqrt{-W_{0}(-e^{a})}]$ & $\frac{1}{2\pi }\sqrt{-a}$ & $\frac{1}{\pi }\sqrt{-a}$ \\ [1ex]
\hline 
&&&\\[-1.5ex]
Edge $\Xi$ & -1 & 0 & 0 \\ [1ex]
\hline 
&&&\\[-1.5ex]
Left asymptotics of $\rho_\infty$ & $\frac{1}{2\pi}\sqrt{-a}$ & $\frac{1}{2\pi}\sqrt{-a}$ & $\frac{1}{\pi}\sqrt{-a}$ \\ [1ex]
\hline
&&&\\[-1.5ex]
Edge cancellation of $\rho_\infty$ & $\sqrt{-1-a}$ & $\sqrt{-a}$ & $\sqrt{-a}$ \\[1ex]
\hline
&&&\\[-1.5ex]
Right asymptotics of $\rho_\infty$ & $i\frac{1}{\pi}\sqrt{a}$ & \hspace{1.4cm} $i\frac{1}{2\pi}\sqrt{a}$\hspace{1.4cm} & $i\frac{1}{\pi}\sqrt{a}$ \\[1ex]
\hline 
&&\multicolumn{2}{c|}{}\\[-1.5ex]
Centered field $H$ & $h(\varepsilon,t)+\frac{t}{12}-\log\frac{\varepsilon^2}{\sqrt{4\pi}t^{3/2}}\quad  (\varepsilon\to 0^+)$ & \multicolumn{2}{c|}{$h(0,t)+\frac{t}{12}+\frac{1}{2}\log(\pi t)$}
%$h(0,t)+\frac{t}{12}+\frac{1}{2}\log(\pi t)$ & $h(0,t)+\frac{t}{12}+\frac{1}{2}\log(\pi t)$ 
\\[1ex]
\hline 
&&\multicolumn{2}{c|}{}\\[-1.5ex]
Second cumulant $\mathbb{E}[H^2]^c$& $\frac{3}{2}\sqrt{\frac{\pi t}{2}}$ & \multicolumn{2}{c|}{ $\sqrt{2\pi t }$}
%& $\sqrt{2\pi t }$ 
\\ [1ex]
\hline 
&&\multicolumn{2}{c|}{}\\[-1.5ex]
Third cumulant $\mathbb{E}[H^3]^c$& $\left(\frac{160}{27 \sqrt{3}}-\frac{27}{8}\right) \pi t$ & 
 \multicolumn{2}{c|}{ $\frac{2}{9} (16 \sqrt{3}-27) \pi t$}   
%$\frac{2}{9} (16 \sqrt{3}-27) \pi t$ 
\\ [1ex]
\hline 
&& \multicolumn{2}{c|}{}  \\[-1.5ex]
Fourth cumulant $\mathbb{E}[H^4]^c$&$ \frac{5}{144} \left(567+486 \sqrt{2}-512 \sqrt{6}\right) \pi ^{3/2} t^{3/2}$ &  
 \multicolumn{2}{c|}{ $\frac{8}{3} \left(18+15 \sqrt{2}-16 \sqrt{6}\right) \pi ^{3/2} t^{3/2}$ }    
%& $\frac{8}{3} \left(18+15 \sqrt{2}-16 \sqrt{6}\right) \pi ^{3/2} t^{3/2}$ 
\\ [1ex]
\hline 
&&\multicolumn{2}{c|}{}\\[-1.5ex]
Fifth cumulant $\mathbb{E}[H^5]^c$& $(-\frac{10296145}{23328}-\frac{4725}{8 \sqrt{2}}+400 \sqrt{3}+\frac{1161216}{3125 \sqrt{5}}) \pi ^2 t^2$ & 
 \multicolumn{2}{c|}{ $\frac{8}{225} (-39625-27000 \sqrt{2}+36000 \sqrt{3}+6912 \sqrt{5}) \pi ^2 t^2$ }    
%& $\frac{8}{225} \left(-39625-27000 \sqrt{2}+36000 \sqrt{3}+6912 \sqrt{5}\right) \pi ^2 t^2$ 
\\ [1ex]
\hline 
&&\multicolumn{2}{c|}{}\\[-1.5ex]
Branching field $H_c$ & 0.9795& \multicolumn{2}{c|}{  0.9603 }   \\ [1ex]
\hline 
&&\multicolumn{2}{c|}{}\\[-1.5ex]
Left tail of $\Phi$ & $\frac{2}{15\pi}|H|^{5/2}$ & \multicolumn{2}{c|}{$\frac{2}{15\pi}|H|^{5/2}$   }\\ [1ex]
\hline 
&&\multicolumn{2}{c|}{}\\[-1.5ex]
Right tail of $\Phi$ & $\frac{4}{3}H^{3/2}$  & \multicolumn{2}{c|}{$\frac{2}{3}H^{3/2}$  }    
\\ [1ex]
\hline 
&&\multicolumn{2}{c|}{}\\[-1.5ex]
Is $\Phi$ analytic ? & yes & \multicolumn{2}{c|}{ yes }    
%& yes 
\\ [1ex]
\hline 
\end{tabular} 
} \caption{Summary of results}
\label{Table1}
\end{table}
~\\
Note that $W_0$ and $W_{-1}$ are the two real branches of the Lambert function, 
i.e. $W(z) e^{W(z)}=z$, see Appendix \ref{app:lambert} for more details and \cite{corless1996lambertw} for a review about the Lambert function.
\newpage
\section{Large deviation of the moment generating function from the first cumulant}
\label{method11}
\subsection{Introduction to the cumulant method} \label{sec:introcum} 
Throughout this section we assume that the KPZ equation has been solved and yields for the moment generating function of the partition function the following Fredholm Pfaffian representation 
for $z \geq 0$
\begin{equation}
\label{pfaffian_rep}
\mathbb{E}_{\mathrm{KPZ}}\left[ \exp\left( -\frac{z\alpha}{\sqrt{t}} e^{H}\right)\right]=\mathbb{E}_{K}\left[\prod_{i=1}^\infty \frac{1}{[1+z e^{t^{1/3} a_i}]^{\chi}} \right]
\end{equation}
for some $\alpha>0$, $\chi>0$ and properly shifted height field $H$, where the set $\lbrace a_i \rbrace_{i\in \mathbb{N}}$  forms a Pfaffian point process 
\begin{equation}
\label{pfaffian_pp}
\mathbb{E}_{K}\left[\prod_{i=1}^\infty \frac{1}{[1+z e^{t^{1/3} a_i}]^{\chi}} \right]=\mathrm{Pf}\left[J-\sigma_{t,z} K \right]
\end{equation}
with the $2\times 2$ kernels $K$ and $J$, the generalized Fermi factor $\sigma_{t,z}$ being defined as
\begin{equation}
\label{defdef}
J(r,r')=\bigg(\begin{array}{cc}
0 & 1 \\ 
-1 & 0
\end{array} 
\bigg)\mathds{1}_{r=r'}, \qquad K= \bigg(\begin{array}{cc}
K_{11} & K_{12} \\ 
K_{21} & K_{22}
\end{array} 
\bigg), 
\qquad \sigma_{t,z}(a)=1-\frac{1}{[1 + z e^{ t^{1/3}a}]^{\chi}}
\end{equation}
$K$ should be anti-symmetric, therefore $K_{21}(r,r')=-K_{12}(r',r)$. The expectation value on the l.h.s of \eqref{pfaffian_rep} is taken over the realization of the KPZ white noise and the one on the r.h.s of \eqref{pfaffian_rep} is taken over the Pfaffian point process.
\begin{remark}
The coefficient $\alpha$ accounts for an eventual shift in the solution of the KPZ equation to center its distribution around 0.
\end{remark}
\begin{remark}
For the solved cases of the KPZ equation which fulfill the Pfaffian representation \eqref{pfaffian_rep}, we have $\chi=1$ or $\chi=\frac{1}{2}$.
\end{remark}
Introducing the function $\varphi_{t,z}(a)=\chi \log(1+ze^{t^{1/3}a})$ (and subsequently dropping the subscript), using the identities $\mathrm{Pf}\left[J-\sigma K \right]^2=\mathrm{Det}\left[1+\sigma J K\right]$, $\log \mathrm{Det}=\mathrm{Tr}\log$ and series expanding $\log(1-x)$, we write the logarithm of \eqref{pfaffian_pp} as
\begin{equation}
\label{resubmission_linear_stat}
\log \mathbb{E}_{K}\left[\exp\left(-\sum_{i=1}^\infty \varphi(a_i)\right)\right]=-\frac{1}{2}\sum_{p=1}^{\infty} \frac{1}{p}\mathrm{Tr}\left[(e^{-\varphi}-1) JK\right]^p
\end{equation}
Expanding this series in powers of $\varphi$ leads to the cumulant expansion of the Pfaffian
\begin{equation}
-\frac{1}{2}\sum_{p=1}^{\infty} \frac{1}{p}\mathrm{Tr}\left[(e^{-\varphi}-1) JK\right]^p=\sum_{n=1}^\infty \frac{\kappa_n}{n!}
\end{equation}
where the $n$-th cumulant $\kappa_n$ is defined as $n!$ times the term of order $\varphi^n$ in this expansion. The idea to introduce the cumulant expansion at short time originates from Refs. \cite{KrajLedou2018} where it was observed that the first cumulant yields the entire large deviation function for the moment generating function \eqref{pfaffian_rep}, previously calculated 
in \cite{le2016exact,krajenbrink2017exact} for the droplet and stationary ICs in full-space,
through an involved resummation of traces arising from expanding Fredholm determinants.
Here and below, we follow this approach that we call the \textit{cumulant approximation} which states that at short time
\begin{equation}
\log \mathbb{E}_{K}\left[\exp\left(-\sum_{i=1}^\infty \varphi(a_i)\right)\right]\underset{t\ll 1}{\simeq} \kappa_1
\end{equation}

The validity of this approximation can be understood as follows. As seen from Eq. \eqref{resubmission_linear_stat}, the cumulants $\kappa_n$ are the ones of the random variable $X=\sum_{i=1}^\infty \varphi(a_i)$ where $\varphi(a)=\chi \log(1+ze^{t^{1/3}a})$ and the set $\lbrace a_i \rbrace$ forms a Pfaffian point process. In the limit $t \ll1$ many of the $a_i$'s contribute to
the sum, and by a law of large number, the fluctuations of $X$ around the mean value are subdominant. 
This is confirmed by an explicit calculation of higher order cumulants $n \geq 2$ in Ref. \cite{KrajLeDouProlhac2018} where cancellations 
 occur leaving only subdominant powers of $t$.\\

The first cumulant reads
\begin{equation}
\begin{split}
\kappa_1&=\frac{1}{2}\mathrm{Tr}(\varphi JK)\\
&= -\mathrm{Tr}(\varphi K_{12})\\
&= -\chi \int_{\mathbb{R}}\mathrm{d}a \,  \log(1+ze^{t^{1/3}a}) K_{12}(a,a)\\
\end{split}
\end{equation}
We define the density $\rho(a)=K_{12}(a,a)$ and rescale the integration variable by $t^{-1/3}$
\begin{equation}
\begin{split}
\kappa_1&=-\frac{\chi }{t^{1/3}} \int_{\mathbb{R}}\mathrm{d}a \,  \log(1+z e^{a}) \rho(at^{-1/3})
\end{split}
\end{equation}
As stated in Section \ref{resubmission_unifying}, we suppose the following properties on the asymptotic density 
\begin{itemize}
\item  $ \rho(at^{-1/3})\simeq_{t \ll 1} t^{-1/6}\rho_{\infty}(a) \theta(a\leq \Xi)$  for some finite $\Xi<\infty$ where $\theta$ is the Heaviside function.
\item $\rho_\infty$ is positive real-valued and strictly decreasing on $\left]-\infty,\Xi\right]$ and grows towards $-\infty$ as $ \rho_\infty(a)\simeq_{-a\gg1}\beta_1  [-a]^{\gamma_1}$ for some $\beta_1 >0$ and $\gamma_1>0$.
\item $\rho_\infty$ vanishes algebraically at the right edge $\Xi$ as $\rho_{\infty}(a)\simeq_{ a\to \Xi} (\Xi-a)^{\nu}$ for some $\nu>0$.
\item The extension of $\rho_\infty$ on the interval $]\Xi,+\infty[$ is purely imaginary-valued and grows toward $+\infty$ as $\rho_\infty(a)\simeq_{a \gg 1} \beta_2 [-a]^{\gamma_2}$ for some $\beta_2>0$ and $\gamma_2>0$. It requires $\gamma_2$ to be half-integer as discussed below.
\end{itemize}
\begin{remark}
The reason for the extension of $\rho_\infty$ to be purely imaginary valued above $\Xi$ comes from its derivation from the off-diagonal kernel element $K_{12}$. In the cases studied, $K_{12}$ is be defined through a contour integral in the complex plane and $\rho_\infty(a)$ will be given by the saddle point of the integrand at short time. The threshold $\Xi$ will be defined by the frontier where $\rho_\infty(a)$ turns from being real strictly positive to being purely imaginary. In our cases of interest, the fact that $\rho_\infty(a)$ becomes purely imaginary corresponds to an exponential decay for the kernel $K_{12}(at^{-1/3},at^{-1/3})$ for $a\geq \Xi$, which we can approximate by a $\theta$ function for the density at short time.
\end{remark}
\subsection{Large deviation of the moment generating function}
\label{resubmission_def_Psi}
Making use of the properties of $\rho_\infty$ stated in Section \ref{sec:introcum}, we are now able to introduce the large deviation expression $\kappa_1=-\frac{\Psi(z)}{\sqrt{t}}$ with $\Psi$ defined as 
\begin{equation}
\label{large_dev_psi2}
\Psi(z)=\chi \int_{-\infty}^\Xi \mathrm{d}a \,  \log(1+z e^{a}) \rho_\infty(a)
\end{equation}
Defining the integrated density $f(y)=\chi \int_{-\log y}^{\Xi}\mathrm{d}v\, \rho_\infty(v)$ and the strictly positive variable $\zeta=e^{-\Xi}$, we rewrite $\Psi$ using an integration by part and a change of variable $y=e^{-a}$
\begin{equation}
\label{large_dev_psi}
\Psi(z)= \int_{\zeta}^{+\infty}\mathrm{d}y f(y)\frac{z}{y}\frac{1}{y+z}
\end{equation}
\begin{comment}
Note that $\zeta$ is strictly positive.
\end{comment}
To summarize, the \textit{cumulant approximation} allows to introduce the Large Deviation Principle for the moment generating function \eqref{pfaffian_rep} for $z \geq 0$
\begin{equation}
\label{large_dev_principle}
\log \mathbb{E}_{\mathrm{KPZ}}\left[ \exp\left( -\frac{z\alpha}{\sqrt{t}} e^{H}\right)\right]\underset{t\ll 1}{\simeq}-\frac{\Psi(z)}{\sqrt{t}}
\end{equation}
\begin{remark}
The moment generating function, i.e. the l.h.s. of \eqref{large_dev_principle} is infinite for
$z<0$, hence \eqref{large_dev_principle} holds only for $z \geq 0$. The function
$\Psi(z)$ however is also defined for some negative values of $z$ (i.e. in the interval $z\in I=[-\zeta,+\infty[$
see below). Accordingly \eqref{large_dev_principle} also holds as a power series in $z$ around
$z=0$ (and allows to extract the moments $\mathbb{E}_{\mathrm{KPZ}}[e^{n H}]$ see below).
\end{remark}
\subsection{Analytic properties of $f$ and $\Psi$}
\label{sec:analytic_properties}
From the definitions of $f$ and $\Psi$ in Eq. \eqref{large_dev_psi2}, one deduces some analytic properties
\begin{itemize}
\item $f(\zeta)=f'(\zeta)=\Psi(0)=0$.
\item $f$ is purely imaginary-valued on the interval $]0,\zeta]$.
\item $\Psi$ is defined on the interval $I=[ - \zeta,+\infty[$, is strictly increasing and strictly concave on $I$ and is infinitely differentiable on $\left] - \zeta,+\infty\right[$.
\item  $\Psi$ has a branch cut in the complex plane along the $\left]-\infty,-\zeta\right[$ axis.
\item Recalling that $\rho_\infty$ vanishes algebraically as 
$\rho_{\infty}(a)\simeq_{ a\to \Xi} (\Xi-a)^{\nu}$ and defining $n$ the least integer greater than $\nu$, then for all $k\leq n$, $\Psi^{(k)}(-\zeta)$ are finite, and for all $k>n$, $\Psi^{(k)}(-\zeta)$ are infinite. The reason for this is that
\begin{equation}
\Psi^{(k)}(-\zeta)=\chi (-1)^{k-1}(k-1)!  \int_{-\infty}^\Xi \mathrm{d}a\, \frac{e^{ka}}{(1-e^{a-\Xi})^k}\rho_\infty(a)
\end{equation}
Expanding the integrand near the right edge as $a=\Xi-\varepsilon$, we obtain
\begin{equation}
\frac{e^{ka}}{(1-e^{a-\Xi})^k}\rho_\infty(a)\underset{a=\Xi-\varepsilon}{\simeq} e^{k(\Xi-\varepsilon)}\varepsilon^{\nu-k}
\end{equation}
which is integrable if $\nu+1>k$. In particular, as $\nu$ is strictly positive,  $\Psi(-\zeta)$ and $\Psi'(-\zeta)$ are finite.
\end{itemize}
\begin{remark}
In all studied cases, we have $\nu\leq 1$, hence only the first two orders of $\Psi$ are finite, i.e. $\Psi(-\zeta)<\infty, \Psi'(-\zeta)<\infty$ and $\Psi''(-\zeta)=\infty$.
\end{remark}
\subsection{Asymptotics of $\Psi$ at $z\rightarrow +\infty$}
\label{right_asymptot_psi}
We investigate the asymptotic properties of $\Psi$ for large positive argument starting from \eqref{large_dev_psi}. As it will be discussed in Section \ref{left_tail}, these asymptotics provide the left tail of the distribution of the KPZ solution.
\begin{equation}
\Psi(z)= \int_{\zeta}^{+\infty}\mathrm{d}y \frac{f(y)}{y}\frac{1}{\frac{y}{z}+1}
\end{equation}
The denominator of the integrand is close to one for $y\ll z$ and very large for $y\gg z$ which suggests splitting the range of integration at $y=z$, giving
\begin{equation}
\Psi(z)=\int_{\zeta}^{z}\mathrm{d}y \frac{f(y)}{y}-\int_{\zeta}^{z}\mathrm{d}y \frac{f(y)}{y}\frac{1}{\frac{z}{y}+1}+\int_{z}^{+\infty}\mathrm{d}y \frac{f(y)}{y}\frac{1}{\frac{y}{z}+1}
\end{equation}
Similarly to the computation of the asymptotics of the polylogarithm function \cite{wood92} one shows that the first integral is the leading term for large argument
\begin{equation}
\label{psi_asympt}
\begin{split}
\Psi(z)\underset{z\rightarrow+\infty}{\simeq}\;  \int_{\zeta}^{z}& \mathrm{d}y \frac{f(y)}{y}
= \chi \int_{-\Xi}^{\log z}   \mathrm{d}r \int_{-\Xi}^{r} \mathrm{d}v \, \rho_\infty(-v) 
\end{split}
\end{equation}
\begin{comment}
\begin{equation}
\begin{split}
\int_{\log \zeta}^{\log z} \mathrm{d}r \, f(e^{r})&=\int_{-\Xi}^{\log z} \mathrm{d}r \, \mathcal{N}(-r)\\
&=
\end{split}
\end{equation}
\end{comment}

Recalling that $\rho_\infty$ has a polynomial growth for large negative argument $\rho_\infty(v) \simeq_{-v \gg1} \beta_1  [-v]^{\gamma_1}$ for some $\beta_1 >0$ and $\gamma_1>0$, the integral \eqref{psi_asympt} is asymptotically equal to
\begin{equation}
\label{psi_asymp_2}
\Psi(z)\underset{z\rightarrow+\infty}{\simeq} \frac{\chi \beta_1 }{(\gamma_1 +1)(\gamma_1+2)}[\log z]^{\gamma_1+2}
\end{equation}
\begin{remark}
For all observed cases, we found that $\rho_\infty$ has a square root divergence for large negative argument, $\rho_\infty(a) \simeq_{-a \gg 1} \beta_1 \sqrt{|a|}$, i.e $\gamma_1=\dfrac{1}{2}$, hence $\Psi(z)\underset{z\rightarrow+\infty}{\simeq} \dfrac{4\chi \beta_1 }{15}[\log z]^{\frac{5}{2}}$.
\end{remark}
\subsection{Analytic continuation of $\Psi$}
\label{continuation_def}
As $\Psi$ exhibits a branch cut along the interval $\left]-\infty,-\zeta\right[$, one can define its extension from the complex plane to a Riemann surface. Starting from the formulation \eqref{large_dev_psi} $\Psi(z)=\int_{\zeta}^{+\infty}\mathrm{d}y f(y)\frac{z}{y}\frac{1}{y+z}$,
\begin{comment}
The function $f$ will be real valued function on $]\zeta,+\infty[$ and imaginary valued on $]0,\zeta[$. We will assume $\zeta>0$, the reason for this is that $\mathbb{R}^+ \subseteq I$ by definition of the Laplace transform \eqref{large_dev_general}. Developing $\Psi$ in a series of $z$ and determining the radius of convergence of the Taylor series, we will be able to extend $I$ to strictly negative values.\\
\end{comment}
one uses the following expression that makes sense in distribution theory to study the jump of $\Psi$ across the branch cut $\left]-\infty,-\zeta\right[$
\begin{equation}
\label{distrib_theory}
\underset{\epsilon\rightarrow 0}{\mathrm{lim}}\, \frac{1}{y+z\pm i\epsilon}=\mathcal{P}(\frac{1}{y+z})\mp i\pi \delta(-z)
\end{equation}
and defines $\Delta$ to be the jump of $\Psi$ across the branch cut $\left]-\infty,-\zeta\right[$.
\begin{equation}
\begin{split}
\Delta(z)&=\underset{\epsilon\rightarrow 0}{\mathrm{lim}}\, [\Psi(z+i\epsilon)-\Psi(z-i\epsilon)]=2i\pi f(-z)
\end{split}
\end{equation}
We consequently define the continuation of $\Psi$ as a multi-valued function $\Psi_{\mathrm{continued}}$ defined on $[-\zeta,0[$ and consider the multi-valuation as the projection of $\Psi$, viewed as a function on a Riemann surface, onto the complex plane.
\begin{equation}
\label{continuation_def_2}
\forall z\in [-\zeta,0[, \quad\Psi_{\mathrm{continued}}(z)=\Psi(z)+2i\pi f(-z)
\end{equation}
The imaginary valuation of $f$ on $]0,\zeta]$ makes sense as $\Psi_{\mathrm{continued}}$ should be real-valued for physical reasons. The regularity of the continuation $\Psi \to \Psi_{\mathrm{continued}}$ will be controlled by the behavior of $f$ around $\zeta$.
\begin{comment}, therefore we impose another constraint on $f$ that there exists $k'\in \mathbb{N}^*$ such that for all $ i\in [0,k'], \, f^{(i)}(\zeta)=0$.
\end{comment}
\subsection{Behavior of $f$ for small positive argument }
\label{sec:f_asympt}
We investigate the function $f$ for an argument in the interval $]0,\zeta[$ and more particularly, we will characterize the possible divergence of $f$ for small positive argument which provides the right tail of the distribution of the KPZ equation as discussed in Section \ref{right_tail}.
\begin{equation}
\label{f_def}
f(y)=\chi \int_{-\log y}^{\Xi}\mathrm{d}v\, \rho_\infty(v),
\end{equation}
%\begin{comment}
%by the imaginary valuation of $\rho_\infty$ on $]\Xi,+\infty[$, $f$ will also be purely imaginary valued. Besides, 
%\end{comment}
Recalling that the extension of $\rho_\infty$ on the interval $]\Xi,+\infty[$ is purely imaginary-valued and grows toward $+\infty$ as $\rho_\infty(a)\simeq_{a \gg 1} \beta_2 [-a]^{\gamma_2}$ for some $\beta_2>0$ and $\gamma_2>0$,
then for small positive argument $y$, $f$ will asymptotically be equal to
\begin{equation}
f(y)\underset{y\rightarrow0^+}{\simeq} \frac{\chi \beta_2}{\gamma_2+1} [\log y]^{\gamma_2+1}
\end{equation}
To be purely imaginary, we also require $\gamma_2$ to be a half-integer of the form $m+\frac{1}{2}$ so that
\begin{equation}
\label{asymptotics_f}
f(y)\underset{y\rightarrow0^+}{\simeq} i (-1)^{1+m}\frac{\chi \beta_2}{\frac{3}{2}+m} [-\log y]^{\frac{3}{2}+m}
\end{equation}
\begin{remark}
As $\gamma_2>0$, the logarithmic divergence of $f$ will be at least of magnitude $\dfrac{3}{2}$.
\end{remark}
\begin{remark}
If $m$ is odd, we will define the jump of $\Psi$ to be $-\Delta$ instead of $\Delta$, accounting for a jump to the lower Riemann sheet instead of the upper Riemann sheet. Therefore, the factor $(-1)^m$ can be fully ignored.
\end{remark}
If $f$ does not exhibit a divergence for small argument, because of the branch cut in the lower boundary of the integral \eqref{f_def} along the real negative axis, further investigation will be required on the case by case basis to obtain additional properties on $f$.
\newpage
\section{Inverting the moment generating function : a general method}
\label{method22}
\begin{comment}
Throughout this section, the symbol $\sim$ means an equality in the large deviation sense, i.e.
\begin{equation}
a(t)\sim b(t) \Longleftrightarrow \lim_{t\to 0} \sqrt{t}\left[ \log(a(t))-\log(b(t))\right]=0 
\end{equation}
\end{comment}
\subsection{General framework}
\label{resubmission_general_framework}
Let $H$ be the solution of the KPZ equation such that for short times $t \ll 1$ we determined the Large Deviation Principle \eqref{large_dev_principle}, i.e. $ \log \mathbb{E}_{\mathrm{KPZ}}\left[ \exp\left( -\frac{z\alpha}{\sqrt{t}} e^{H}\right)\right]\simeq_{t\ll 1}-\frac{\Psi(z)}{\sqrt{t}}$ for $z \geq 0$. We impose the density $P(H,t)$ at short time to be of the form $\log P(H,t)\simeq_{t \ll 1} -\frac{\Phi(H)}{\sqrt{t}}$ yielding the large deviation estimate of the moment generating function for $z \geq 0$
\begin{equation}
\label{short_time_distrib}
\log \mathbb{E}_{\mathrm{KPZ}}\left[ \exp \left( - \frac{z\alpha}{\sqrt{t}}  e^{ H}   \right) \right]\underset{t\ll 1}{\simeq} \log \int_{\mathbb{R}} dH \; \mathrm{exp}\left[-\frac{1}{\sqrt{t}} \left( z\alpha e^{  H }+\Phi(H)  \right) \right]
\end{equation}
Using $1/\sqrt{t}$ as a large parameter, the integral in the r.h.s of \eqref{short_time_distrib} can be evaluated by a saddle point method. It gives for $z \geq 0$
\be
\Psi(z) = \min_{H \in \mathbb{R}} \left[z\alpha e^{  H }+\Phi(H) \right] 
\ee
and one can invert the resulting Legendre transform to obtain the large deviation rate function $\Phi$ as the solution of an optimization problem
\begin{equation}
\label{maxmax}
\Phi( H)=\underset{z\in I}{\textrm{max}}\left[  \Psi(z) - z\alpha  e^{H} \right]
\end{equation}
\begin{remark}
As $\Psi$ is strictly concave, \eqref{maxmax} has a unique solution.
\end{remark}
\begin{remark}
As $\Phi$ is the large deviation rate function for a real random variable $H$ centered around 0, we impose the two properties $\Phi(0)=0$ and $\Phi'(0)=0$.
\end{remark}
We now solve the optimization problem \eqref{maxmax} either parametrically or implicitly
\begin{itemize}
\item The parametric solution is obtained by differentiating $\eqref{maxmax}$ w.r.t to $z$ and re-injecting the optimal $z$ in the optimization equation.
\begin{equation}
\label{parametric_sol}
\begin{cases}
\alpha e^{H}= \Psi'(z)\\
\Phi(H)=\Psi(z)-z\Psi'(z)
\end{cases}
\end{equation}
\item One can further invert the relation between $H$ and $z$ by taking the total derivative w.r.t $H$ of \eqref{maxmax}. One then obtains  $z=-\frac{\Phi'(H)e^{-H}}{\alpha}$ and the implicit solution 
\begin{equation}
\label{implicit_sol}
\Phi ( H)- \Phi '( H)=\Psi(-\frac{e^{- H} \Phi '( H)}{\alpha})
\end{equation}
\end{itemize}
which is the result announced in Section \ref{resubmission_unifying} Eqs. \eqref{implicit_summary} and \eqref{parametric_summary1}.
\begin{remark}
The parametric solution is quite useful to plot $\Phi$ while the implicit solution is useful to derive the small argument expansion and large argument asymptotics of $\Phi$. 
\end{remark}
\begin{remark}
While the moment generating function \eqref{large_dev_principle} is defined for $z\geq 0$ a priori, we extend the solution of the optimization problem \eqref{maxmax} to $z\in I=[-\zeta,+\infty[$.
\end{remark}
\subsection{Range of solution of the optimization problem and continuation of $\Phi$}
\label{subs:range}
Starting from the parametric representation of the field $H$, $\alpha e^{H}= \Psi'(z)$, using the decrease of $\Psi'$ on $I$ from $\Psi'(-\zeta)$ to $\Psi'(+\infty)=0$, one sees that the parametric solution \eqref{parametric_sol} allows to obtain $\Phi(H)$ for $H\in \left]-\infty,\log[\frac{\Psi'(-\zeta)}{\alpha}]\right]$.\\

Furthermore, imposing the distribution of $H$ to be centered around 0, i.e. $\Phi'(0)=0$, and using the relation $z=-\frac{\Phi'(H)e^{-H}}{\alpha}$, one sees that $H=0$ corresponds to $z=0$. As a consequence, the value of $\alpha$ is determined by $\alpha=\Psi'(0)$. \\

Under this centering constraint, the critical value of $H$ below which a solution of the optimization problem \eqref{maxmax} exists is $H_c=\log\frac{\Psi'(-\zeta)}{\Psi'(0)}$. It is strictly positive as $\zeta$ is strictly positive and $\Psi'$ is strictly decreasing.
\begin{remark}
There is an ambiguity in the relation $z=-\frac{\Phi'(H)e^{-H}}{\Psi'(0)}$, where $z=0$ could correspond either to $H=0$ or $H=+\infty$. This ambiguity is lifted when considering the multi-valuation of $\Psi$ or equivalently, the multi-valuation of $H(z)=\log\frac{\Psi'(z)}{\Psi'(0)}$. Note that this does not contradict the uniqueness of $\Phi(H)$.
\end{remark}
~\\
We may wonder what are the consequences of solving the optimization problem \eqref{maxmax} only in the range $H\in \left]-\infty,H_c\right]$, and how one can obtain the remaining part of the distribution for $H \in [H_c,+\infty[$. To answer these questions, we extend the optimization problem \eqref{maxmax}, or equivalently, its solutions \eqref{parametric_sol} and \eqref{implicit_sol} by proceeding to the minimal replacement $\Psi \to \Psi_{\mathrm{continued}}$, where $\Psi_{\mathrm{continued}}$ is defined in Section \ref{continuation_def} Eq. \eqref{continuation_def_2} as
\begin{equation}
\forall z\in [-\zeta,0[, \quad\Psi_{\mathrm{continued}}(z)=\Psi(z)+2i\pi f(-z)
\end{equation} 
\begin{remark}
The interpretation of this replacement is that either we consider from the beginning $\Psi$ to be defined on a Riemann surface and then the optimization problem \eqref{maxmax} has to be considered over this surface, or we separate the resolution of this optimization problem for each valuation $\Psi$ over the real line.
\end{remark}
\begin{remark}
As $\Psi$ is the large deviation representation of the moment generating function \eqref{pfaffian_rep}, considering $\Psi$ on a Riemann surface is equivalent to considering the moment generating function on the same surface. This feature is quite unusual and it is the first time to our knowledge it does appear in the literature. We conjecture this to be related to the \textit{moment problem} in probability, see \cite{tanaka2007moment}.
\end{remark}
The continued version of the parametric solution \eqref{parametric_sol} reads
\begin{equation}
\label{parametric_sol_continued}
\begin{cases}
\Psi'(0) e^{H}= \Psi'(z)-2i\pi f'(-z)\\
\Phi(H)=\Psi(z)-z\Psi'(z)+2i\pi f(-z) +2i\pi z f'(-z)
\end{cases}
\end{equation}

The continued version of the implicit solution  \eqref{implicit_sol} reads
\begin{equation}
\label{implicit_sol_continued}
\Phi ( H)- \Phi '( H)=\Psi(-\frac{e^{- H} \Phi '( H)}{\Psi'(0)})+2i\pi f(\frac{e^{- H} \Phi '( H)}{\Psi'(0)})\\
\end{equation}
which is the result announced in Section \ref{resubmission_unifying} Eqs. \eqref{implicit_summary} and \eqref{parametric_summary2}. The regularity of this continuation will be discussed in Section \ref{regularity_cont}. The proof that \eqref{parametric_sol_continued} and \eqref{implicit_sol_continued} allow to obtain $\Phi(H)$ for $H>H_c$ is left on the case by case basis where the relation $\Psi'(0)e^H=\Psi'(z)-2i\pi f'(-z)$ has to be interpreted. A schematic representation of the parametric solution is presented in Fig. \ref{fig:branching_point}.

  \begin{figure}[t]
    \centering
    \includegraphics[trim={0 6.5cm 0 5cm},scale=0.6]{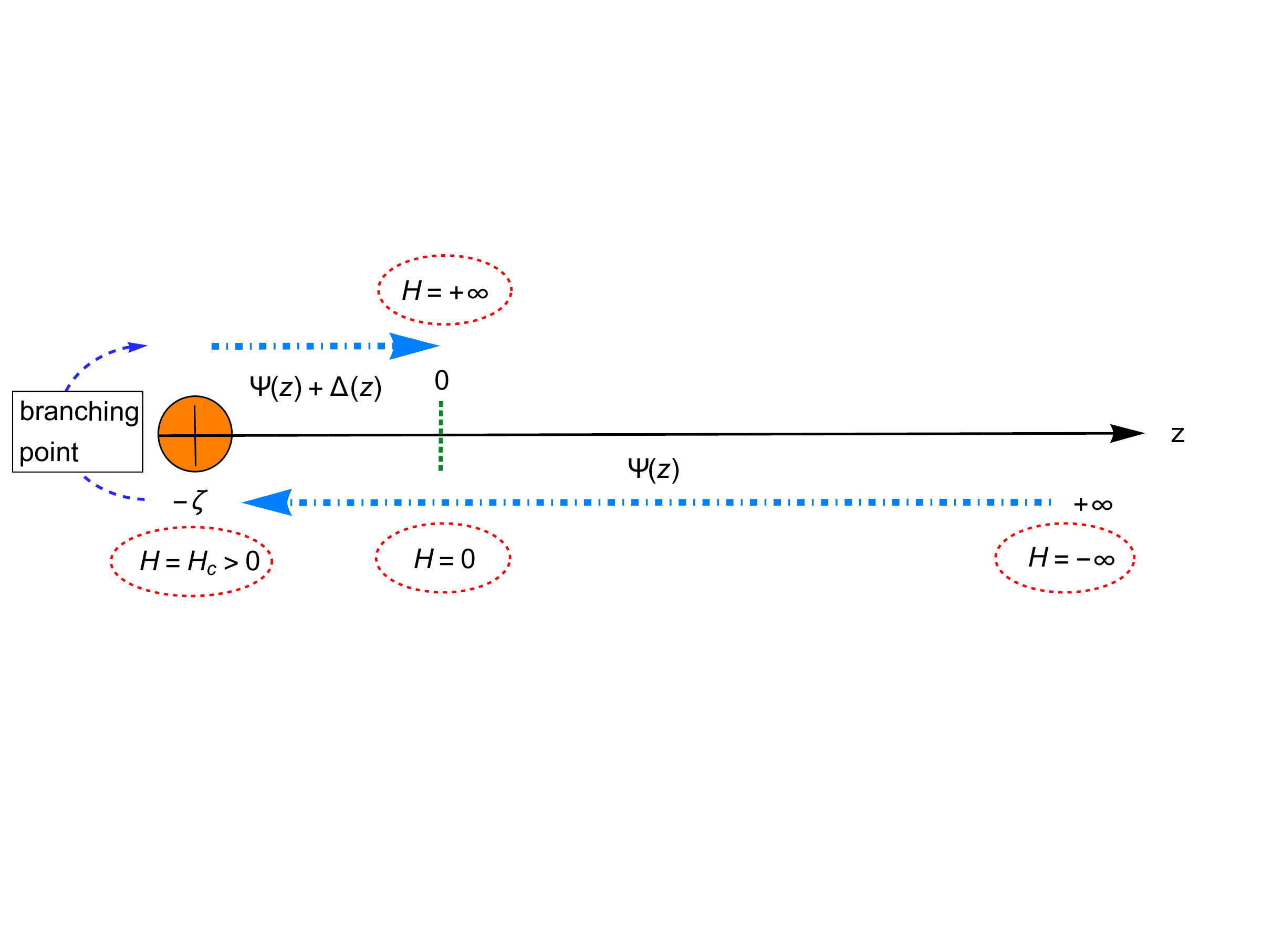}
    \caption{Schematic representation of the parametric solution of the optimization problem. For $H\leq H_c$ one uses the function $\Psi$ in the parametric representation \eqref{parametric_sol} taking the parameter $z$ to decrease from $+\infty$ to $-\zeta$. At $H=H_c$ or $z=-\zeta$, one needs to turn around the branching point and replace $\Psi$ by its continuation $\Psi+\Delta$ in \eqref{parametric_sol_continued} to determine all $H\geq H_c$ by increasing the parameter $z$ from $-\zeta$ to $0$. }
    \label{fig:branching_point}
  \end{figure}
\subsection{Expansion for small $H$ and centering}
\label{subs:centering}
The branching field $H_c$ being strictly positive, the derivatives of $\Phi$ at $H=0$ are well defined and one can expand \eqref{implicit_sol} in $H$ to obtain the derivatives of $\Phi$ which we provide up to the fifth order
\begin{equation}
\label{derivatives_phi}
\begin{split}
&\Phi^{(2)}(0)=-\frac{\Psi'(0)^2}{\Psi''(0)}, \quad \quad \Phi^{(3)}(0)=\frac{\Psi ^{(3)}(0) \Psi '(0)^3-3 \Psi '(0)^2 \Psi ''(0)^2}{\Psi ''(0)^3},\\
&\Phi^{(4)}(0)=\frac{3 \Psi ^{(3)}(0)^2-\Psi ^{(4)}(0) \Psi ''(0)}{\Psi ''(0)^5},\\
&\Phi^{(5)}(0)=\frac{-15 \Psi ^{(3)}(0)^3-\Psi ^{(5)}(0) \Psi ''(0)^2+10 \Psi ^{(4)}(0) \Psi ^{(3)}(0) \Psi ''(0)}{\Psi ''(0)^7}
\end{split}
\end{equation}
In order to center the variable $H$, assume that initially, we obtained the moment generating function as $\log \mathbb{E}_{\mathrm{KPZ}}\left[ \exp \left( - z\tilde{\alpha}  e^{ H_1}   \right) \right]$, defining $H=H_1+\log \tilde{\alpha}-\log\Psi'(0)+\frac{1}{2}\log t$, the moment generating function expressed in terms of $H$ is $ \mathbb{E}_{\mathrm{KPZ}}\left[ \exp \left( - \frac{z\Psi'(0)}{\sqrt{t}}  e^{ H}   \right) \right]$ and therefore, according to Section \ref{subs:range} and as stated in Section \ref{resubmission_unifying}, $H$ is centered around 0.
\subsection{Cumulants of the partition function}
Defining the partition function $Z=e^H$, we express the moment generating function of $Z$ in terms of its cumulant expansion
\begin{equation}
 \log \mathbb{E}\left[ \exp\left( -\frac{z\Psi'(0)  Z}{\sqrt{ t}}\right)\right]=\sum_{q=1}^\infty \frac{\mathbb{E}\left[ Z^q\right]^c }{q!}\left(-\frac{z\Psi'(0)}{\sqrt{ t}}\right)^q
\end{equation}
Using the large deviation expression \eqref{large_dev_principle} and expanding $\Psi$ in terms of its Taylor series around 0, $\Psi(z)=\sum_{q=1}^\infty \frac{z^q}{q!}\Psi^{(q)}(0)$, we express the $q$-th cumulant of $Z$ as
\begin{equation}
 \mathbb{E}\left[  Z^q\right]^c=(-1)^{q+1} \dfrac{\Psi^{(q)}(0)}{\Psi'(0)^q }t^{\frac{q-1}{2}}
\end{equation}
\subsection{Cumulants of the height field}
Similarly to the computation of the cumulants of the partition function, one can compute the cumulants of the height field, see Refs. \cite{le2016exact, krajenbrink2017exact}. Indeed, the cumulant expansion $\phi$ is defined as $\mathbb{E}_{\mathrm{KPZ}}\left[e^{\frac{pH}{\sqrt{t}}}\right]=e^{\frac{\phi(p)}{\sqrt{t}}}$ and a saddle point expansion at short time yields
\begin{equation}
\label{cumulant_opt}
\phi(p)=\max_{H\in \mathbb{R}}\left[ pH-\Phi(H) \right]
\end{equation}
By definition, $\mathbb{E}[H^q]^c=t^{\frac{q-1}{2}}\phi^{(q)}(0)$ and the $q$-th derivative of $\phi$ can be obtained by solving \eqref{cumulant_opt} as 
$ \phi^{(q+1)}(0)=\left[\dfrac{1}{\Phi''(H)} \dfrac{\mathrm{d}}{\mathrm{d}H}\right]^{q} H \mid_{H=0}$ for all $q\geq 0$. The first five non trivial cumulants are given by
\begin{equation}
\label{non_trivial_cumulants}
\begin{split}
&\phi^{(2)}(0)=\frac{1}{\Phi''(0)} ,\quad  \quad \phi^{(3)}(0)=-\frac{\Phi^{(3)}(0)}{\Phi''(0)^3} ,\quad  \quad \phi^{(4)}(0)=\frac{3 \Phi ^{(3)}(0)^2-\Phi ^{(4)}(0) \Phi ''(0)}{\Phi ''(0)^5},\\
&\phi^{(5)}(0)=\frac{-15 \Phi ^{(3)}(0)^3-\Phi ^{(5)}(0) \Phi ''(0)^2+10 \Phi ^{(4)}(0) \Phi ^{(3)}(0) \Phi ''(0)}{\Phi ''(0)^7}
\end{split}
\end{equation}
\begin{remark}
Note that the equations \eqref{derivatives_phi} match the equations (90) of \cite{krajenbrink2017exact} in the limit $\tilde{w}\to \infty$ and the equations \eqref{non_trivial_cumulants} match the equations (106) of \cite{krajenbrink2017exact}.
\end{remark}
In terms of the rate function $\Psi$, combining \eqref{derivatives_phi} and \eqref{non_trivial_cumulants} we obtain
\begin{equation}
\label{five_cumulants}
\begin{split}
&\phi^{(2)}(0)=-\frac{\Psi ''(0)}{\Psi '(0)^2}, \quad \quad\phi^{(3)}(0)=\frac{\Psi ^{(3)}(0) \Psi '(0)-3 \Psi ''(0)^2}{\Psi '(0)^4},\\
&\phi^{(4)}(0)=-\frac{20 \Psi ''(0)^3+\Psi ^{(4)}(0) \Psi '(0)^2-12 \Psi ^{(3)}(0) \Psi '(0) \Psi ''(0)}{\Psi '(0)^6},\\
&\phi^{(5)}(0)=\frac{-210 \Psi ''(0)^4+\Psi ^{(5)}(0) \Psi '(0)^3+180 \Psi ^{(3)}(0) \Psi '(0) \Psi ''(0)^2-5 \Psi '(0)^2 \left(3 \Psi ^{(3)}(0)^2+4 \Psi ^{(4)}(0) \Psi
   ''(0)\right)}{\Psi '(0)^8}
\end{split}
\end{equation}
\begin{remark}
As $\Psi''(0)<0$, by concavity of $\Psi$, we verify that the second cumulant is indeed positive which is consistent with our mathematical construction.
\end{remark}
\subsection{Left tail of $\Phi$, $H\rightarrow -\infty$}
\label{left_tail}
Starting from the implicit representation \eqref{implicit_sol}, one uses the asymptotics of $\Psi$ determined in \eqref{psi_asymp_2} to study the behavior of the factor $\Psi(-\frac{e^{- H} \Phi '( H)}{\Psi'(0)})$ for large negative $H$ knowing that $\Phi'(H)<0$ for $H<0$. As $\Psi$ exhibits logarithmic asymptotics for large positive argument, $\Phi$ exhibits a polynomial growth for large negative argument
\begin{comment}
\begin{equation}
\Psi(z)\underset{z\rightarrow+\infty}{\simeq} \frac{\chi \beta_1 }{(\gamma_1 +1)(\gamma_1+2)}[\log z]^{\gamma_1+2}
\end{equation}
One obtains the left tail of $\Phi$ as
\end{comment}
\begin{equation}
\label{eq:left_tail}
\Phi(H)\underset{H\rightarrow -\infty}{\simeq}  \frac{\chi \beta_1 }{(\gamma_1 +1)(\gamma_1+2)}|H|^{\gamma_1+2}
\end{equation}
\begin{comment}
The implicit solution \eqref{implicit_sol} then yields at the dominant order
\begin{equation}
\gamma_1|H|^{\beta_1} \simeq \Psi(\frac{\gamma_1\beta_1}{\Psi'(0)}|H|^{\beta_1-1}e^{- H} )
\end{equation}
For this expansion to match, we require $\Psi$ to have a logarithmic asymptotic for large argument $z\to +\infty$
\begin{equation}
\Psi(z)\underset{z\rightarrow +\infty}{\simeq} \gamma_1 [\log z]^{\beta_1}
\end{equation}
\end{comment}
where the different coefficients were introduced in Section \ref{right_asymptot_psi}. The asymptotics of $\Psi$ therefore provide the left tail of the KPZ solution as announced in Section \ref{resubmission_unifying}.

\begin{comment}
\begin{remark}
It additionally tells us that $z\to +\infty$ corresponds to $H\to -\infty$.
\end{remark}
Plugging this back into the previous equation we find :
\begin{equation}
\gamma_1[-H]^{\beta_1} \sim \kappa \left(\log(\frac{e^{- H} \gamma_1\beta_1[-H]^{\beta_1-1}}{\Psi'(0)})\right)^\nu
\end{equation}
\begin{equation}
\gamma_1[-H]^{\beta_1} \sim \kappa\left(-H +\log(\frac{ \gamma_1\beta_1[-H]^{\beta_1-1}}{\Psi'(0)})\right)^\nu
\end{equation}
To the first order, this yields
\begin{equation}
\gamma_1[-H]^{\beta_1} \sim \kappa[-H]^\nu
\end{equation}
By identification
\begin{equation}
\gamma_1=\kappa \quad , \quad \beta_1=\nu
\end{equation}
\end{comment}
\subsection{Right tail of $\Phi$,  $H\rightarrow +\infty$}
\label{right_tail}
Starting from the continued implicit representation \eqref{implicit_sol_continued}, one uses the asymptotics of $f$ determined in \eqref{asymptotics_f} to study the behavior of the factor $2i\pi f(\frac{e^{- H} \Phi '( H)}{\Psi'(0)})$ for large positive $H$ knowing that $\Psi(0)=0$. In the case where $f$ exhibits logarithmic asymptotics for small positive argument, as stated in Section \ref{resubmission_unifying}, $\Phi$ exhibits a polynomial growth for large positive argument
\begin{equation}
\label{eq:right_tail}
\Phi(H)\underset{H\rightarrow +\infty}{\simeq}  \frac{2\pi \chi \beta_2}{\frac{3}{2}+m} H^{\frac{3}{2}+m}
\end{equation}
where the different coefficients were introduced in Section \ref{sec:f_asympt}. If $f$ does not exhibit a logarithmic divergence for small positive argument, additional effort will have to be done in the case by case basis to determine the right tail. An example of this situation is the stationary IC in full space \cite{krajenbrink2017exact} where two continuations of $f$ had to be defined, leading a more complex Riemann surface for $\Psi$ and a singular behavior of $\Phi$ on the branching point of these continuations.
\subsection{Expansion of $\Phi$ around its continuation}
\label{regularity_cont}
Starting from the implicit \eqref{implicit_sol} and the continued implicit \eqref{implicit_sol_continued} solutions, we determine which condition ensures the regularity of the continuation. We first expand the implicit solution \eqref{implicit_sol} around $H=H_c$ and determine the left derivative expansion
\begin{comment}
\begin{equation}
\begin{cases}               
               \Phi'(H)=-z\Psi'(z)\\
               
				\Phi ( H)- \Phi '( H)=\Psi(-\frac{e^{- H} \Phi '( H)}{\Psi'(0)})\\
            \end{cases}
\end{equation}
Using these two equations and expanding the second into a series around $H=H_c$ we find the following expansion
\end{comment}
\begin{equation}
\label{set_of_derivatives}
\begin{split}
&\Phi(H_c)=\Psi(-\zeta)+\zeta\Psi'(-\zeta)\\
&\forall k\geq 1, \, \Phi^{(k)}(H_c)=\left[\frac{\Psi'(z)}{\Psi''(z)} \frac{\mathrm{d}}{\mathrm{d}z}\right]^{k-1} \left(-z\Psi'(z)\right)\mid_{z=-\zeta}
\end{split}
\end{equation}
\begin{remark}
Another way to see this relation between the derivatives is to differentiate the parametric relation $\alpha e^{H}= \Psi'(z)$ which yields $\mathrm{d}H=\frac{\Psi''(z)}{\Psi'(z)}\mathrm{d}z$.
\end{remark}
\begin{remark}
Remarkably, $\Psi$ only needs to be $\mathcal{C}^1$ for $\Phi$ to have finite derivatives at all orders at $H=H_c$. Indeed by induction, if $\Psi''(-\zeta)=\infty$ then for all $k\geq 1, \, \Phi^{(k)}(H_c)=\zeta \Psi'(-\zeta)<\infty$.
\end{remark}
\begin{remark}
Note that there is a second solution for the set of derivatives for arbitrary $z$ coming from the implicit equation \eqref{implicit_sol}. For $H=H_c$ it reads  $\Phi^{(k)}(H_c)=\zeta \Psi'(-\zeta)$ which surprisingly is identical to \eqref{set_of_derivatives}.
\end{remark}
One obtains the right derivative expansion by proceeding to the minimal replacement $\Psi \to \Psi_{\mathrm{continued}}$.
\begin{comment}
Using the other set of equation and expanding into a series :
\begin{equation}
\begin{cases}               
               \Phi'(H)=-z(\Psi'(z)+2i\pi f'(-z))\\
               
				\Phi ( H)- \Phi '( H)=\Psi(-\frac{e^{- H} \Phi '( H)}{\Psi'(0)})+2i\pi f(\frac{e^{- H} \Phi '( H)}{\Psi'(0)})\\
            \end{cases}
\end{equation}
\end{comment}
We find a sufficient condition for the continuation to be infinitely smooth  $ \lbrace \Psi''(-\zeta)=\infty$, $f(\zeta)=0$ and $f'(\zeta)=0\rbrace$. As discussed in Section \ref{sec:analytic_properties}, this sufficient condition is observed in all existing cases. If this condition is not met, the large deviation rate function $\Phi$ might encounter a singularity leading to a dynamical phase transition.
\begin{remark}
Additionally, as discussed in Section \ref{sec:analytic_properties}, $\Psi$ is infinitely differentiable on $]-\zeta,+\infty[$, therefore if the above sufficient condition is verified, $\Phi$ will be infinitely differentiable everywhere on the real line as announced in Section \ref{resubmission_unifying}.
\end{remark}
\newpage
\section{Hard wall $A=\infty$}
\label{hard_wall_large_dev}
Defining the field $H_1=h(\varepsilon,t)+\frac{t}{12}-\log \varepsilon^2$ for $\varepsilon=0^+$, we determine two new representations of the moment generating function $\mathbb{E}_{\mathrm{KPZ}}\left[ \exp\left( -z e^{H_1} \right)\right]$ starting from the results of Ref. \cite{gueudre2012directed}.
\subsection{New Fredholm Pfaffian expression for the solution to the hard wall}
\label{matrix_hard_wall}
We start from Eqs. (19,21,23) of Ref. \cite{gueudre2012directed} with the definition of the moment generating function of the Cole-Hopf solution of the KPZ equation and the string-replicated moment $Z(n_s,z)$.
\begin{enumerate}
\item The moment generating function
\bea
\label{resubmission_2_mgf}
&& \mathbb{E}_{\mathrm{KPZ}}\left[ \exp\left( -z e^{H_1} \right)\right]=
\sum_{n_s= 0}^\infty \frac{1}{n_s!} Z(n_s,z) 
\eea
\item The string-replicated moments expressed with the reduced variables $ X_{2p-1} = m_p + 2 i k_p$ and  $X_{2p} = m_p - 2 i k_p$ for $p \in [1,n_s]$
\bea
&& Z(n_s,z) = \prod_{p=1}^{n_s} \sum_{m_p \geq 1} \int_{\mathbb{R}} \frac{dk_p}{2 \pi} 
(-z)^{m_p} \frac{b_{m_p,k_p}}{4 i k_p} 
e^{-t  m_p k_p^2 + \frac{t}{12} m_p^3} \, {\rm Pf} \left[ \frac{X_i-X_j}{X_i+X_j}\right]_{2 n_s \times 2 n_s} \\
&& b_{k,m} = \prod_{q=0}^{m-1} (q^2 + 4 k^2) = \frac{2 k}{\pi} \sinh(2 \pi k) \Gamma(m+2 i k) \Gamma(m-2 i k) 
\eea
\end{enumerate}
Using the variables $X_{2p}$ and $X_{2p-1}$ one further re-expresses $Z(n_s,z)$ as
\begin{equation}
\begin{split}
Z(n_s,z) = \prod_{p=1}^{n_s} \sum_{m_p \geq 1} \int_{\mathbb{R}} &\frac{dk_p}{2 \pi} 
(-z)^{m_p} \frac{\sin(\frac{\pi}{2}(X_{2p}-X_{2p-1}))}{2  \pi } \Gamma(X_{2p})\Gamma(X_{2p-1})
\\ &e^{\frac{t}{24}[X_{2p}^3+ X_{2p+1}^3]} \, {\rm Pf} \left[ \frac{X_i-X_j}{X_i+X_j}\right]_{2 n_s \times 2 n_s} 
\end{split}
\end{equation}
We introduce the Mellin-Barnes resummation expressed in its \textit{Fermi} form along the contour $\tilde{ C} = a + i \mathbb{R}$ for some $a \in ]0,1[$ to substitute the summation over integers to an integral in the complex plane.
\bea
\sum_{m \geq 1} (-z)^m f(m) =  - \int_{\mathbb{R}} \mathrm{d}r\, \frac{z}{ z+e^{-r}}  \int_{\tilde{C}}  \frac{dw}{2 i \pi}
e^{-w r} f(w)
\eea 
Here and below we keep the definition of the reduced variables $X_{2p}$ and $X_{2p-1}$ up to the substitution $m\to w$ imposed by the Mellin-Barnes formula. We further proceed to the change of variable $(w_p,k_p)\to (X_{2p},X_{2p-1})$ and define the contour $C=\frac{a}{2} + i \mathbb{R}$ so that the string-replicated moment reads
\begin{equation}
\begin{split}
 Z(n_s,z) = &(-1)^{n_s}\prod_{p=1}^{n_s} \int_{\mathbb{R}} \mathrm{d}r_p \frac{z}{z + e^{-r_p }} 
 \int_C \frac{dX_{2p-1}}{4 i\pi} \int_{C}  \frac{dX_{2p}}{4 i\pi}
\frac{\sin(\frac{\pi}{2} (X_{2p}-X_{2p-1}))}{2\pi}  \\
&\Gamma(X_{2p-1}) \Gamma(X_{2p})e^{-\frac{r_p}{2}[X_{2p-1}+X_{2p}]  + \frac{t}{24}[  X_{2p-1}^3+X_{2p}^3]} \;  {\rm Pf} \left[ \frac{X_i-X_j}{X_i+X_j}\right]_{2 n_s \times 2 n_s} 
\end{split}
\end{equation}
We observe that the integrals are almost separable in $X_{2p-1}$ and $X_{2p}$ except for the $\sin$ function which couples them. Using the anti-symmetry of the Schur Pfaffian under exchange of $X_i$ and $X_j$
for any couple $(i,j)$, $i \neq j$, the addition formula $\sin(\frac{\pi}{2} (X_{2p}-X_{2p-1}))=\sin(\frac{\pi}{2} X_{2p})\cos(\frac{\pi}{2}X_{2p-1})-\sin(\frac{\pi}{2} X_{2p-1})\cos(\frac{\pi}{2}X_{2p})$ and the fact that $X_{2p}$ and $X_{2p-1}$ share the same integration measure as they share the same variable $r_p$, we rewrite the string-replicated moment as
\begin{equation}
\begin{split}
 Z(n_s,z) =& (-1)^{n_s}\prod_{p=1}^{n_s} \int_{\mathbb{R}} \mathrm{d}r_p \frac{z}{z + e^{-r_p }} 
 \int_C \frac{dX_{2p-1}}{4i \pi} \int_{C}  \frac{dX_{2p}}{4i \pi}
\frac{\sin(\frac{\pi}{2} X_{2p})\cos(\frac{\pi}{2}X_{2p-1})}{\pi}  \\
&\Gamma(X_{2p-1}) \Gamma(X_{2p})e^{-\frac{r_p}{2}[X_{2p-1}+X_{2p}]  + \frac{t}{24}[  X_{2p-1}^3+X_{2p}^3]} \;  {\rm Pf} \left[ \frac{X_i-X_j}{X_i+X_j}\right]_{2 n_s \times 2 n_s} 
\end{split}
\end{equation}
The integrals are now separable, hence we introduce the functions 
\begin{equation}
\begin{split}
&\phi_{2p}(X)=\frac{1}{\sqrt{\pi}}\sin(\frac{\pi}{2} X)\Gamma(X)e^{ -\frac{r_p}{2}X + t  \frac{X^3}{24} }\\
&\phi_{2p-1}(X)=\frac{1}{\sqrt{\pi}}\cos(\frac{\pi}{2}X)\Gamma(X)e^{ -\frac{r_p}{2}X + t  \frac{X^3}{24} }
\end{split}
\end{equation}
\begin{comment}
We can identically define a function
\begin{equation}
\Phi_p(X)=\frac{(-1)^{\floor{\frac{p}{2}}}}{\sqrt{\pi}}\sin(\frac{\pi}{2}X+p\frac{\pi}{2})\Gamma(X)\exp\left(- \frac{r_{\ceil{\frac{p}{2}}}}{2}X+ t  \frac{X^3}{24} \right)
\end{equation}
\end{comment}
Using a known property of Pfaffians (see De Bruijn \cite{de1955some}), we can rewrite the string-replicated moment itself as a Pfaffian
\begin{equation}
\prod_{\ell=1}^{2n_s}\int_{C}\frac{\mathrm{d}X_{\ell}}{4i\pi}\phi_\ell(X_\ell) {\rm Pf} \left[ \frac{X_i-X_j}{X_i+X_j}\right]_{2 n_s \times 2 n_s}\hspace{-0.4cm}=\mathrm{Pf}\left[\int_C \int_C \frac{\mathrm{d}v}{4i\pi}\frac{\mathrm{d}w}{4i\pi} \phi_i(v)\phi_j(w)\frac{v-w}{v+w} \right]_{2 n_s \times 2 n_s}
\end{equation}
\begin{comment}
finally 
\begin{equation}
\begin{split}
K(r_i,r_j)&=\int_C \int_C \frac{\mathrm{d}w}{4\pi}\frac{\mathrm{d}z}{4\pi} \Phi_i(w)\Phi_j(z)\frac{w-z}{w+z}\\
& =\int_C \int_C \frac{\mathrm{d}w}{4\pi}\frac{\mathrm{d}z}{4\pi}\frac{w-z}{w+z}\Gamma(w)\Gamma(z)\frac{(-1)^{\floor{\frac{i}{2}}+\floor{\frac{j}{2}}}}{\pi}\sin(\frac{\pi}{2}w+i\frac{\pi}{2})\sin(\frac{\pi}{2}z+j\frac{\pi}{2})\exp\left( -\frac{r_{\ceil{\frac{i}{2}}}}{2}w-\frac{r_{\ceil{\frac{j}{2}}}}{2}z + t  \frac{w^3+z^3}{24} \right)
\end{split}
\end{equation}
\end{comment}
Proceeding to the rescaling $w \to 2 w/t^{1/3}$ and $r \to r t^{1/3}$, the moment generating function \eqref{resubmission_2_mgf} is finally given by the result announced in Section \ref{resubmission_hard_section} Eqs. \eqref{sumPfaff} and \eqref{K_block}.
%\begin{equation} \label{sumPfaff}
% \mathbb{E}_{\mathrm{KPZ}}\left[ \exp\left( -z e^{H_1} \right)\right]=1+\sum_{n_s=1}^\infty\frac{(-1)^{n_s}}{n_s!} \prod_{p=1}^{n_s} \int_{\mathbb{R}} \mathrm{d}r_p\frac{z}{z + e^{-t^{1/3} r_p}}\mathrm{Pf}\left[ K(r_i,r_j)\right]_{ n_s \times  n_s}
%\end{equation}
%where $K$ is a $2\times 2$ block matrix with elements
%\begin{equation}
%\label{K_block}
%\begin{split}
%&K_{11}(r,r')=\int_{C_{v}}\int_{C_{w}} \frac{\mathrm{d}v\mathrm{d}w}{(2i\pi)^2 \pi t^{\frac{1}{3}}} \frac{v-w}{v+w}\Gamma(2vt^{-\frac{1}{3}})\Gamma(2wt^{-\frac{1}{3}})\cos(\pi vt^{-\frac{1}{3}})\cos(\pi wt^{-\frac{1}{3}})e^{ -rv-r'w +  \frac{v^3+w^3}{3} }\\
%&K_{22}(r,r')=\int_{C_{v}}\int_{C_{w}} \frac{\mathrm{d}v\mathrm{d}w}{(2i\pi)^2 \pi t^{\frac{1}{3}}} \frac{v-w}{v+w}\Gamma(2vt^{-\frac{1}{3}})\Gamma(2wt^{-\frac{1}{3}})\sin(\pi vt^{-\frac{1}{3}})\sin(\pi wt^{-\frac{1}{3}})e^{ -rv-r'w +  \frac{v^3+w^3}{3} }\\
%&K_{12}(r,r')=\int_{C_{v}}\int_{C_{w}} \frac{\mathrm{d}v\mathrm{d}w}{(2i\pi)^2 \pi t^{\frac{1}{3}}} \frac{v-w}{v+w}\Gamma(2vt^{-\frac{1}{3}})\Gamma(2wt^{-\frac{1}{3}})\cos(\pi vt^{-\frac{1}{3}})\sin(\pi wt^{-\frac{1}{3}})e^{ -rv-r'w +  \frac{v^3+w^3}{3} }\\
%&K_{21}(r,r')=-K_{12}(r',r) 
%\end{split}
%\end{equation}
%which is  The contour $C_v$ and $C_w$ must both pass at the right of 0 because of the $\Gamma$ functions as $C_{v,w}=\frac{1}{2}a_{v,w} + i \mathbb{R}$ for $a_{v,w}\in ]0,1[$ and they must be such that $\mathrm{Re}(v+w)>0$ for the denominators to be well defined. 
\begin{remark}
Note that in \eqref{sumPfaff} we must consider matrix kernels as made up of 
$n_s^2$ blocks, each of which has size $2 \times 2$. Considering $2^2$ blocks of size $n_s \times n_s$ instead, would change the value of its Pfaffian by a factor
$(-1)^{n_s(n_s-1)/2}$, see \cite{CLDflat}.\\
\end{remark}

Denoting $\sigma _{t,z}(r)=\frac{z}{z + e^{-t^{1/3}r}}$, and using the definition
of the Fredholm pfaffian (see e.g. Sec. 2.2. in \cite{baikbarraquandSchur} and references therein),
we obtain our main new result for the case $A=+\infty$, namely an expression of the generating function as a Fredholm Pfaffian valid for any time $t$
\begin{equation}
\label{hard_wall_pfaffian_rep}
\mathbb{E}_{\mathrm{KPZ}}\left[ \exp\left( -z e^{H_1} \right)\right]=\mathrm{Pf}\left[J-\sigma_{t,z} K \right]
\end{equation}
where the matrix kernel $K$ is given by \eqref{K_block}, and
the matrix kernel $J$ has previously been introduced in \eqref{defdef}.  
\begin{remark}[Symmetry]
We have the freedom to introduce an extra parameter $\beta$ so that we redefine the functions $\phi_{2p}\to \beta \phi_{2p}$ and $\phi_{2p-1}\to \frac{1}{\beta} \phi_{2p-1}$. This changes the diagonal elements $K_{11}\to \frac{1}{\beta^2}K_{11}$ and $K_{22} \to \beta^2 K_{22}$ and the off-diagonal elements remain unchanged. A possible consequence of this symmetry is that the physical relevant quantities are $K_{12}$ and the product $K_{11}K_{22}$. 
\end{remark}
\subsection{New finite time representation as a scalar Fredholm determinant }
We can use our Proposition \ref{our_proposition} in the Appendix to rewrite \eqref{hard_wall_pfaffian_rep} as the square root of a Fredholm determinant with a scalar valued kernel.
\begin{equation}
\mathbb{E}_{\mathrm{KPZ}}\left[ \exp\left( -z e^{H_1} \right)\right]=\sqrt{\mathrm{Det}\left[ I-\bar{K}_{t,z}\right]_{\mathbb{L}^2(\mathbb{R}^+)}}
\end{equation}
The functions $f_{\mathrm{odd}}$ and $f_{\mathrm{even}}$ defined in \eqref{def_f_even_odd} read
\begin{equation}
\begin{split}
&f_{\mathrm{odd}}(r)=\int_{C_v}\frac{\mathrm{d}v}{2i\pi^{3/2}} \Gamma(2v)\cos(\pi v)e^{-rv+t\frac{v^3}{3}}\\
&f_{\mathrm{even}}(r)=\int_{C_v}\frac{\mathrm{d}v}{2i\pi^{3/2}} \Gamma(2v)\sin(\pi v)e^{-rv+t\frac{v^3}{3}}
\end{split}
\end{equation}
and the scalar kernel $\bar{K}_{t,z}$ is given for $x,y\geq 0$, by
\begin{equation}
\label{hard_wall_scalar_kernel}
\bar{K}_{t,z}(x,y)=2\partial_x \int_{\mathbb{R}}\mathrm{d}r\, \frac{z}{z+e^{-r}}\left[f_{\mathrm{even}}(r+x)f_{\mathrm{odd}}(r+y)-f_{\mathrm{odd}}(r+x)f_{\mathrm{even}}(r+y) \right]
\end{equation}
These forms provide an alternative formula to the one obtained at finite time in \cite{gueudre2012directed}.

\subsection{Long time limit of the matrix kernel}
To study the long time limit we choose $z=e^{-s t^{1/3}}$. Then $\sigma_{t,z}(r) \to \theta(r-s)$
where $\theta$ is the Heaviside step function. Working first on the matrix kernel of section \ref{matrix_hard_wall}, we obtain 
\begin{equation}
\lim_{t \to +\infty} {\rm Prob}\left( \frac{H_1}{t^{1/3}} < s \right) =\mathrm{Pf}\left[J- P_s K^{\infty} \right]
%=\sqrt{\mathrm{Det}\left[ I-\bar{K}_{\infty,s}\right]_{\mathbb{L}^2(\mathbb{R}^+)}}
\end{equation}
where $P_s$ is the projector for $r,r' \in [s,+\infty[$. Here $K^\infty$ is given by the large time limit of $K$ as follows.
%Working first on the matrix kernel of section \ref{matrix_hard_wall}, the large time limit must be taken as follows. 
Starting from the definition of the $2\times 2$ block kernel in \eqref{K_block}, one takes the large time limit of the $\Gamma$ and trigonometric functions. 
\begin{equation}
\label{large_time_limit_kernel_hard}
\begin{split}
&K^{\infty}_{11}(r,r')=\frac{t^{1/3}}{4\pi }\int_{C_{v}}\int_{C_{w}} \frac{\mathrm{d}v\mathrm{d}w}{(2i\pi)^2  }\frac{v-w}{v+w}\frac{1}{vw}e^{ -rv-r'w +  \frac{v^3+w^3}{3} }\\
&K^{\infty}_{22}(r,r')=\frac{\pi }{4t^{1/3}}\int_{C_{v}}\int_{C_{w}} \frac{\mathrm{d}v\mathrm{d}w}{(2i\pi)^2  }\frac{v-w}{v+w}e^{ -rv-r'w +  \frac{v^3+w^3}{3} }\\
&K^{\infty}_{12}(r,r')=\frac{1}{4 }\int_{C_{v}}\int_{C_{w}} \frac{\mathrm{d}v\mathrm{d}w}{(2i\pi)^2  } \frac{v-w}{v+w}\frac{1}{v}e^{ -rv-r'w +  \frac{v^3+w^3}{3} }
\end{split}
\end{equation}
Using the above symmetry argument by taking $\beta=t^{1/6}/\sqrt{\pi}$, one can
get rid of the time prefactors in the diagonal elements of the kernel, and we
obtain that it is equivalent
to the GSE kernel $K^{\infty} \equiv K^{\rm GSE}$ as given in Lemma 2.7. of 
\cite{baikbarraquandSchur}. 
This provides a new, independent way to show the convergence of the height 
distribution to the GSE at large time.
\begin{remark}
In Ref. \cite{gueudre2012directed} an alternative formula was obtained at 
large time, involving a (scalar) kernel $K^{\rm GLD}$ defined as
\begin{equation}
K^{\rm GLD}(x,y)=K_{\rm Ai}(x,y)-\frac{1}{2}\mathrm{Ai}(x)\int_{0}^{+\infty}\mathrm{d}z\, \mathrm{Ai}(y+z)
\end{equation}
As we now discuss, it is possible to prove directly that the GSE Pfaffian has indeed the alternative form
\begin{equation}
\mathrm{Pf}[J-P_sK^\infty]=\sqrt{\mathrm{Det}(I-P_sK^{\rm GLD})}
\end{equation}
so that both results are consistent. 

\end{remark}
\subsection{Long time limit of the scalar kernel}
We now compute the large time limit of the scalar kernel $\bar{K}_{t,z}$ where $z=e^{-st^{1/3}}$. Rescaling the integration variables of $f_{\mathrm{odd}}$ and $f_{\mathrm{even}}$ by $t^{-1/3}$, the integration measure of $\bar{K}$ by $t^{1/3}$, one has for the auxiliary functions
\begin{equation}
\begin{split}
&f_{\mathrm{odd}}(r)\to_{t\gg 1}f^\infty_{\mathrm{odd}}(r)=\int_{C_v}\frac{\mathrm{d}v}{4i\pi^{3/2}}\frac{1}{v} e^{-rv+\frac{v^3}{3}}=\frac{1}{2\sqrt{\pi}}\int_0^{+\infty}\mathrm{d}z\mathrm{Ai}(r+z)\\
&t^{1/3}f_{\mathrm{even}}(r)\to_{t\gg 1}f^\infty_{\mathrm{even}}(r)=\int_{C_v}\frac{\mathrm{d}v}{4i\pi^{1/2}} e^{-rv+\frac{v^3}{3}}=\frac{\sqrt{\pi}}{2}\mathrm{Ai}(r)
\end{split}
\end{equation}
and for the scalar kernel $\bar{K}_{t,z}$
\begin{equation}
\begin{split}
&\bar{K}_{\infty,s}(x,y)\simeq 2\partial_x \int_{\mathbb{R}}\mathrm{d}r\, \frac{1}{1+e^{t^{1/3}(s-r)}}\left[f^\infty_{\mathrm{even}}(r+x)f^\infty_{\mathrm{odd}}(r+y)-f^\infty_{\mathrm{odd}}(r+x)f^\infty_{\mathrm{even}}(r+y) \right]\\
&\simeq\frac{1}{2}\partial_x \int_{s}^{+\infty}\mathrm{d}r\, \left[ \mathrm{Ai}(r+x)\int_{0}^{+\infty}\mathrm{d}z\, \mathrm{Ai}(r+y+z)-\int_{0}^{+\infty}\mathrm{d}z\, \mathrm{Ai}(r+x+z) \mathrm{Ai}(r+y)\right]\\
%&\simeq-\frac{1}{2}\int_{s}^{+\infty}\mathrm{d}r\, \left[ \mathrm{Ai}'(r+x)\int_{0}^{+\infty}\mathrm{d}z\, \mathrm{Ai}(r+y+z)+\mathrm{Ai}(r+x) \mathrm{Ai}(r+y)\right]\\
&\simeq\int_{s}^{+\infty}\mathrm{d}r\, \mathrm{Ai}(r+x) \mathrm{Ai}(r+y)-\frac{1}{2}\mathrm{Ai}(x+s)\int_{0}^{+\infty}\mathrm{d}z\, \mathrm{Ai}(y+s+z)\\
&=K_{\mathrm{Ai}}(x+s,y+s)-\frac{1}{2}\mathrm{Ai}(x+s)\int_{0}^{+\infty}\mathrm{d}z\, \mathrm{Ai}(y+s+z)
\end{split}
\end{equation}
Hence we recover
\begin{equation}
\lim_{t \to +\infty} {\rm Prob}\left( \frac{H_1}{t^{1/3}} < s \right) =\sqrt{\mathrm{Det}\left[ I-P_s K^{\mathrm{GLD}}\right]}
\end{equation}
\subsection{Short-time limit of the off-diagonal kernel}
As required from Section \ref{method11}, we now study at short time the off-diagonal element 
\begin{equation}
\label{K12_short_time}
\begin{split}
K_{12}(\frac{r}{t^{\frac{1}{3}}},\frac{r'}{t^{\frac{1}{3}}})=\int_{C_{v}}\int_{C_{w}}& \frac{\mathrm{d}v\mathrm{d}w}{(2i\pi)^2 \pi t^{\frac{2}{3}}} \frac{v-w}{v+w}\Gamma(2vt^{-\frac{1}{2}})\Gamma(2wt^{-\frac{1}{2}})\\
&\cos(\pi vt^{-\frac{1}{2}})\sin(\pi wt^{-\frac{1}{2}})e^{ -\frac{1}{\sqrt{t}}[rv+r'w -  \frac{v^3+w^3}{3}] }
\end{split}
\end{equation}
We present the result obtained by a saddle point approximation applied on \eqref{K12_short_time}. Since the calculations are quite long, we leave the details for the Appendix \ref{app:infty}. In the short time regime, to have a non trivial correlation, we require the distance between $r$ and $r'$ to be of order $\sqrt{t}$ and
we find
\begin{equation}
K_{12}(\frac{r}{t^{\frac{1}{3}}},\frac{r+\kappa \sqrt{t}}{t^{\frac{1}{3}}})\simeq\frac{1}{2\pi \kappa t^{1/6}}\left( \sin(\kappa \sqrt{-W_{-1}(-\frac{t}{4}e^{r})})-\sin(\kappa \sqrt{-W_{0}(-\frac{t}{4}e^{r})})\right)
\end{equation}
where $W_0$ and $W_{-1}$ are the two real branches of the Lambert function, see \cite{corless1996lambertw}. Taking $\kappa=0$, we obtain the density which is positive for $r+\log(\frac{t}{4})\leq -1$ and vanishes for $r+\log(\frac{t}{4})=-1$ which corresponds to evaluating the Lambert functions $W(z)$ at $z=-e^{-1}$. More details about the Lambert $W$ function can be found in Appendix \ref{app:lambert}.
\begin{equation}
\rho(rt^{-1/3})\simeq\frac{1}{2\pi t^{1/6}}\left(\sqrt{-W_{-1}(-\frac{t}{4}e^{r})}-\sqrt{-W_{0}(-\frac{t}{4}e^{r})}\right)\theta(r+\log(\frac{t}{4})\leq -1)
\end{equation}
We substitute $r+\log(\frac{t}{4})\to r$ which accounts to replace $z \to \frac{t}{4} z$ as seen from the definition of $\sigma_{t,z}$ in \eqref{defdef}. Hence, we obtain $\Xi=-1$, $\rho_\infty(a)=\frac{1}{2\pi }(\sqrt{-W_{-1}(-e^{a})}-\sqrt{-W_{0}(-e^{a})})$ and we now list useful properties of $\rho_\infty$.
\begin{enumerate}
\item Near the edge at $a=-1$, $\rho_\infty(a)$ vanishes %cancels 
as $\sqrt{-1-a}$
\item The left asymptotics is $\rho_\infty (a)\simeq_{-a \gg 1} \frac{\sqrt{-a}}{2\pi}$
\item The right asymptotics is $\rho_\infty (a)\simeq_{a \gg 1} i\frac{\sqrt{a}}{\pi}$
\end{enumerate}
\subsection{Large deviations of the moment generating function}
Taking into account the $z\to \frac{t}{4}z$ replacement, by \eqref{large_dev_principle} we have the large deviation principle
\vspace{-0.4cm}
\begin{equation}
\label{psi_inf2}
 \log \mathbb{E}_{\mathrm{KPZ}}\left[ \exp\left( -\frac{t z e^{H_1}}{4}\right)\right]\underset{t\ll1}{\simeq}-\frac{1}{2 \pi \sqrt{t} }  \int_{-\infty}^{-1}\mathrm{d}v \, \log\left( 1+z e^{ v} \right)\left(\sqrt{-W_{-1}(-e^{v})}-\sqrt{-W_{0}(-e^{v})}\right)
\end{equation}
Defining $H=H_1+\log( 2\sqrt{\pi}t^{3/2})$, we obtain our first main result for the short time LDP for $A=+\infty$ in terms of the centered field $H$ (see Section \ref{subs:centering} for the discussion about the centering of $H_1$)
\be
 \log \mathbb{E}_{\mathrm{KPZ}}\left[ \exp\left( -\frac{ z e^{H}}{8\sqrt{\pi t}}\right)\right]\underset{t\ll1}{\simeq}-\frac{\Psi(z)}{\sqrt{t}}
 \ee where $\Psi(z)$ is defined on $ [-e,+\infty[$ as
\begin{equation}
\label{psi_inf}
\Psi(z)= \frac{1}{2\pi }\int_0^{+\infty}\mathrm{d}y\left[1-\frac{1}{y}\right]  \log\left( 1+z ye^{-y} \right)\sqrt{y}
\end{equation}
To obtain \eqref{psi_inf} from \eqref{psi_inf2} we performed the change of variable $y=-W(-e^v)$, so that $ye^{-y}=e^v$ and $\mathrm{d}v=\mathrm{d}y(\frac{1}{y}-1)$. Although Eq. \eqref{psi_inf2} contains the two branches of the Lambert function, after the change of variable only one integral remains in \eqref{psi_inf}, the branch $W_0$ indeed contributes to the range  $y\in [0,1]$ and the branch $W_{-1}$ to the range $y\in [1,+\infty[$ which leads to the final range of integration $[0,+\infty[$ in \eqref{psi_inf}. The minus sign in \eqref{psi_inf2} disappears in the change of variable as the two branches $W_0$ and $W_{-1}$ have opposite monotonicity. 
\begin{remark}
There is a way to express \eqref{psi_inf} in terms of a dilogarithm, defining $y=p^2$ and integrating the logarithm by part, one obtains
\begin{equation}
\label{dilog_hard}
\Psi(z)=-\int_{-\infty}^{+\infty} \frac{\mathrm{d}p}{4\pi} \, \mathrm{Li}_2(-zp^2 e^{-p^2})
\end{equation}
\end{remark}
\begin{remark}
An integration by part on \eqref{psi_inf2} leads to the expression of $f$ as
\begin{equation}
2\pi f(y)=\frac{2}{3}\left[ -W_{0}(-\frac{1}{y})\right]^{3/2}-2\left[ -W_{0}(-\frac{1}{y})\right]^{1/2} -\frac{2}{3}\left[ -W_{-1}(-\frac{1}{y})\right]^{3/2}+2\left[ -W_{-1}(-\frac{1}{y})\right]^{1/2}
\end{equation}
\end{remark}
\begin{remark}
It is far from obvious that on the interval $\left]0, e \right]$ the function $f$ is purely imaginary or equivalently that $\rho_\infty(a)$ is purely imaginary for $a\geq -1$. This fact is indeed true and explained in \cite{corless1996lambertw}. The main argument is that for $y\leq -1$, $W_0(y)$ is conjugated to $W_{-1}(y)$ in the complex sense.
\end{remark}
~\\
The derivatives of $\Psi$ at $0$ are given by $\Psi^{(q)}(0)=\frac{(-1)^{q+1}\Gamma(2q)}{2\sqrt{\pi} } q^{-\frac{3}{2}-q} 4^{-q}$, 
allowing to determine the cumulants of $Z=e^H$, $\mathbb{E}\left[  Z^q\right]^c=(-1)^{q+1} \frac{\Psi^{(q)}(0)}{\Psi'(0)^q }t^{\frac{q-1}{2}} $. We thus obtain the leading short time behavior of the
cumulants of the partition sum of the directed polymer with the hard wall $A=+\infty$ as
\be
\mathbb{E}\left[  Z^q\right]^c= \Gamma(2q)q^{-\frac{3}{2}-q} (4\pi t)^{\frac{q-1}{2}} \quad , \quad q \geq 1
\ee
One can check that for $q=2,3$ it exactly reproduces the results (12-13) of \cite{gueudre2012directed}.

\subsection{Large deviations of the distribution of $H$, $\Phi(H)$}
The rate function $\Phi(H)$ for the large deviations of the distribution of $H$ is given by the solution of the optimization problem \eqref{maxmax} over the Riemann surface of $\Psi$
\begin{equation}
\label{optimization_inf}
\Phi(H)=\max_{z\geq -e} \left[  -\frac{z}{8\sqrt{\pi}}e^H +\Psi(z) \right]
\end{equation}
\begin{itemize}
\item As the left asymptotics of the density is $\rho_\infty (a)\simeq_{-a \gg 1} \frac{\sqrt{-a}}{2\pi}$, by \eqref{eq:left_tail} the left tail of the distribution is $\Phi(H)\simeq_{H\rightarrow -\infty} \frac{2}{15\pi}|H|^{5/2}$.
\item Using \eqref{five_cumulants} we obtain the cumulants: the second cumulant of $H$ is $\mathbb{E}[H^2]^c=\frac{3}{2}\sqrt{\frac{\pi t}{2}}$. It agrees with (14) in \cite{gueudre2012directed}.
\item The third cumulant of $H$ is $\mathbb{E}[H^3]^c=\left(\frac{160}{27 \sqrt{3}}-\frac{27}{8}\right) \pi t$.
It agrees with (15) in \cite{gueudre2012directed}.
\item The fourth cumulant of $H$ is $\mathbb{E}[H^4]^c=\frac{5}{144} \left(567+486 \sqrt{2}-512 \sqrt{6}\right) \pi ^{3/2} t^{3/2}$
\item The fifth cumulant of $H$ is $\mathbb{E}[H^5]^c=\left(-\frac{10296145}{23328}-\frac{4725}{8 \sqrt{2}}+400 \sqrt{3}+\frac{1161216}{3125 \sqrt{5}}\right) \pi ^2 t^2$
\item The branching field above which the continuation of $\Psi$ is required is $H_c=\log\frac{\Psi'(-e)}{\Psi'(0)}\simeq 0.9795$.
\item As the right asymptotics of the density $\rho_\infty (a)\simeq_{a \gg 1} i\frac{\sqrt{a}}{\pi}$, by \eqref{eq:right_tail} the right tail of the distribution is $\Phi(H)\simeq_{H\rightarrow +\infty}  \frac{4}{3}H^{3/2}$
\item Since the density $\rho_\infty(a)$ vanishes near the edge at $a=-1$ as $\sqrt{-1-a}$, the rate function $\Phi$ is analytic. 
\end{itemize}
We verify numerically that the parametric equation $\Psi'(0)e^H=\Psi'(z)-2i\pi f'(-z)$ for $z\in \left[-e,0\right[$ allows to obtain all $H$ in the interval $\left[H_c,+\infty\right[$ so only one continuation to $\Psi$ is required to obtain the entire rate function $\Phi$.
\begin{remark}
As stated in Ref. \cite{gueudre2012directed} Eq. (32), in the $A=+\infty$ case we have the inequality
\begin{equation}
\label{bound_hard_wall}
\mathbb{E}_{\mathrm{KPZ}, \, \mathrm{full-space}}\left[ \exp(-ze^{H})\right]< \left(\mathbb{E}_{\mathrm{KPZ},\, \mathrm{half-space}}\left[ \exp(-ze^{H})\right] \right)^2 
\end{equation}
where both expectations are taken over the droplet IC. The inequality implies that the left tail of the full-space is at least twice the one of the half-space, which is consistent with the result obtained above.
\end{remark}
\section{Critical case $A=-1/2$}
\label{critical_wall_large_dev}
Recalling the definition of the field $H_1=h(0,t)+\frac{t}{12}$, it was obtained in \cite{barraquand2017stochastic} the following Pfaffian representation for $A=-\frac{1}{2}$
\begin{equation}
\label{halfspace1}
\mathbb{E}_{\mathrm{KPZ},\,  1/2\, \mathrm{space}}\left[ \exp(-\frac{z}{4}e^{H_1})\right]=\mathbb{E}_{\mathrm{GOE}}\left[ \prod_{i=1}^\infty\frac{1}{\sqrt{1+ze^{t^{1/3}a_i}}}\right]
\end{equation}
where the set $\lbrace a_i\rbrace$ forms a Pfaffian GOE point process associated to
a $2 \times 2$ matrix kernel $K$. Its off-diagonal element is defined, see Lemma 2.6 of Ref. \cite{baikbarraquandSchur}, as 
\begin{equation}
\label{GOE_offdiagonal1}
K^{\mathrm{GOE}}_{12}(r,r')=\int_{C_{v}}\int_{C_{w}} \frac{\mathrm{d}v\mathrm{d}w}{8 \pi^2 } \frac{v-w}{v+w}\frac{1}{w} e^{ -rv-r'w +  \frac{v^3+w^3}{3} }
\end{equation}
where $C_v=\varepsilon+i\mathbb{R}$ and $C_w=-\varepsilon'+i\mathbb{R}$ and $\varepsilon<\varepsilon'\in \left]0,1\right[$. Contrary to the other cases, here we have $\chi=\frac{1}{2}$. Note that \eqref{GOE_offdiagonal1} is equivalent to the formula (64) used in \cite{KrajLedou2018} taking into account a shift of the contour of $w$.
\subsection{Short time limit of the off-diagonal kernel}
As required from Section \ref{method11}, we evaluate the off-diagonal element. Upon rescaling 
$(v,w) \to (v,w) t^{-1/6}$ one obtains
\begin{equation}
K^{\mathrm{GOE}}_{12}(rt^{-\frac{1}{3}},r't^{-\frac{1}{3}})=\int_{C_{v}}\int_{C_{w}} \frac{\mathrm{d}v\mathrm{d}w}{8 \pi^2 t^{\frac{1}{6}}}  \frac{v-w}{v+w}\frac{1}{w}  e^{ -\frac{1}{\sqrt{t}}[rv+r'w -  \frac{v^3+w^3}{3}] }
\end{equation}
We define the rate function $\varphi_r(w)=-rw+\frac{w^3}{3}$ to write the off-diagonal in a suitable form for a saddle point approximation
\begin{equation}
\begin{split}
K^{\mathrm{GOE}}_{12}(rt^{-\frac{1}{3}},r't^{-\frac{1}{3}})=\int_{C_{v}}\int_{C_{w}} \frac{\mathrm{d}v\mathrm{d}w}{8 \pi^2 t^{\frac{1}{6}}}  \frac{v-w}{v+w}\frac{1}{w}  \exp\left(\frac{1}{\sqrt{t}} [\varphi_r(v)+\varphi_{r'}(w) ]\right)
\end{split} 
\end{equation}
The saddle points are solution of $\varphi_r'(w)=0$, i.e. $w^2=r $ which yields two solutions 
\begin{equation}
w^{(c)}=\pm i \sqrt{-r}, \quad \varphi(w^{(c)})=-\frac{2}{3}w^{(c)3}, \quad \varphi''(w^{(c)})=2w^{(c)}
\end{equation}
For the saddle points to belong to the contours $C_v$ and $C_w$, we require $r\in \left]-\infty,0\right]$. In the overall we have four purely imaginary saddle-point combinations indexed by $i,j=1,2$ whose expansion yield
\begin{equation}
K^{\mathrm{GOE}}_{12}(rt^{-\frac{1}{3}},r't^{-\frac{1}{3}})\simeq\frac{t^{1/3}}{8\pi  }\sum_{i,j=1}^2\frac{ v_i^{(c)}-w_j^{(c)}}{v_i^{(c)} + w_j^{(c)}}\frac{1}{w_j^{(c)}}\frac{\exp\left(\frac{1}{\sqrt{t}} [\varphi_r(v_i^{(c)})+\varphi_{r'}( w_j^{(c)}) ]\right)}{\sqrt{-v_i^{(c)}}\sqrt{-w_j^{(c)}}} 
\end{equation}
At the very end we are interested in the $r=r'$ limit, hence we write $r'=r+\kappa$ and aim at taking $\kappa=0$. As $\kappa$ is small, we expand the critical point and the value of the rate function at the critical point
\begin{equation}
w_j^{(c)}=v_j^{(c)}+\frac{\kappa}{2v_j^{(c)}} \qquad , \qquad 
\varphi_{r'}( w_j^{(c)})=\varphi_r(v_j^{(c)})-v_j^{(c)}\kappa
\end{equation}
Within the linear regime, the off-diagonal kernel reads
\begin{equation}
K_{12}^{\mathrm{GOE}}(\frac{r}{t^{\frac{1}{3}}},\frac{r+\kappa}{t^{\frac{1}{3}}})\simeq\frac{t^{1/3}}{8\pi  }\sum_{i,j=1}^2\frac{ v_i^{(c)}-v_j^{(c)}-\frac{\kappa}{2v_j^{(c)}}}{v_i^{(c)} + v_j^{(c)}+\frac{\kappa}{2v_j^{(c)}}}\frac{1}{v_j^{(c)}} \frac{ \exp\left(\frac{1}{\sqrt{t}} [\varphi_r(v_i^{(c)})+\varphi_{r}( v_j^{(c)})-v_j^{(c)}\kappa ]\right)}{\sqrt{-v_i^{(c)}}\sqrt{-v_j^{(c)}}}
\end{equation}
The leading term of this expansion is obtained for $v_i^{(c)}=-v_j^{(c)}$ as this cancels the $\varphi$ functions in the exponential and the denominator in the sum is of order $\kappa$. We additionally rescale $\kappa$ by a factor $\sqrt{t}$ to obtain 
\begin{equation}
K_{12}^{\mathrm{GOE}}(\frac{r}{t^{\frac{1}{3}}},\frac{r+\kappa \sqrt{t}}{t^{\frac{1}{3}}})\simeq\frac{1}{\pi \kappa t^{1/6}} \sin(\kappa \sqrt{-r})
\end{equation}
Taking $\kappa=0$ yields the following density which vanishes at $r=0$ and is strictly positive
\begin{equation}
\rho(rt^{-1/3})\simeq\frac{1}{\pi t^{1/6}}\sqrt{-r}\theta(r\leq 0)
\end{equation}
Hence, we obtain $\Xi=0$ and $\rho_\infty(a)=\frac{1}{\pi }\sqrt{-a}$ and we now list useful properties of $\rho_\infty$.
\begin{enumerate}
\item Near the edge at $a=0$ $\rho_\infty(a)$ vanishes as $\sqrt{-a}$
\item The left asymptotics is $\rho_\infty (a)\simeq_{-a \gg 1} \frac{\sqrt{-a}}{\pi}$
\item The right asymptotics is $\rho_\infty (a)\simeq_{a \gg 1} i\frac{\sqrt{a}}{\pi}$
\end{enumerate}
\subsection{Large deviations of the moment generating function}
By \eqref{large_dev_principle}, we obtain the Large Deviation Principle
\begin{equation}
\label{psi_inf2_A12}
 \log \mathbb{E}_{\mathrm{KPZ}}\left[ \exp\left( -\frac{z}{4} e^{H_1}\right)\right]\underset{t\ll1}{\simeq}-\frac{1}{2 \pi \sqrt{t} }  \int_{-\infty}^{0}\mathrm{d}v \, \log\left( 1+z e^{ v} \right)\sqrt{-v}
\end{equation}
Defining $H=H_1+\frac{1}{2}\log( \pi t)$, we obtain our main result for the short time LDP for $A=-1/2$ in terms of the centered field H 
\begin{equation}
 \log \mathbb{E}_{\mathrm{KPZ}}\left[ \exp\left( -\frac{ z e^{H}}{\sqrt{16 \pi t}}\right)\right]\underset{t\ll1}{\simeq}-\frac{\Psi(z)}{\sqrt{t}}
\end{equation}
where $\Psi$ is defined on $ [-1,+\infty[$ as
\begin{equation}
\label{psi_inf_A12}
\Psi(z)= \frac{1}{2\pi } \int_{-\infty}^{0}\mathrm{d}v \, \log\left( 1+z e^{ v} \right)\sqrt{-v}= -\frac{1}{\sqrt{16\pi }}\mathrm{Li}_{5/2}(-z)
\end{equation}
\begin{remark}
Integrating \eqref{psi_inf2_A12} by part leads to the expression of $f$ as $2\pi f(y)=\frac{2}{3}[\log y]^{3/2}$.
\end{remark}
~\\
The derivatives of $\Psi$ at $0$ are given by $\Psi^{(q)}(0)=\frac{(-1)^{q+1}\Gamma(q)}{\sqrt{16\pi} } q^{-\frac{3}{2}} $, allowing to determine the cumulants of $Z=e^H$, $\mathbb{E}\left[  Z^q\right]^c=(-1)^{q+1} \frac{\Psi^{(q)}(0)}{\Psi'(0)^q }t^{\frac{q-1}{2}} $ as
$\mathbb{E}\left[  Z^q\right]^c= \Gamma(q)q^{-\frac{3}{2}} (16\pi t)^{\frac{q-1}{2}}$.
\begin{remark}
There is a way to express \eqref{psi_inf_A12} in terms of a dilogarithm , defining $y=p^2$ and integrating the logarithm by part.
\begin{equation}
\label{dilog_critical}
\Psi(z)=-\int_{-\infty}^{+\infty} \frac{\mathrm{d}p}{4\pi} \, \mathrm{Li}_2(-z e^{-p^2})
\end{equation}
Note the resemblance between \eqref{dilog_hard} and \eqref{dilog_critical}. A similar structure involving a dilogarithm can also be obtained for the brownian initial condition \cite{krajenbrink2017exact}. This will be further investigated in a future work.
\end{remark}
\subsection{Large deviations of the distribution of $H$, $\Phi(H)$}
The distribution of $H$ is given by the solution of the optimization problem \eqref{maxmax} over the Riemann surface of $\Psi$
\begin{equation}
\label{optimization_half}
\Phi(H)=\max_{z\geq -1} \left[  -\frac{z}{\sqrt{16 \pi}}e^H +\Psi(z) \right]
\end{equation}
\begin{itemize}
\item As the left asymptotics of the density is $\rho_\infty (a)\simeq_{-a \gg 1} \frac{\sqrt{-a}}{\pi}$ and as $\chi=\frac{1}{2}$, by \eqref{eq:left_tail}  the left tail of the distribution is $\Phi(H)\simeq_{H\rightarrow -\infty} \frac{2}{15\pi}|H|^{5/2}$
\item The second cumulant of $H$ is $\mathbb{E}[H^2]^c=\sqrt{2\pi t}$
\item The third cumulant of $H$ is $\mathbb{E}[H^3]^c=\frac{2}{9} \left(16 \sqrt{3}-27\right) \pi t$
\item The fourth cumulant of $H$ is $\mathbb{E}[H^4]^c=\frac{8}{3} \left(18+15 \sqrt{2}-16 \sqrt{6}\right) \pi ^{3/2} t^{3/2}$
\item The fifth cumulant of $H$ is $\mathbb{E}[H^5]^c=\frac{8}{225} \left(-39625-27000 \sqrt{2}+36000 \sqrt{3}+6912 \sqrt{5}\right) \pi ^2 t^2$
\item The branching field above which the continuation of $\Psi$ is required is $H_c=\log\frac{\Psi'(-1)}{\Psi'(0)}=\log \zeta(\frac{3}{2})\simeq 0.96026$, where $\zeta$ is the Riemann zeta function.
\item As the right asymptotics of the density is $\rho_\infty (a)\simeq_{a \gg 1} i\frac{\sqrt{a}}{\pi}$, and as $\chi=\frac{1}{2}$, by \eqref{eq:right_tail} the right tail of the distribution is $\Phi(H)\simeq_{H\rightarrow +\infty}  \frac{2}{3}H^{3/2}$
\item As the density $\rho_\infty(a)$ vanishes as $\sqrt{-a}$, the rate function $\Phi$ is analytic. 
\end{itemize}
We verify numerically that the parametric equation $\Psi'(0)e^H=\Psi'(z)-2i\pi f'(-z)$ for $z\in \left[-1,0\right[$ allows to obtain all $H$ in the interval $\left[H_c,+\infty\right[$ so only one continuation to $\Psi$ is required to obtain the entire rate function $\Phi$.\\
\begin{remark}
It turns out that the function $\Psi$ in \eqref{psi_inf_A12} is exactly half of the large deviation function for the full space case in Ref. \cite{le2016exact} and it leads to
\begin{equation}
\label{half_identity}
\forall H \in \mathbb{R}, \; \Phi_{\mathrm{half-space}}(H)=\frac{1}{2}\Phi_{ \mathrm{full-space}}(H)
\end{equation}
where we recall that the half-space is for the critical value $A=-\frac{1}{2}$.
In particular, we can use all the results derived in Ref. \cite{le2016exact} to recover the cumulants, tails and critical points of $H$ and $P(H)$ obtained above. \\
%We observe that \textit{the large deviation tails for the half-space are half of the one of the full-space}, i.e. $c_{\mathrm{HS},\mathrm{left}}=\frac{2}{15\pi}$ instead of $c_{\mathrm{FS},\mathrm{left}}=\frac{4}{15\pi}$ and  $c_{\mathrm{HS},\mathrm{right}}=\frac{2}{3}$ instead of $c_{\mathrm{FS},\mathrm{right}}=\frac{4}{3}$, where $c$ designates the prefactors of the tail.

Besides the tails, one can also compute the cumulants of the height using \eqref{cumulant_opt} and  \eqref{half_identity}. We observe that $\phi_{ \mathrm{half-space}}(\dfrac{p}{2})=\dfrac{1}{2}\phi_{ \mathrm{full-space}}(p)$ meaning that we have an explicit relation between the cumulants for the half-space and the full-space problem
\begin{equation}
\overline{H(t)^q}_{\mathrm{half-space}}^c=2^{q-1}\overline{H(t)^q}_{\mathrm{full-space}}^c
\end{equation}
\end{remark}
\begin{remark}

It is important to note that the coefficient of the right tail, $\frac{2}{3} H^{3/2}$, matches precisely 
the right tail of the GOE-TW distribution $F_1(H)$, see Ref. \cite{baik2008asymptotics} Eqs. (1), (25) and (26), 
which is the {\it large time limit} of the critical case $A=-\frac{1}{2}$. Indeed as noted in the Remark $1.1$ of \cite{barraquand2017stochastic}, taking $z=e^{-st^{1/3}}$
\begin{equation}
\lim_{t\to \infty} \mathbb{P}\left(H(t)\leq s t^{1/3} \right)=F_{1}(s)
\end{equation}
\begin{comment} Furthermore, we note that it also matches, with the same coefficient,
 the right tail of the GSE-TW distribution, $F_4(??)$, as 
can be seen from the result for $A=+\infty$ in CITE. 
\end{comment} 
This strongly suggests that the right tail is
also established at short time, a fact previously noted for the full-space KPZ problem \cite{le2016exact, krajenbrink2017exact}.
\end{remark}
\section{Symmetric wall $A=0$}
\label{reflective_wall_large_dev}
Let us now study the case $A=0$ for which a solution was proposed in \cite{borodin2016directed}. Note that this solution is not rigorous, so our results will depend on its validity which we will assume here. Recalling the definition of the field $H_1=h(0,t)+\frac{t}{12}$, the following Pfaffian representation for $A=0$ was given in \cite{borodin2016directed}
\begin{equation}
\label{pff_bufetov}
 \mathbb{E}_{\mathrm{KPZ}}\left[ \exp\left( -\frac{z}{4} e^{H_1} \right)\right]=1+\sum_{n_s=1}^\infty\frac{(-1)^{n_s}}{n_s!} \prod_{p=1}^{n_s} \int_{\mathbb{R}} \mathrm{d}r_p\frac{z}{z + e^{-t^{1/3} r_p}}\mathrm{Pf}\left[ K(r_i,r_j)\right]_{ n_s \times  n_s}
\end{equation}
where $K$ is a $2\times 2$ block matrix with the following elements \cite{footnote_bufetov}
\begin{equation}
\begin{split}
&K_{11}(r,r')=\int_{C_{v}}\int_{C_{w}}  \frac{\mathrm{d}v\mathrm{d}w}{16 \pi^2 t^{\frac{1}{3}}} \frac{v-w}{v+w}\frac{\Gamma(\frac{1}{2}-vt^{-\frac{1}{3}})}{\Gamma(1-vt^{-\frac{1}{3}})}\frac{\Gamma(\frac{1}{2}-wt^{-\frac{1}{3}})}{\Gamma(1-wt^{-\frac{1}{3}})} e^{ -rv-r'w +  \frac{v^3+w^3}{3} }\\
&K_{22}(r,r')=\int_{C_{v}}\int_{C_{w}} \frac{\mathrm{d}v\mathrm{d}w}{16 \pi^2t^{\frac{1}{3}}} \frac{v-w}{v+w}\frac{\Gamma(vt^{-\frac{1}{3}})}{\Gamma(\frac{1}{2}+ vt^{-\frac{1}{3}})} \frac{\Gamma(wt^{-\frac{1}{3}})}{\Gamma(\frac{1}{2}+ wt^{-\frac{1}{3}})} e^{ -rv-r'w + \frac{v^3+w^3}{3} }\\
&K_{12}(r,r')=\int_{C_{v}}\int_{C_{w}}  \frac{\mathrm{d}v\mathrm{d}w}{16 \pi^2t^{\frac{1}{3}} } \frac{v-w}{v+w}\frac{\Gamma(\frac{1}{2}-vt^{-\frac{1}{3}})}{\Gamma(1-vt^{-\frac{1}{3}})}\frac{\Gamma(wt^{-\frac{1}{3}})}{\Gamma(\frac{1}{2}+ wt^{-\frac{1}{3}})} e^{ -rv-r'w + \frac{v^3+w^3}{3} }\\
&K_{21}(r,r')=-K_{12}(r',r)
\end{split}
\end{equation}
The contour $C_v$ and $C_w$ must both pass at the right of 0 because of the poles as $C_{v,w}=\frac{1}{2}a_{v,w} + i \mathbb{R}$ for $a_{v,w}\in ]0,1[$ and they must be such that $v+w>0$ for the denominators to be well defined. This representation allows can be written in a Fredholm Pfaffian form $ \mathbb{E}_{\mathrm{KPZ}}\left[ \exp\left( -\frac{z}{4} e^{H_1} \right)\right]=\mathrm{Pf}[J-\sigma_{t,z} K]$. As noted in \cite{borodin2016directed}, the large time limit of $K$ is the GSE kernel, as in the $A=+\infty$ case, \cite{footnote2_bufetov}.
\subsection{New finite time representation as a scalar Fredholm determinant }
We can now use our Proposition \ref{our_proposition} in the Appendix to rewrite \eqref{pff_bufetov} as the square root of a Fredholm determinant with a scalar valued kernel.
\begin{equation}
\mathbb{E}_{\mathrm{KPZ}}\left[ \exp\left( -\frac{z}{4} e^{H_1} \right)\right]=\sqrt{\mathrm{Det}\left[ I-\bar{K}_{t,z}\right]_{\mathbb{L}^2(\mathbb{R}^+)}}
\end{equation}
The functions $f_{\mathrm{odd}}$ and $f_{\mathrm{even}}$ defined in \eqref{def_f_even_odd} read
\begin{equation}
\begin{split}
&f_{\mathrm{odd}}(r)=\int_{C_v}\frac{\mathrm{d}v}{4\pi } \frac{\Gamma(\frac{1}{2}-vt^{-1/3})}{\Gamma(1-vt^{-1/3})}	e^{-rv+\frac{v^3}{3}}\\
&f_{\mathrm{even}}(r)=\int_{C_v}\frac{\mathrm{d}v}{4\pi t^{1/3}} \frac{\Gamma(vt^{-1/3})}{\Gamma(\frac{1}{2}+vt^{-1/3})} e^{-rv+\frac{v^3}{3}}
\end{split}
\end{equation}
and the scalar kernel $\bar{K}_{t,z}$ is given for $x,y\geq 0$, by
\begin{equation}
\label{scalar_A0}
\bar{K}_{t,z}(x,y)=2\partial_x \int_{\mathbb{R}}\mathrm{d}r\, \frac{z}{z+e^{-t^{1/3}r}}\left[f_{\mathrm{even}}(r+x)f_{\mathrm{odd}}(r+y)-f_{\mathrm{odd}}(r+x)f_{\mathrm{even}}(r+y) \right]
\end{equation}
\subsection{Long time limit of the scalar valued kernel}
To study the long time limit we choose $z=e^{-s t^{1/3}}$. Then $\frac{1}{1+e^{t^{1/3}(s-r)}} \to \theta(r-s)$
where $\theta$ is the Heaviside step function. Working with the scalar valued kernel \eqref{scalar_A0}, we show that
\begin{equation}
\lim_{t \to +\infty} {\rm Prob}\left( \frac{H_1}{t^{1/3}} < s \right) =\sqrt{\mathrm{Det}\left[ I-P_s K^{\mathrm{GLD}}\right]}
\end{equation}
where $P_s$ is the projector for $r,r' \in [s,+\infty[$. 
%Here $K^\infty$ is given by the large time limit of $K$ as follows.
%%Working first on the matrix kernel of section \ref{matrix_hard_wall}, the large time limit must be taken as follows. 
%Starting from the definition of the $2\times 2$ block kernel in \eqref{K_block}, one takes the large time limit of the $\Gamma$ and trigonometric functions. \\
In the large time regime, the auxiliary functions are given by
\begin{equation}
\begin{split}
&f_{\mathrm{odd}}(r)\to_{t\gg 1}f^\infty_{\mathrm{odd}}(r)=\int_{C_v}\frac{\mathrm{d}v}{4\pi^{1/2}} e^{-rv+\frac{v^3}{3}}=\frac{i\sqrt{\pi}}{2}\mathrm{Ai}(r)\\
&f_{\mathrm{even}}(r)\to_{t\gg 1}f^\infty_{\mathrm{even}}(r)=\int_{C_v}\frac{\mathrm{d}v}{4\pi^{3/2}}\frac{1}{v} e^{-rv+\frac{v^3}{3}}=\frac{i}{2\sqrt{\pi}}\int_0^{+\infty}\mathrm{d}z\mathrm{Ai}(r+z)\\
\end{split}
\end{equation}
and the scalar kernel $\bar{K}_{t,z}$ converges to $\bar{K}_{\infty,s}$ given by
\begin{equation}
\begin{split}
&\bar{K}_{\infty,s}(x,y)
%\simeq 2\partial_x \int_{\mathbb{R}}\mathrm{d}r\, \frac{1}{1+e^{t^{1/3}(s-r)}}\left[f^\infty_{\mathrm{even}}(r+x)f^\infty_{\mathrm{odd}}(r+y)-f^\infty_{\mathrm{odd}}(r+x)f^\infty_{\mathrm{even}}(r+y) \right]\\
%&\simeq-\frac{1}{2}\partial_x \int_{s}^{+\infty}\mathrm{d}r\, \left[ \mathrm{Ai}(r+x)\int_{0}^{+\infty}\mathrm{d}z\, \mathrm{Ai}(r+y+z)-\int_{0}^{+\infty}\mathrm{d}z\, \mathrm{Ai}(r+x+z) \mathrm{Ai}(r+y)\right]\\
%%&\simeq-\frac{1}{2}\int_{s}^{+\infty}\mathrm{d}r\, \left[ \mathrm{Ai}'(r+x)\int_{0}^{+\infty}\mathrm{d}z\, \mathrm{Ai}(r+y+z)+\mathrm{Ai}(r+x) \mathrm{Ai}(r+y)\right]\\
%&\simeq-\int_{s}^{+\infty}\mathrm{d}r\, \mathrm{Ai}(r+x) \mathrm{Ai}(r+y)+\frac{1}{2}\mathrm{Ai}(x+s)\int_{0}^{+\infty}\mathrm{d}z\, \mathrm{Ai}(y+s+z)\\
=K_{\mathrm{Ai}}(x+s,y+s)-\frac{1}{2}\mathrm{Ai}(x+s)\int_{0}^{+\infty}\mathrm{d}z\, \mathrm{Ai}(y+s+z)
\end{split}
\end{equation}
Hence we recover
\begin{equation}
\lim_{t \to +\infty} {\rm Prob}\left( \frac{H_1}{t^{1/3}} < s \right) =\sqrt{\mathrm{Det}\left[ I-P_s K^{\mathrm{GLD}}\right]}
\end{equation}
\subsection{Short time limit of the off-diagonal kernel}
As required from Section \ref{method11}, we evaluate the off-diagonal element
\begin{equation}
K_{12}(rt^{-\frac{1}{3}},r't^{-\frac{1}{3}})=\int_{C_{v}}\int_{C_{w}}  \frac{\mathrm{d}v\mathrm{d}w}{16 \pi^2t^{\frac{2}{3}} } \frac{v-w}{v+w}\frac{\Gamma(\frac{1}{2}-vt^{-\frac{1}{2}})}{\Gamma(1-vt^{-\frac{1}{2}})}\frac{\Gamma(wt^{-\frac{1}{2}})}{\Gamma(\frac{1}{2}+ wt^{-\frac{1}{2}})}e^{ -\frac{1}{\sqrt{t}}[rv+r'w -  \frac{v^3+w^3}{3}] }
\end{equation}
By Stirling's approximation
\begin{comment}
, the off-diagonal kernel reads
\begin{equation}
K_{12}(rt^{-\frac{1}{3}},r't^{-\frac{1}{3}})=\int_{C_{v}}\int_{C_{w}}  \frac{\mathrm{d}v\mathrm{d}w}{16 \pi^2t^{\frac{1}{6}} } \frac{v-w}{v+w}\frac{1}{\sqrt{-v}\sqrt{w}} e^{ -\frac{1}{\sqrt{t}}[rv+r'w -  \frac{v^3+w^3}{3}] }
\end{equation}
We define
\end{comment}
and defining the rate function $\varphi_r(w)=-rw+\frac{w^3}{3}$, we write the off-diagonal in a suitable form for a saddle point approximation
\begin{equation}
\begin{split}
K_{12}(rt^{-\frac{1}{3}},r't^{-\frac{1}{3}})=&\int_{C_{v}}\int_{C_{w}} \frac{\mathrm{d}v\mathrm{d}w}{16 \pi^2 t^{\frac{1}{6}}} \frac{v-w}{v+w}\frac{1}{\sqrt{-v}\sqrt{w}} \exp\left(\frac{1}{\sqrt{t}} [\varphi_r(v)+\varphi_{r'}(w) ]\right)
\end{split} 
\end{equation}
The saddle points are solution of $\varphi_r'(w)=0$, i.e. $w^2=r $ which yields two solutions 
\begin{equation}
w^{(c)}=\pm i \sqrt{-r}, \quad \varphi(w^{(c)})=-\frac{2}{3}w^{(c)3}, \quad \varphi''(w^{(c)})=2w^{(c)}
\end{equation}
For the saddle points to belong to the contour $C_w$, we have the constraint $r\in \left]-\infty,0\right]$. In the overall we have four purely imaginary saddle-point combinations indexed by $i,j=1,2$ whose expansion yield
\begin{equation}
\begin{split}
K_{12}(rt^{-\frac{1}{3}},r't^{-\frac{1}{3}})\simeq-\frac{t^{1/3}}{16\pi  }&\sum_{i,j=1}^2\frac{ v_i^{(c)}-w_j^{(c)}}{v_i^{(c)} + w_j^{(c)}}\frac{\exp\left(\frac{1}{\sqrt{t}} [\varphi_r(v_i^{(c)})+\varphi_{r'}( w_j^{(c)}) ]\right)}{|w_j^{(c)} |v_i^{(c)}} 
\end{split} 
\end{equation}
At the very end we are interested in the $r=r'$ limit, hence we write $r'=r+\kappa$ and aim at taking $\kappa=0$. As $\kappa$ is small, we expand the critical point and the value of the rate function at the critical point
\begin{equation}
w_j^{(c)}=v_j^{(c)}+\frac{\kappa}{2v_j^{(c)}} \qquad , \qquad 
\varphi_{r'}( w_j^{(c)})=\varphi_r(v_j^{(c)})-v_j^{(c)}\kappa
\end{equation}
Within the linear regime, the off-diagonal kernel reads
\begin{equation}
K_{12}(\frac{r}{t^{\frac{1}{3}}},\frac{r+\kappa}{t^{\frac{1}{3}}})\simeq-\frac{t^{1/3}}{16\pi  }\sum_{i,j=1}^2\frac{ v_i^{(c)}-v_j^{(c)}-\frac{\kappa}{2v_j^{(c)}}}{v_i^{(c)} + v_j^{(c)}+\frac{\kappa}{2v_j^{(c)}}}\frac{\exp\left(\frac{1}{\sqrt{t}} [\varphi_r(v_i^{(c)})+\varphi_{r}( v_j^{(c)})-v_j^{(c)}\kappa ]\right)}{|v_j^{(c)} |v_i^{(c)}} 
\end{equation}
%\begin{comment}
%\begin{equation}
%\begin{split}
%&K_{12}(\frac{r}{t^{\frac{1}{3}}},\frac{r+\kappa}{t^{\frac{1}{3}}})\simeq\frac{1}{16\pi t^{2/3} }\sum_{i\neq j\in [1,4]}\frac{ w_i^{(c)}-w_j^{(c)}-\frac{1}{2}\frac{w_j^{(c)}\kappa}{1+w_j^{(c)2}}}{w_i^{(c)} + w_j^{(c)}+\frac{1}{2}\frac{w_j^{(c)}\kappa}{1+w_j^{(c)2}}}\frac{1}{\sqrt{-(1+w_i^{(c)2})(1+w_j^{(c)2})}}\\
%& \exp\left(t^{-1/2}( \varphi_r(w_i^{(c)})+\varphi_{r}( w_j^{(c)}) -w_j^{(c)}\kappa)\right)
%\end{split} 
%\end{equation}
%\end{comment}
The leading term of this expansion is obtained for $v_i^{(c)}=-v_j^{(c)}$ as this cancels the $\varphi$ functions in the exponential and the denominator in the sum is of order $\kappa$. We additionally rescale $\kappa$ by a factor $\sqrt{t}$ to obtain 
\begin{comment}
\begin{equation}
\begin{split}
K_{12}(\frac{r}{t^{\frac{1}{3}}},\frac{r+\kappa \sqrt{t}}{t^{\frac{1}{3}}})\simeq\frac{1}{16\pi t^{1/6}}\sum_{i=1}^2 \frac{ 4 v_i^{(c)} }{\kappa }\frac{1}{|v_i^{(c)} |} \exp\left( v_i^{(c)}\kappa \right)
\end{split} 
\end{equation}
Noticing that $1+v_j^{(c)2}$ is positive for the branch $W_0$ and negative for the branch $W_{-1}$ and that whenever we take opposite saddle points we cross the branch cut of the square root, the off-diagonal kernel $K_{12}$ simplifies into the difference of two sine kernels
\begin{equation}
\begin{split}
K_{12}(r,r+\kappa \sqrt{t})&\simeq\frac{1}{2\pi \sqrt{t} }\sum_{i\in [1,4]}\frac{ 1}{\kappa \sqrt{-1}} \exp\left(w_i^{(c)}\kappa\right)
\end{split} 
\end{equation}
We need an extra factor $(-1)$ between the different saddle points to obtain the sine kernel. Up to redefinition of $r$, the kernel is given by
\end{comment}
\begin{equation}
K_{12}(\frac{r}{t^{\frac{1}{3}}},\frac{r+\kappa \sqrt{t}}{t^{\frac{1}{3}}})\simeq\frac{1}{2\pi \kappa t^{1/6}} \sin(\kappa \sqrt{-r})
\end{equation}
Taking $\kappa=0$ yields the following density which vanishes at $r=0$ and is strictly positive
\begin{equation}
\rho(rt^{-1/3})\simeq\frac{1}{2\pi t^{1/6}}\sqrt{-r}\theta(r\leq 0)
\end{equation}
Hence, we obtain $\Xi=0$ and $\rho_\infty(a)=\frac{1}{2\pi }\sqrt{-a}$ and we now list useful properties of $\rho_\infty$.
\begin{enumerate}
\item Near the edge at $a=0$, $\rho_\infty(a)$ vanishes as $\sqrt{-a}$
\item The left asymptotics is $\rho_\infty (a)\simeq_{-a \gg 1} \frac{\sqrt{-a}}{2\pi}$
\item The right asymptotics is $\rho_\infty (a)\simeq_{a \gg 1} i\frac{\sqrt{a}}{2\pi}$
\end{enumerate}
\subsection{Large deviations of the distribution of $H$}
By \eqref{large_dev_principle}, we obtain the Large Deviation Principle
\begin{equation}
\label{psi_inf2_A0}
 \log \mathbb{E}_{\mathrm{KPZ}}\left[ \exp\left( -\frac{z}{4} e^{H_1}\right)\right]\underset{t\ll1}{\simeq}-\frac{1}{2 \pi \sqrt{t} }  \int_{-\infty}^{0}\mathrm{d}v \, \log\left( 1+z e^{ v} \right)\sqrt{-v}
\end{equation}
Defining $H=H_1+\frac{1}{2}\log( \pi t)$, we obtain our main result for the short time LDP for $A=0$ in terms of the centered field H  
\begin{equation}
 \log \mathbb{E}_{\mathrm{KPZ}}\left[ \exp\left( -\frac{ z e^{H}}{\sqrt{16 \pi t}}\right)\right]\underset{t\ll1}{\simeq}-\frac{\Psi(z)}{\sqrt{t}}
\end{equation}
 where $\Psi$ is defined on $ [-1,+\infty[$ as
\begin{equation}
\label{psi_inf_A0}
\Psi(z)= \frac{1}{2\pi } \int_{-\infty}^{0}\mathrm{d}v \, \log\left( 1+z e^{ v} \right)\sqrt{-v}= -\frac{1}{\sqrt{16\pi }}\mathrm{Li}_{5/2}(-z)
\end{equation}
The large deviation function in \eqref{psi_inf_A0} for $A=0$ is strictly identical to the one in \eqref{psi_inf_A12} for $A=-\frac{1}{2}$, therefore the distribution of the solutions for both cases will be identical, i.e.
\begin{equation}
\Phi_{A=0}(H)=\Phi_{A=1/2}(H)=\frac{1}{2}\Phi_{\mathrm{full-space}}(H)
\end{equation}
\newpage
\section{Perturbation theory of the stochastic heat equation in half-space at short time}
\label{perturbative_theory}
\subsection{Half-space SHE and its solution}
We consider in this Section the perturbation theory of the half-space KPZ problem with droplet IC and Neumann b.c.
\begin{equation}
\forall \tau>0, \quad \partial_x h(x,\tau)\mid_{x=0}\, =A,
\end{equation}
Defining the partition function $Z=e^h$, we map the KPZ equation and its boundary condition to the stochastic heat equation (SHE)  for $x\geq 0$
\begin{equation}
\label{eq:SHE}
\partial_\tau Z(x,\tau)=\partial^2_x Z(x,\tau) +\sqrt{2}\xi(x,\tau) Z(x,\tau)
\end{equation}
along with a delta IC $Z(x,0)=\delta(x-\varepsilon)$ and Robin b.c. $ \partial_x Z(x,\tau)\mid_{x=0}=A Z(0,\tau)$ where $\varepsilon>0$ is introduced to regularize the solution and will be taken to $0^+$ at the end. Let $G$ be the heat kernel, i.e. $G(x,t)=\frac{1}{\sqrt{4\pi t}}\exp(-\frac{x^2}{4t})\theta(t)$ along with $G(x,0)=\delta(x)$, then the propagator of the half-space heat equation from $(y,\tau')$ to $(x,\tau)$, see Appendix \ref{derivation_propagator}, is 
\begin{equation}
\label{propagator_HS}
\mathcal{G}(y,x,\tau',\tau)=G(x-y,\tau-\tau')+G(x+y,\tau-\tau') - 2A\int_{0}^{+\infty} \mathrm{d}z\, e^{-Az} G(x+y+z,\tau-\tau')
\end{equation}
The propagator allows us to extract the general solution of the SHE where the multiplicative noise is seen as a source term.
\begin{equation}
\begin{split}
Z(x,\tau)&=\int_{0}^\tau \int_{0}^{+\infty}\mathrm{d}s \mathrm{d}y \, \mathcal{G}(y,x,s,\tau)\left[ \xi(y,s)Z(y,s) + \delta(y-\varepsilon)\delta(s)\right]\\
&= \mathcal{G}\star (\xi Z+\delta \delta)
\end{split}
\end{equation}
We hereby define $\star$ as the space-time convolution.
\subsection{Perturbative rescaling of the SHE at short time}
\label{perturb_rescaling}
Starting from the SHE \eqref{eq:SHE}, 
\begin{comment}
\begin{equation}
\partial_\tau Z(x,\tau)=\partial^2_x Z(x,\tau) +\sqrt{2}\xi(x,\tau) Z(x,\tau)+\delta(x-\varepsilon) \delta(\tau)
\end{equation}
\end{comment}
we choose the rescaling $\tau =t \tilde{\tau}$, $\varepsilon=\sqrt{t}\tilde{\varepsilon}$ and $x=\sqrt{t}\tilde{x}$ so that the tilde variables are of order one and the short time expansion is made in terms of powers of $t$. The equation becomes (dropping the tilde for $x$ and $\tau$)
\begin{equation}
\partial_\tau Z(x,\tau)=\partial^2_x Z(x,\tau) +t^{1/4} \sqrt{2}\xi(x,\tau) Z(x,\tau)+t^{-1/2}\delta(x-\tilde \varepsilon) \delta(\tau)
\end{equation}
In particular, the Robin b.c. is written as
\begin{equation}
\partial_{ x} Z(x,\tau)\mid_{x=0}=A\sqrt{t} Z(0,\tau)
\end{equation}
At short time, we observe that the problem only depends on the rescaled variable $\tilde{A}=A\sqrt{t}$. As $t$ tends to zero, there are only two fixed points in this regime $\tilde{A}=+\infty$ and $\tilde{A}=0$. It does imply the existence of two fixed points for the boundary conditions : the Dirichlet ($A=+\infty$) and the Neumann ($A$ finite) boundary conditions. A large deviation distribution will be associated to each of these boundary conditions. It explains why the half-space droplet KPZ cases $A=0$ and $A=-\frac{1}{2}$ have the exact same large deviation distribution at short time, and it yields the generalization of the large deviation distribution to all $A$ finite.\\

Additionally, it has been observed in Ref. \cite{Meerson_flatST} using weak noise theory (WNT) that for any deterministic initial condition which is mirror-symmetric around $x=0$, the short-time large deviation distribution of the full-space problem is twice the one of the half-space problem with the same initial condition along with the presence of a symmetric wall (i.e. $A=0$). Using our fixed point argument, we extend this result to any finite $A$. It would be interesting to observe predictions from WNT for the other fixed point, i.e. the hard wall $A=+\infty$.
\subsection{First two cumulants}
To obtain the two first cumulants of $Z(\tilde \varepsilon,\tau=1)$, we express the solution of the SHE
\begin{equation}
Z(\tilde \varepsilon,\tau=1)= \mathcal{G}\star (\sqrt{2}t^{1/4}\xi Z+t^{-1/2}\delta \delta)
\end{equation}
and define successively the two first orders of the perturbation $Z_0= t^{-1/2} \mathcal{G}\star \delta \delta$ and $Z_1=\sqrt{2} t^{1/4} \mathcal{G}\star \xi Z_0$. In the same spirit as the perturbative expansion in Ref. \cite{krajenbrink2017exact}, the leading order of the first moment is $\mathbb{E}[Z(\tilde \varepsilon,\tau=1)]=Z_0(\varepsilon,\tau)$
and the leading order of the second moment is $\mathbb{E}[Z(\tilde \varepsilon,\tau=1)^2]^c=\mathbb{E}\left[(Z_0+Z_1)^2(\tilde \varepsilon,\tau=1)\right]^c=\mathbb{E}\left[ Z_1(\tilde \varepsilon,\tau=1)^2\right]$.
\subsubsection{First moment}
\label{sec:first_moment}
\begin{comment}
The leading order of the first moment is given by the zeroth order of the expansion. As we start from a delta IC, we consider the propagator from $(\varepsilon,0)$ to $(\varepsilon,t)$
\begin{equation}
Z_0(\varepsilon,t) =  G(0,t)+G(2\varepsilon,t)  - A e^{A (A t+2\varepsilon )} \text{Erfc}\left(\frac{ A t+\varepsilon }{ \sqrt{t}}\right)
\end{equation}
with
\begin{equation}
G(x,t)=\frac{1}{\sqrt{4\pi t}}\exp(-\frac{x^2}{4t})
\end{equation}
and upon replacement of the heat kernel, we obtain 
\end{comment}
The zeroth order of the expansion $Z_0$ is given by the fundamental solution of the half-space heat equation as the initial condition is a Dirac. 
\begin{equation}
\begin{split}
Z_0(\tilde \varepsilon,\tau=1)&=\frac{1}{\sqrt{4\pi t}}+\frac{1}{\sqrt{4\pi t}}\exp(-\tilde \varepsilon^2) -\frac{ \tilde A  }{\sqrt{ t}}e^{\tilde A (\tilde A +2\tilde \varepsilon )} \text{Erfc}\left(\tilde A +\tilde{\varepsilon}\right)
\end{split}
\end{equation}
We see that the first moment is a scaling function of both variables $\tilde{A}$ and $\tilde \varepsilon$, they therefore are the relevant parameter to distinguish the following regimes :
\begin{comment}
\begin{equation}
Z_0(\varepsilon,t)=\frac{1}{\sqrt{t}}f(\tilde{A}=A\sqrt{t}\; , \; \tilde{\varepsilon}=\frac{\varepsilon}{\sqrt{t}})
\end{equation}

so $\tilde{A}$ and $\tilde{\varepsilon}$ are the relevant parameter to distinguish the different regimes. In the following we will consider only two scaling regimes
\end{comment}
\begin{enumerate}
\item $A$ finite and $\varepsilon=0$ so that $\tilde{A}\ll 1$, $\tilde{\varepsilon}=0$ and $Z_0(0,t)=\dfrac{1}{\sqrt{\pi t}}-A+\mathcal{O}(A^2) $
\begin{comment}
This matches the prediction of the replica analysis for the half-space droplet KPZ with $A=0$ and $A=-\frac{1}{2}$.
\end{comment}
\item $A=+\infty$ and $ 0 \! < \! \varepsilon \! \ll \!  1$ so that $\tilde{A}=+\infty$, $\tilde{\varepsilon}$ is finite and $Z_0(\varepsilon,t)= \dfrac{\varepsilon ^2}{ \sqrt{4\pi } t^{3/2}} +\mathcal{O}(\varepsilon^4)$
\end{enumerate}
\subsubsection{Second moment}
\begin{comment}
The leading order of the second connected moment is given by the first order of the expansion, indeed $\mathbb{E}[Z(\varepsilon,t)^2]^c=\mathbb{E}\left[(Z_0+Z_1)(\varepsilon,t)^2\right]^c=\mathbb{E}\left[ Z_1(\varepsilon,t)^2\right]$. In the rescaled system of unit we now include $A=\frac{1}{\sqrt{t}}\tilde{A}$ and $\varepsilon=\sqrt{t}\tilde{\varepsilon}$. 
\end{comment}
We calculate the second moment of $Z_1$, $\mathbb{E}\left[ Z_1(\tilde \varepsilon,\tau=1)^2\right]=2 t^{1/2}\mathbb{E}\left[(\mathcal{G}\star \xi Z_0)(\mathcal{G}\star \xi Z_0)(\tilde{\varepsilon},\tau=1) \right]$. Using the delta correlations of the white noise, the integrals is simplified as 
\begin{equation}
\begin{split}
\mathbb{E}\left[Z_1(\tilde \varepsilon,\tau=1)^2 \right] &=2t^{1/2} \int_{0}^1 \int_{0}^{+\infty}\mathrm{d}s \mathrm{d}y \, \mathcal{G}(y,\tilde \varepsilon,s,1)^2 Z_0(y,s)^2\\
\end{split}
\end{equation}
\begin{comment}
&=2t^{-1/2} \int_{0}^1 \int_{0}^{+\infty}\mathrm{d}s \mathrm{d}y \, \mathcal{G}(y,\tilde \varepsilon,s,1)^2 \mathcal{G}(\tilde \varepsilon,y,0,s)^2
where 
\begin{equation}
\begin{split}
&\mathcal{G}(\tilde \varepsilon,y,0,s)=G(y-\tilde \varepsilon,s)+G(y+\tilde \varepsilon,s) - 2\tilde{A}\int_{0}^{+\infty} \mathrm{d}z\, e^{-\tilde{A}z} G(y+z+\tilde \varepsilon,s)\\
&\mathcal{G}(y,\tilde \varepsilon,s,1)=G(y-\tilde \varepsilon,1-s)+G(y+\tilde \varepsilon,1-s) - 2\tilde{A}\int_{0}^{+\infty} \mathrm{d}z\, e^{-\tilde{A}z} G(y+z+\tilde \varepsilon,1-s)
\end{split}
\end{equation}
\end{comment}
and as in Section \ref{sec:first_moment}, we consider the two  regimes at short-time
\begin{enumerate}
\item  $\tilde{A}\ll 1$ and $\tilde{\varepsilon}=0$,
\begin{comment}
\begin{equation}
\begin{split}
&\mathcal{G}(0,y,0,s)=2 G(y,s) -\tilde A\,  \text{Erfc}(\frac{y}{2 \sqrt{s}})\\
&\mathcal{G}(y,0,s,1)=2 G(y,1-s)-\tilde A\,  \text{Erfc}(\frac{y}{2 \sqrt{1-s}})
\end{split}
\end{equation}
&=\frac{2}{\pi^2 t^{1/2}}\int_{0}^1 \int_{0}^{+\infty} \frac{\mathrm{d}s \mathrm{d}y }{s(1-s)} e^{-\frac{y^2}{2s(1-s)}}+\mathcal{O}(A)\\
\end{comment}
up to the order 0 in $A$,  $\mathbb{E}\left[Z_1(0,t)^2 \right] = \sqrt{\frac{2}{\pi t}}+\mathcal{O}(A)$.\\

We conclude that the rescaled second moment, up to order 0 in $A$, is
\begin{equation}
\frac{\mathbb{E}\left[Z(\tilde \varepsilon,\tau=1)^2 \right]}{\mathbb{E}\left[Z (\tilde \varepsilon,\tau=1)\right]^2}=\sqrt{2\pi t}+\mathcal{O}(At)
\end{equation}
\begin{comment}
This matches the prediction of the replica analysis for the half-space droplet KPZ with $A=0$ and $A=-\frac{1}{2}$.
\end{comment}
\item  $\tilde{A}=+\infty$ and $\tilde{\varepsilon}$ finite, up to the first non zero order in $\varepsilon$, $\mathbb{E}\left[Z_1(\tilde \varepsilon,\tau=1)^2 \right] 
= \frac{3\varepsilon^4}{8\sqrt{2\pi } t^{5/2}}$.\\

We conclude that the rescaled second moment, up to order 0 in $\varepsilon$, is
\begin{equation}
\frac{\mathbb{E}\left[Z(\tilde \varepsilon,\tau=1)^2 \right]}{\mathbb{E}\left[Z(\tilde \varepsilon,\tau=1) \right]^2}=\frac{3}{2}\sqrt{\frac{\pi t}{2}}
\end{equation}
\begin{comment}
This matches the prediction of the replica analysis for the half-space droplet KPZ with $A=+\infty$.
\end{comment}
\end{enumerate}
\section{Long time results}
\label{long_time_large}
We can additionally obtain some information about the tails of the KPZ solution at large time from the limiting behavior of the off-diagonal kernel $K_{12}(r,r)$ when the variable is rescaled as $r=\tilde{r}t^{2/3}$ for large $t\gg 1$. Indeed, it was showed in \cite{KrajLedou2018, JointLetter}, that at large time the \textit{cumulant approximation} is still valid to obtain the far left tail of the height distribution. Indeed, defining $z=e^{-st^{1/3}}$, we have
\begin{equation}
\log \mathbb{P}(H(t)<st^{1/3})\underset{t\gg1,s<0}{\simeq}\kappa_1
\end{equation}
where $\kappa_1=-\chi \int_{\mathbb{R}}\mathrm{d}a \,  \log(1+e^{t^{1/3}(a-s)}) K_{12}(a,a)$. At large time, we additionally approximate $\log(1+e^{t^{1/3}(a-s)})\simeq t^{1/3}\max(0,a-s)$ leading to
\begin{equation}
\log \mathbb{P}(H(t)<st^{1/3})\underset{t\gg1,s<0}{\simeq}-t^{1/3} \chi \int_{s}^{+\infty}\mathrm{d}a \,  (a-s)K_{12}(a,a)
\end{equation}
This integral is dominated by the large negative argument of $K_{12}$ in the same fashion as the short-time case. Indeed, if $K_{12}(a,a)\simeq_{-a\gg1} \beta_1 |a|^{\gamma_1}$, then
\begin{equation}
\label{large_time_large_dev}
\log \mathbb{P}(H(t)<st^{1/3})\underset{t\gg1,s<0}{\simeq}-t^{1/3} \frac{ \chi\beta_1 }{(\gamma_1+1)(\gamma_1+2)}|s|^{\gamma_1+2}
\end{equation}
In the large time large deviation regime where $s=\tilde{s}t^{2/3}$, with $\tilde{s}$ fixed, this expression takes the form  $\log \mathbb{P}(H(t)<\tilde{s}t)\underset{t\gg1,s<0}{\simeq}-t^{\frac{5+2\gamma_1}{3}} \frac{ \chi\beta_1 }{(\gamma_1+1)(\gamma_1+2)}|\tilde{s}|^{\gamma_1+2}$. In all observed cases, we have $\gamma_1=\frac{1}{2}$. For the rest of this section, we apply \eqref{large_time_large_dev} to the half-space droplet KPZ case with $A=-\frac{1}{2},0,+\infty$ and argue that for all cases, we have 
\begin{equation}
\log \mathbb{P}(H(t)<\tilde{s}t)\underset{t\gg1,\tilde s<0}{\simeq}-t^{2} \frac{ 2 }{15\pi}|\tilde{s}|^{5/2}
\end{equation}
As the cases $A=+\infty$ and $A=0$ share the same kernel at large time, i.e. the GSE kernel, we will study them together. 
\subsection{$A=+\infty$ and $A=0$}
It has been shown in the Appendix of \cite{KrajLedou2018} that the large negative behavior of the GOE kernel is the edge of Wigner's semi-circle $K^{\mathrm{GOE}}_{12}( a ,  a )\underset{-a\gg 1}{\simeq} \frac{\sqrt{\abs{a}}}{\pi}$. The off-diagonal GSE kernel \eqref{large_time_limit_kernel_hard} differs from the off-diagonal GOE kernel \eqref{GOE_offdiagonal1} by two aspects
\begin{itemize}
\item There is an extra factor $\frac{1}{2}$ in the GSE kernel.
\item In the GSE kernel, both integrals in the definition have their contour at the right of 0 while in the GOE kernel, one contour is on the left and the other one is on the right.
\end{itemize}
In the large deviation regime, the position of the contour with respect to zero is without importance as both \eqref{large_time_limit_kernel_hard} and \eqref{GOE_offdiagonal1} are evaluated through a saddle point method where the saddle points is located on the imaginary axis. This implies that
%Having already determined the asymptotics of the GOE kernel, we straightforwardly determine the one of the GSE kernel using only the factor $\frac{1}{2}$ between the two.
\begin{equation}
K^{\mathrm{GSE}}_{12}(a,a)\underset{-a\gg1}{\simeq} \frac{\sqrt{| r |}}{2\pi}\theta(-r)
\end{equation} 
Therefore, by \eqref{large_time_large_dev}, we obtain the left tail of the distribution
\begin{equation}
\label{left_tail_large_time}
\log \mathbb{P}(H(t)<\tilde st)\underset{t\gg1,s<0}{\simeq} -t^2\frac{2}{15\pi}|\tilde{s}|^{5/2}
\end{equation}
\subsection{$A=-\frac{1}{2}$}
For the critical case for droplet initial condition, we have obtained in \cite{JointLetter} the full large deviation rate function for the left tail which describes the crossover between the $5/2$ tail and the cubic tail of GOE Tracy-Widom. 
%\begin{equation}
%K^{\mathrm{GOE}}_{12}( a ,  a )\underset{-a\gg 1}{\simeq} \frac{\sqrt{\abs{a}}}{\pi}
%\end{equation}
%which by \eqref{large_time_large_dev} leads to
%\begin{equation}
%\log \mathbb{P}(H(t)<\tilde s t)\underset{t\gg1,s<0}{\simeq}  -t^2\frac{2}{15\pi} [-\tilde{s}]^{5/2}
%\end{equation}
Indeed at large time, we have the large deviation principle for $\tilde{s}<0$
\begin{equation}
-\lim_{t\to \infty} \frac{1}{t^2}\log \mathbb{P}(H(t)<\tilde st)=\Phi_{-}^{\mathrm{half-space}}(\tilde{s})=\frac{1}{2}\Phi_{-}^{\mathrm{full-space}}(\tilde{s})
\end{equation}
where $\Phi_{-}^{\mathrm{full-space}}(\tilde{s})$ was obtained explicitely in Refs. \cite{JointLetter , sasorov2017large}. The limiting behavior of $\Phi_{-}^{\mathrm{half-space}}$ are
\begin{equation}
\Phi_{-}^{\mathrm{half-space}}(\tilde{s})\simeq_{-\tilde{s}\ll 1} \frac{1}{24}|\tilde{s}|^3, \qquad 
\Phi_{-}^{\mathrm{half-space}}(\tilde{s})\simeq_{-\tilde{s}\gg 1} \frac{2}{15\pi}|\tilde{s}|^{5/2}
\end{equation}
The small $\tilde{s}$ behavior indeed matches the GOE Tracy-Widom behavior given in Table \ref{Table0}.
\subsection{Further conjecture}
In view of the coefficients obtained for the two limiting behavior of $\Phi_-^{\mathrm{half-space}}(\tilde{s})$	for $A=0$ and $A=\infty$ in Eq. \eqref{left_tail_large_time} ($H^{5/2}$ tail ) and in Table \ref{Table0}  ($H^3$ tail), it is tempting to conjecture that the full large deviation rate function for the left tail does not depend on $A$ for $A>-\frac{1}{2}$. Note that this is also consistent with the bound of Eq. \eqref{bound_hard_wall}.
\section{Conclusion}

We have developed a mathematical framework enabling to easily derive the short time properties of the large deviations of the distribution of the height of the KPZ solutions. Notably, when there exists a Pfaffian representation for the moment generating function of the KPZ solution, the short time large deviation rate function only depends on the asymptotics density of the associated Pfaffian point process. We have exploited this method to study KPZ in a half-space with three different boundary conditions $A=0,-1/2,+\infty$.\\

Furthermore, we have obtained a new Pfaffian representation of the KPZ solution for the hard wall boundary condition which is valid at all times and also allows to study easily the short time properties. We have additionally extended the \textit{cumulant approximation} that was previously introduced in \cite{KrajLedou2018, JointLetter} to obtain the left tail of the distribution at large time. This approximation allows in the short-time context to obtain the entire height distribution of the KPZ solution. It is quite remarkable that this method, i.e. the truncation to the first cumulant, also contains all the information in the short time limit.\\

On a more technical side, we have obtained a general method to transform a class of Fredholm Pfaffians with a $2\times 2$ block kernel into a Fredholm determinants with a scalar valued kernel. This extends some of the results of Refs. \cite{tracy1998correlation, ferrari2005determinantal}.\\

We hope that this effort will motivate further bridges with different theoretical methods such as Weak Noise Theory or with numerical simulations.
\section*{Acknowledgements}
We acknowledge motivating discussions with Ivan Corwin, Promit Ghosal, Baruch Meerson, Sylvain Prolhac, Gregory Schehr, Satya N. Majumdar and Li-Cheng Tsai. We particularly thank Guillaume Barraquand regarding discussions about the equivalence between Pfaffian and determinantal representations. We finally acknowledge support from ANR grant ANR-17-CE30-0027-01 RaMaTraF.

\begin{appendix}
\section{The Lambert function $W$}
\label{app:lambert}
We introduce the Lambert $W$ function \cite{corless1996lambertw} which we use throughout this paper to study the solution with droplet initial condition and hard wall boundary condition. Consider the function defined on $\mathbb{C}$ by $f(z)=ze^z$, the $W$ function is composed of all inverse branches of $f$ so that $W(z e^z)=z$. It does have two real branches, $W_0$ and $W_{-1}$ defined respectively on $[-e^{-1},+\infty[$ and $[-e^{-1},0[$. On their respective domains, $W_0$ is strictly increasing and $W_{-1}$ is strictly decreasing. By differentiation
 of $W(z) e^{W(z)}=z$, one obtains a differential equation valid for all branches of $W(z)$
\begin{equation} \label{derW} 
\frac{dW}{dz}(z)=\frac{W(z)}{z(1+W(z))}
\end{equation}
Concerning their asymptotics, $W_0$ behaves logarithmically for large argument $W_0(z)\simeq_{z\to +\infty} \ln(z)-\ln \ln (z)$ and is linear for small argument  $W_0(z) \simeq_{z \to 0} z-z^2+\mathcal{O}(z^3)$. $W_{-1}$ behaves logarithmically for small argument $W_{-1}(z)\simeq_{z \to 0^-} \ln(-z)-\ln(-\ln(-z))$. Both branches join smoothly at the point $z=-e^{-1}$ and have the value $W(-e^{-1})=-1$. These remarks are summarized on Fig. \ref{fig:Lambert}. More details on the
other branches, $W_k$ for integer $k$, can be found in \cite{corless1996lambertw}.

\begin{figure}[h!] 
\begin{center}
\includegraphics[width = 0.6\linewidth]{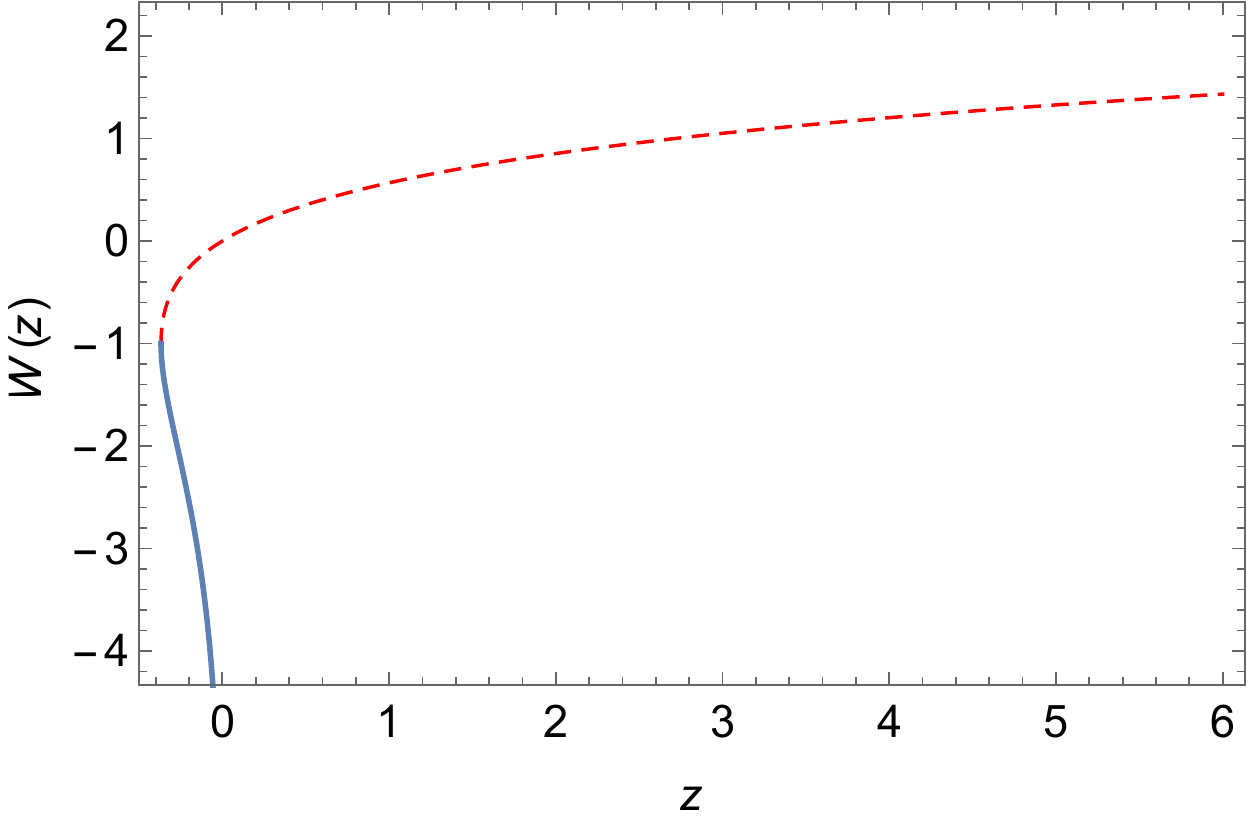}
\caption{The Lambert function $W$. The dashed red line corresponds to the branch $W_0$ whereas the blue line corresponds to the branch $W_{-1}$. }
\label{fig:Lambert}
\end{center}
\end{figure}
\section{Representation of a $2\times 2$ block Fredholm Pfaffian as a Fredholm determinant}
In this Appendix, we present a new representation of a class of $2\times 2$ block Fredholm Pfaffian with a matrix valued kernel in terms of a Fredholm determinant with a scalar valued kernel. This Appendix is an extension to arbitrary time of the arguments presented by G. Barraquand \cite{guillaume_private} for the case of the GSE kernel where the associated scalar valued kernel is the one found in \cite{gueudre2012directed}.\\

Consider a measure $\mu$ on a contour $C\in \mathbb{C}$ and another measure $\nu_z$ on the real line $\mathbb{R}$, depending on a real parameter $z$. Consider the quantity $Q(z)$ defined by
\begin{equation}
Q(z)=1+\sum_{n_s= 1}^\infty \frac{(-1)^{n_s}}{n_s!} Z(n_s,z) 
\end{equation}
and
\begin{equation}
\begin{split}
 Z(n_s,z) = \prod_{p=1}^{n_s} \int_{\mathbb{R}}\nu_z(\mathrm{d}r_p)
 &\int_C \mu(\mathrm{d}X_{2p-1}) \int_{C} \mu( \mathrm{d}X_{2p})\\
& \phi_{\mathrm{odd}}(X_{2p-1}) \phi_{\mathrm{even}}(X_{2p})e^{-r_p[X_{2p-1}+X_{2p}]  } \;  {\rm Pf} \left[ \frac{X_i-X_j}{X_i+X_j}\right]_{i,j=1}^{2n_s} 
\end{split}
\end{equation}
Then we have the two following lemma and proposition.
\begin{lemma}
$Q(z)$ is equal to a Fredholm Pfaffian with a $2\times2$ matrix valued skew-symmetric kernel
\begin{equation}
Q(z)=\mathrm{Pf}\left[ J-K\right]_{\mathbb{L}^2(\mathbb{R}, \nu_z)}
\end{equation}
For $r,r'\in \mathbb{R}$ the matrix kernel $K$ is given by
\begin{equation}
\label{app:K_block}
\begin{split}
&K_{11}(r,r')=\int_{C}\int_{C} \mu(\mathrm{d}v)\mu(\mathrm{d}w) \frac{v-w}{v+w}\phi_{\mathrm{odd}}(v)\phi_{\mathrm{odd}}(w)e^{ -rv-r'w }\\
&K_{22}(r,r')=\int_{C}\int_{C}\mu(\mathrm{d}v)\mu(\mathrm{d}w) \frac{v-w}{v+w}\phi_{\mathrm{even}}(v)\phi_{\mathrm{even}}(w)e^{ -rv-r'w}\\
&K_{12}(r,r')=\int_{C}\int_{C}\mu(\mathrm{d}v)\mu(\mathrm{d}w) \frac{v-w}{v+w}\phi_{\mathrm{odd}}(v)\phi_{\mathrm{even}}(w)e^{ -rv-r'w  }\\
&K_{21}(r,r')=\int_{C}\int_{C}\mu(\mathrm{d}v)\mu(\mathrm{d}w) \frac{v-w}{v+w}\phi_{\mathrm{even}}(v)\phi_{\mathrm{odd}}(w)e^{ -rv-r'w  }
\end{split}
\end{equation}
and the matrix kernel $J$ is defined by 
\begin{equation}
J(r,r')=\bigg(\begin{array}{cc}
0 & 1 \\ 
-1 & 0
\end{array} 
\bigg)\mathds{1}_{r=r'}
\end{equation} 
\end{lemma}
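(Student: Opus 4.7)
The plan is to derive the lemma in three stages: first transform the $2n_s$-fold integral appearing in $Z(n_s,z)$ into a single Pfaffian of a $2n_s\times 2n_s$ matrix using De Bruijn's identity, then reorganize this matrix into an $n_s\times n_s$ array of $2\times 2$ blocks indexed by the variables $r_p$, and finally recognize the resulting series as the definition of a Fredholm Pfaffian.

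The key input is the De Bruijn identity already invoked in Section~\ref{matrix_hard_wall}: for any integrable functions $\phi_1,\ldots,\phi_{2n}$ on the contour $C$,
\begin{equation*}
\prod_{\ell=1}^{2n}\int_{C}\mu(\mathrm{d}X_\ell)\,\phi_\ell(X_\ell)\,\mathrm{Pf}\!\left[\tfrac{X_i-X_j}{X_i+X_j}\right]_{i,j=1}^{2n}
=\mathrm{Pf}\!\left[\int_C\int_C\mu(\mathrm{d}v)\mu(\mathrm{d}w)\,\phi_i(v)\phi_j(w)\,\tfrac{v-w}{v+w}\right]_{i,j=1}^{2n}.
\end{equation*}
I would apply this with the choices $\phi_{2p-1}(X)=\phi_{\mathrm{odd}}(X)e^{-r_pX}$ and $\phi_{2p}(X)=\phi_{\mathrm{even}}(X)e^{-r_pX}$, which are exactly the functions appearing in $Z(n_s,z)$. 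The outcome is that, after pulling the $r_p$ integrals outside,
\begin{equation*}
Z(n_s,z)=\prod_{p=1}^{n_s}\int_{\mathbb{R}}\nu_z(\mathrm{d}r_p)\,\mathrm{Pf}\!\left[M_{ij}\right]_{i,j=1}^{2n_s},
\end{equation*}
with $M_{ij}$ the bilinear integral against $\phi_i$ and $\phi_j$.

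The second step is the block re-indexing. By construction the odd-indexed rows/columns carry $\phi_{\mathrm{odd}}$ and the even-indexed rows/columns carry $\phi_{\mathrm{even}}$, while the row pair $(2p-1,2p)$ carries the weight $e^{-r_pv}$. Comparing with \eqref{app:K_block}, the $(p,q)$ block
$\begin{psmallmatrix}M_{2p-1,2q-1}&M_{2p-1,2q}\\M_{2p,2q-1}&M_{2p,2q}\end{psmallmatrix}$
coincides with $K(r_p,r_q)$. Viewing the $2n_s\times 2n_s$ antisymmetric matrix as an $n_s\times n_s$ block matrix with these $2\times 2$ blocks, the Pfaffian becomes $\mathrm{Pf}[K(r_i,r_j)]_{i,j=1}^{n_s}$ in the convention of the excerpt (cf.\ the remark after \eqref{K_block}). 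The skew-symmetry of the big matrix is immediate from relabeling $v\leftrightarrow w$ in $M_{ij}$, which yields $K_{11}(r,r')=-K_{11}(r',r)$, $K_{22}(r,r')=-K_{22}(r',r)$, and $K_{21}(r,r')=-K_{12}(r',r)$, matching the kernel declared in \eqref{app:K_block}.

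The final step is purely formal: substituting this representation into $Q(z)=1+\sum_{n_s\geq 1}\frac{(-1)^{n_s}}{n_s!}Z(n_s,z)$ produces exactly the series expansion that defines $\mathrm{Pf}[J-K]_{\mathbb{L}^2(\mathbb{R},\nu_z)}$, with $J$ the standard antisymmetric block identity on the two-component Hilbert space (see Sec.~2.2 of~\cite{baikbarraquandSchur}); this gives the claimed identity.

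The main technical obstacle is the block reorganization and the associated sign bookkeeping. The Pfaffian of a $2N\times 2N$ antisymmetric matrix viewed as an $N\times N$ matrix of $2\times 2$ blocks differs from the Pfaffian computed after permuting rows/columns into the conventional "all odd first, then all even" ordering by a factor $(-1)^{N(N-1)/2}$, a point already flagged in the remark following \eqref{K_block}. I would therefore take care to fix one convention for the Fredholm Pfaffian from the outset (the block convention of \cite{baikbarraquandSchur}) and verify that the De Bruijn output is in that same convention, so that no stray sign is lost. Once this is pinned down, no further analytic work is needed: convergence of the Fredholm Pfaffian series is inherited from that of $Q(z)$ in the regimes of interest.
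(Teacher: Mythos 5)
Your proposal matches the paper's proof step for step: you apply De Bruijn's identity with the same choice of weights $\tilde\phi_{2p-1}=\phi_{\mathrm{odd}}e^{-r_pX}$, $\tilde\phi_{2p}=\phi_{\mathrm{even}}e^{-r_pX}$, regroup the resulting $2n_s\times 2n_s$ Pfaffian into $n_s\times n_s$ blocks of size $2\times 2$ (noting the same block-convention sign subtlety the paper flags), verify skew-symmetry, and then recognize the resummed series as Rains' definition of the Fredholm Pfaffian. This is exactly the argument given in Appendix B, so nothing needs changing.
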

\begin{proof}
Using a known property of Pfaffians (see De Bruijn \cite{de1955some}), 
\begin{equation}
\prod_{p=1}^{2n_s}\int_{C}\mu(\mathrm{d}X_{p})\, \tilde \phi_p(X_p) {\rm Pf} \left[ \frac{X_i-X_j}{X_i+X_j}\right]_{2 n_s \times 2 n_s}\hspace{-0.4cm}=\mathrm{Pf}\left[\int_C \int_C \mu(\mathrm{d}v)\mu(\mathrm{d}w)\, \tilde \phi_i(v)\tilde \phi_j(w)\frac{v-w}{v+w} \right]_{2 n_s \times 2 n_s}
\end{equation}
and specifying the functions $\tilde{\phi}$ as
\begin{equation}
\tilde{\phi}_{2p-1}(X)=\phi_{\mathrm{odd}}(X)e^{-r_p X}, \quad \tilde{\phi}_{2p}(X)=\phi_{\mathrm{even}}(X)e^{-r_p X}
\end{equation}
we rewrite $Z(n_s,z)$ as 
\begin{equation}
\label{app:sumPfaff}
Z(n_s,z)=\prod_{p=1}^{n_s}\int_{\mathbb{R}}\nu_z(\mathrm{d}r_p)\mathrm{Pf}\left[ K(r_i,r_j)\right]_{i,j=1}^{n_s}
\end{equation}
Noticing that each couple $(r_i,r_j)$ appears in $2\times 2$ block, the definition of the matrix valued kernel $K$ in \eqref{app:K_block} follows. Note that in \eqref{app:sumPfaff} we must consider matrix kernels as made up of 
$n_s^2$ blocks, each of which has size $2 \times 2$. Considering $2^2$ blocks of size $n_s \times n_s$ instead, would change the value of its Pfaffian by a factor
$(-1)^{n_s(n_s-1)/2}$, see \cite{CLDflat}. Note that $K_{21}(r,r')=-K_{12}(r',r) $ so the kernel is skew-symmetric. Finally, from \cite{rains2000correlation} Section 8, $Q(z)$ is by definition a Fredholm Pfaffian 
\begin{equation}
Q(z)=\mathrm{Pf}\left[ J-K\right]_{\mathbb{L}^2(\mathbb{R}, \nu_z)}
\end{equation}
\end{proof}

\begin{proposition}
\label{our_proposition}
$Q(z)$ is equal to the square root of a Fredholm determinant with scalar valued kernel
\begin{equation}
Q(z)=\sqrt{\mathrm{Det}\left[ I-\bar{K}\right]_{\mathbb{L}^2(\mathbb{R^+})}}
\end{equation}
where $\mathbb{L}^2(\mathbb{R}^+)$ is considered with the uniform measure. Introducing the functions defined on $\mathbb{R}$ by
\begin{equation}
\label{def_f_even_odd}
\begin{split}
f_{\mathrm{odd}}(r)=\int_{C}\mu(\mathrm{d}v) \, \phi_{\mathrm{odd}}(v)e^{ -rv }, \quad f_{\mathrm{even}}(r)=\int_{C}\mu(\mathrm{d}v) \, \phi_{\mathrm{even}}(v)e^{ -rv }
\end{split}
\end{equation}
the scalar kernel $\bar{K}$ is given, for $x,y\in \mathbb{R}^+$, by
\begin{equation}
\bar{K}(x,y)=2\int_{\mathbb{R}}\nu_z(\mathrm{d}r)\, \left[\mathrm{f}'_{\mathrm{even}}(r+x) \mathrm{f}_{\mathrm{odd}}(r+y)- \mathrm{f}'_{\mathrm{odd}}(r+x)\mathrm{f}_{\mathrm{even}}(r+y) \right] 
\end{equation}
and the scalar kernel $I$ is the identity kernel $I(x,y)=\mathds{1}_{x=y}$.
\end{proposition}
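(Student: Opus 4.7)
The strategy is to square the Fredholm Pfaffian, separate the kernel via a Laplace-type representation of $(v-w)/(v+w)$, and apply the cyclic identity for Fredholm determinants to collapse the resulting matrix-valued operator into the scalar one on $\mathbb{L}^2(\mathbb{R}^+)$.

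Because $J$ is antisymmetric and $K$ satisfies $K_{21}(r,r')=-K_{12}(r',r)$, the block operator $J-K$ is antisymmetric and the standard identity $\mathrm{Pf}(J-K)^{2}=\mathrm{Det}(I+JK)$ on $\mathbb{L}^2(\mathbb{R},\nu_z)\otimes\mathbb{C}^2$ reduces the claim to proving $\mathrm{Det}(I+JK)=\mathrm{Det}(I-\bar K)$. Next, inserting the representation
\[
\frac{v-w}{v+w}=\int_{0}^{\infty}(v-w)\,e^{-x(v+w)}\,\mathrm dx
\]
(valid on the contour by the hypothesis $\mathrm{Re}(v+w)>0$) into each $K_{ij}(r,r')$ and integrating out $v,w$ against the definition of $f_i$ produces the Wronskian-in-$x$ form
\[
K_{ij}(r,r')=\int_{0}^{\infty}\mathrm dx\,\bigl[f_i(r{+}x)\,f'_j(r'{+}x)-f'_i(r{+}x)\,f_j(r'{+}x)\bigr]=\mathcal A_i\mathcal B_j^{*}-\mathcal B_i\mathcal A_j^{*},
\]
where $\mathcal A_i,\mathcal B_i:\mathbb{L}^2(\mathbb{R}^+)\to\mathbb{L}^2(\mathbb{R},\nu_z)$ carry the kernels $f_i(r{+}x)$ and $f'_i(r{+}x)$.

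This factorization lets one write $JK=PQ$ with
\[
P=\begin{pmatrix}\mathcal A_2 & -\mathcal B_2\\ -\mathcal A_1 & \mathcal B_1\end{pmatrix},\qquad
Q=\begin{pmatrix}\mathcal B_1^{*} & \mathcal B_2^{*}\\ \mathcal A_1^{*} & \mathcal A_2^{*}\end{pmatrix},
\]
after which the cyclic identity $\mathrm{Det}(I+PQ)=\mathrm{Det}(I+QP)$ pushes the determinant down to $\mathbb{L}^2(\mathbb{R}^+)\otimes\mathbb{C}^2$, where a direct computation gives the block form
\[
QP=\begin{pmatrix}-\tfrac12\bar K & S\\ T & -\tfrac12\bar K^{T}\end{pmatrix},
\]
with $S,T$ scalar antisymmetric kernels built out of $\mathcal B_i^{*}\mathcal B_j$ and $\mathcal A_i^{*}\mathcal A_j$. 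A block Schur-complement then restates the target identity as
\[
\mathrm{Det}\bigl(I-\tfrac12\bar K\bigr)\,\mathrm{Det}\bigl(I-\tfrac12\bar K^{T}-T(I-\tfrac12\bar K)^{-1}S\bigr)=\mathrm{Det}(I-\bar K).
\]

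The main obstacle is precisely this last scalar identity. Since $\mathrm{Det}(I-\tfrac12\bar K)\,\mathrm{Det}(I-\tfrac12\bar K^{T})=\mathrm{Det}(I-\tfrac12\bar K)^{2}\neq\mathrm{Det}(I-\bar K)$ in general, the $S,T$ terms must supply exactly the missing piece. Expanded order by order in $\bar K$, this becomes a hierarchy of trace identities, starting with $\mathrm{Tr}(TS)=\tfrac14\mathrm{Tr}(\bar K^{2})$ at second order. These follow from integrating by parts in $z$ the auxiliary kernels $G_{ab}(r,r')=\int_{0}^{\infty}\mathrm dz\,f'_a(r{+}z)f_b(r'{+}z)$, which yields the rank-one boundary identity $G_{ab}(r,r')+G_{ba}(r',r)=-f_a(r)f_b(r')$ that is the algebraic engine rearranging the $S,T$ contributions into the expected combinations of $\bar K^{n}$ traces. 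This step is the extension of Barraquand's large-time GSE computation (where Airy asymptotics degenerate the boundary term) to the arbitrary-time and arbitrary-$(\phi_{\mathrm{odd}},\phi_{\mathrm{even}})$ setting of the proposition.
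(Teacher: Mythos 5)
Your proposal diverges from the paper at the factorization step, and this divergence is exactly where the argument remains incomplete. You insert the fully antisymmetric Laplace representation $\frac{v-w}{v+w}=\int_0^\infty (v-w)\,e^{-x(v+w)}\,\mathrm{d}x$, which expresses each $K_{ij}$ as a genuinely rank-two object in the $x$-variable, i.e. $K=\mathcal A\mathcal B^*-\mathcal B\mathcal A^*$. Consequently the cyclic swap $\mathrm{Det}(I+PQ)=\mathrm{Det}(I+QP)$ lands you on a nontrivial $2\times 2$ block operator $QP$ acting on $\mathbb{L}^2(\mathbb{R}^+)\otimes\mathbb{C}^2$, not on a scalar kernel. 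Reducing that to $\mathrm{Det}(I-\bar K)$ requires a Schur complement and a hierarchy of trace identities which you state but do not establish; you only verify the boundary identity $G_{ab}(r,r')+G_{ba}(r',r)=-f_a(r)f_b(r')$ (which is correct) and assert that it "rearranges the $S,T$ contributions" appropriately. That is the heart of the claim and it is left unproved, so the argument as written does not close.

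The paper sidesteps this entirely with a different decomposition. Instead of the symmetric Laplace formula it uses the asymmetric split $\frac{v-w}{v+w}=1-\frac{2w}{v+w}$, representing both pieces as integrals over $\mathbb{R}^+$ and converting the extra $w$ into $-\partial_{r'}$. The upshot is that the whole $2\times 2$ matrix kernel $K(r,r')$ factors as a single column-times-row product $K(r,r')=\int_0^\infty\mathrm{d}x\,A^{(1)}(r,x)A^{(2)}(x,r')$ with $A^{(1)}$ a $2\times 1$ and $A^{(2)}$ a $1\times 2$ operator-valued kernel. This rank-one structure (in $\mathbb{C}^2$) means that after the cyclic identity $\mathrm{Det}(I+A^{(1)}A^{(2)}J)=\mathrm{Det}(I+A^{(2)}JA^{(1)})$ one is immediately on $\mathbb{L}^2(\mathbb{R}^+)$ with a \emph{scalar} kernel, namely $A^{(2)}JA^{(1)}(x,y)=\mathcal K(x,y)-\tfrac12\mathcal K(x,0)$. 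The price paid for this asymmetric split is a rank-one correction $-\tfrac12\ket{\mathcal K\delta}\bra{1}$, and the remaining work is the matrix-determinant lemma together with the identity $\bra{1}\mathcal K(I+\mathcal K)^{-1}\ket{\delta}=0$, proved via a power-series (or Sherman–Morrison) argument exploiting that $\mathcal K=Dk$ with $k$ skew-symmetric. Your integration-by-parts boundary identity is in fact the same algebraic fact the paper uses (the skew-symmetry $k(x,y)=-k(y,x)$ and $k(0,0)=0$ drive the vanishing of $Q_n$), so you have the right ingredient — but it is being deployed in a context (the $2\times 2$ Schur complement) where you have not shown it suffices, rather than in the rank-one context where it demonstrably does.

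To repair your argument either (i) replace the symmetric Laplace representation with the asymmetric split above so that the factorization becomes rank one, after which the residual term is genuinely rank one and the $\bra{1}\mathcal K(I+\mathcal K)^{-1}\ket{\delta}=0$ lemma finishes it; or (ii) actually carry out the Schur complement and the order-by-order trace identities, verifying not just $\mathrm{Tr}(TS)=\tfrac14\mathrm{Tr}(\bar K^2)$ but the full hierarchy with the correct combinatorics. Option (i) is decisively shorter.
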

\begin{proof}
We start back from the definition of the matrix valued kernel $K$ in Eq. \eqref{app:K_block} and use the following identity
\begin{equation}
\frac{v-w}{v+w} = \frac{w}{w}-\frac{2w}{v+w} 
\end{equation}
along with the identities valid for $\mathrm{Re}(w)>0$ and $\mathrm{Re}(v+w)>0$
\begin{equation}
\frac{1}{w}=\int_0^{+\infty}\mathrm{d}x e^{-x w}, \quad \frac{1}{v+w}=\int_0^{+\infty}\mathrm{d}x e^{-x(v+w)}, \quad w e^{-r'w}=-\partial_{r'} e^{-r'w}
\end{equation}
These identities are used to separate the integrals w.r.t the variables $v$ and $w$. One can now introduce the odd and even functions
\begin{equation}
\begin{split}
f_{\mathrm{odd}}(r)=\int_{C}\mu(\mathrm{d}v) \, \phi_{\mathrm{odd}}(v)e^{ -rv }, \qquad f_{\mathrm{even}}(r)=\int_{C}\mu(\mathrm{d}v) \, \phi_{\mathrm{even}}(v)e^{ -rv }
\end{split}
\end{equation}
to write the elements of the kernel $K$ as
\begin{equation}
\label{app:K_block3}
\begin{split}
&K_{11}(r,r')=\int_{0}^{+\infty} \mathrm{d}x \, \left[2\mathrm{f}_{\mathrm{odd}}(r+x)\mathrm{f}'_{\mathrm{odd}}(r'+x)-\mathrm{f}_{\mathrm{odd}}(r)\mathrm{f}'_{\mathrm{odd}}(r'+x)\right]\\
&K_{22}(r,r')=\int_{0}^{+\infty} \mathrm{d}x \, \left[2\mathrm{f}_{\mathrm{even}}(r+x)\mathrm{f}'_{\mathrm{even}}(r'+x)-\mathrm{f}_{\mathrm{even}}(r)\mathrm{f'}_{\mathrm{even}}(r'+x)\right]\\
&K_{12}(r,r')=\int_{0}^{+\infty} \mathrm{d}x \, \left[2\mathrm{f}_{\mathrm{odd}}(r+x)\mathrm{f}'_{\mathrm{even}}(r'+x)-\mathrm{f}_{\mathrm{odd}}(r)\mathrm{f'}_{\mathrm{even}}(r'+x)\right]\\
&K_{21}(r,r')=\int_{0}^{+\infty} \mathrm{d}x \, \left[2\mathrm{f}_{\mathrm{even}}(r+x)\mathrm{f}'_{\mathrm{odd}}(r'+x)-\mathrm{f}_{\mathrm{even}}(r)\mathrm{f}'_{\mathrm{odd}}(r'+x)\right]\\
\end{split}
\end{equation}
Consider the notation for the matrix valued kernel
\begin{equation}
K(r,r')=
\left( 
\begin{array}{cc}
K_{11}(r,r')&K_{12}(r,r')\\ 
K_{21}(r,r')&K_{22}(r,r')
\end{array} 
\right)
\end{equation}
%The $2\times 2$ kernel now gives in its matricial form
%\begin{equation}
%\begin{split}
%&K(r,r')=\int_0^{+\infty} \mathrm{d}z \\
%&\hspace*{-2cm}\left( 
%\begin{array}{cc}
%2\mathrm{f}_{\mathrm{odd}}(r+z)\mathrm{f}'_{\mathrm{odd}}(r'+z)-\mathrm{f}_{\mathrm{odd}}(r)\mathrm{f}'_{\mathrm{odd}}(r'+z) &2\mathrm{f}_{\mathrm{odd}}(r+z)\mathrm{f}'_{\mathrm{even}}(r'+z)-\mathrm{f}_{\mathrm{odd}}(r)\mathrm{f'}_{\mathrm{even}}(r'+z)\\ 
%2\mathrm{f}_{\mathrm{even}}(r+z)\mathrm{f}'_{\mathrm{odd}}(r'+z)-\mathrm{f}_{\mathrm{even}}(r)\mathrm{f}'_{\mathrm{odd}}(r'+z) &2\mathrm{f}_{\mathrm{even}}(r+z)\mathrm{f}'_{\mathrm{even}}(r'+z)-\mathrm{f}_{\mathrm{even}}(r)\mathrm{f'}_{\mathrm{even}}(r'+z)
%\end{array} 
%\right)
%\end{split}
%\end{equation}
One of the two main steps of the proof is to notice that the kernel $K$ can be factorized as a product of a matrix that depends only on $r$ and another matrix that depends only on $r'$. 
\begin{equation}
K(r,r')=\int_0^{+\infty} \mathrm{d}x 
\left( 
\begin{array}{c}
2\mathrm{f}_{\mathrm{odd}}(r+x) - \mathrm{f}_{\mathrm{odd}}(r) \\ 
2\mathrm{f}_{\mathrm{even}}(r+x) - \mathrm{f}_{\mathrm{even}}(r)
\end{array} 
\right)
\left( 
\begin{array}{c}
\mathrm{f}'_{\mathrm{odd}}(r'+x) \\
\mathrm{f}'_{\mathrm{even}}(r'+x)
\end{array} 
\right)^T
\end{equation}
We now write this matrix product as an operator product
\begin{equation}
K(r,r') =  \int_{0}^{+\infty}\mathrm{d}x\, 	A^{(1)}(r,x) A^{(2)}(x,r')
\end{equation}
where the Hilbert-Schmidt operator $A^{(1)}:\mathbb{L}^2(\mathbb{R},\nu_z)\to \mathbb{L}^2(\mathbb{R}^+)$ is defined by
\begin{equation}
A^{(1)}(r,x)=\left( 
\begin{array}{cc}
2\mathrm{f}_{\mathrm{odd}}(r+x) - \mathrm{f}_{\mathrm{odd}}(r) \\ 
2\mathrm{f}_{\mathrm{even}}(r+x) - \mathrm{f}_{\mathrm{even}}(r)
\end{array} 
\right)
\end{equation}
and $A^{(2)}:\mathbb{L}^2(\mathbb{R}^+)\to \mathbb{L}^2(\mathbb{R},\nu_z)$ is defined by
\begin{equation}
A^{(2)}(x,r')=\left( 
\begin{array}{c}
\mathrm{f}'_{\mathrm{odd}}(r'+x) \\
\mathrm{f}'_{\mathrm{even}}(r'+x)
\end{array} 
\right)^T
\end{equation}
We have,
\begin{equation}
\mathrm{Pf}\left[ J-K\right]_{\mathbb{L}^2(\mathbb{R}, \nu_z)}=\mathrm{Pf}\left[ J-A^{(1)}A^{(2)}\right]_{\mathbb{L}^2(\mathbb{R}, \nu_z)}
\end{equation}
Using that for a skew-symmetric kernel $K$, $\mathrm{Pf}[J-K]^2=\mathrm{Det}[I+JK]$, (see \cite{rains2000correlation}, Lemma 8.1), where the scalar kernel $I$ is the identity kernel $I(x,y)=\mathds{1}_{x=y}$, this gives
\begin{equation}
\mathrm{Pf}\left[ J-K\right]^2_{\mathbb{L}^2(\mathbb{R}, \nu_z)}=\mathrm{Det}\left[ I+JA^{(1)}A^{(2)}\right]_{\mathbb{L}^2(\mathbb{R}, \nu_z)}
\end{equation}
Following \cite{tracy1998correlation,borodin2007note}, one uses the "needlessly fancy" general relation $\mathrm{det}(I+AB)=\mathrm{det}(I+BA)$ for arbitrary Hilbert-Schmidt operators $A$ and $B$. They may act between different spaces as long as the products make sense. In the present context $\mathrm{det}(I+AB)$ is the Fredholm determinant of a matrix valued kernel whilst $\mathrm{det}(I+BA)$ is a Fredholm determinant of scalar valued kernel.
\begin{equation}
\mathrm{Pf}\left[ J-K\right]^2_{\mathbb{L}^2(\mathbb{R}, \nu_z)}=\mathrm{Det}\left[ I+A^{(2)}JA^{(1)}\right]_{\mathbb{L}^2(\mathbb{R}^+)}
\end{equation}
Let us compute the scalar valued kernel of the operator $A^{(2)}JA^{(1)}:\mathbb{L}^2(\mathbb{R}^+)\to \mathbb{L}^2(\mathbb{R}^+)$.
\begin{equation}
\begin{split}
A^{(2)}JA^{(1)}(x,y)&=\int_{\mathbb{R}}\nu_z(\mathrm{d}r)\, A^{(2)}(x,r)JA^{(1)}(r,y)\\
&=\mathcal{K}(x,y)-\frac{1}{2}\mathcal{K}(x,0)
\end{split}
\end{equation}
where $\mathcal{K}:\mathbb{L}^2(\mathbb{R}^+)\to\mathbb{L}^2(\mathbb{R}^+)$ is  defined by
\begin{equation}
\begin{split}
\mathcal{K}(x,y)&=2\int_{\mathbb{R}}\nu_z(\mathrm{d}r)\, \left[ \mathrm{f}'_{\mathrm{odd}}(r+x)\mathrm{f}_{\mathrm{even}}(r+y)-\mathrm{f}'_{\mathrm{even}}(r+x) \mathrm{f}_{\mathrm{odd}}(r+y) \right] 
\end{split}
\end{equation}
We observe that $\mathcal{K}$ can be written as a partial derivative w.r.t its first variable $\mathcal{K}(x,y)=\partial_x k(x,y) $ where $k$ is a skew-symmetric scalar kernel given by
\begin{equation}
\label{skew_symm_kernel}
k(x,y)=2\int_{\mathbb{R}}\nu_z(\mathrm{d}r)\, \left[ \mathrm{f}_{\mathrm{odd}}(r+x)\mathrm{f}_{\mathrm{even}}(r+y)-\mathrm{f}_{\mathrm{even}}(r+x) \mathrm{f}_{\mathrm{odd}}(r+y)\right] 
\end{equation}
The operator $(x,y)\mapsto \mathcal{K}(x,0)$ is of rank 1 and can be written as $\mathcal{K}\ket{\delta}\bra{1}$ where all products have to be taken in the sense of Hilbert-Schmidt integral operator products. Here $\delta$ is the $\delta$-function at $x=0$ and $1$ denotes the function $1(x)=1$ for all $x\geq 0$. This leads to the equality
\begin{equation}
\mathrm{Pf}\left[ J-K\right]^2_{\mathbb{L}^2(\mathbb{R}, \nu_z)}=\mathrm{Det}\left[ I+\mathcal{K}\left(I-\frac{1}{2}\ket{\delta}\bra{1}\right)\right]_{\mathbb{L}^2(\mathbb{R}^+)}
\end{equation}
As $\ket{\delta}\bra{1}$ is of rank 1, by the matrix determinant lemma, we have
\begin{equation}
\mathrm{Pf}\left[ J-K\right]^2_{\mathbb{L}^2(\mathbb{R}, \nu_z)}=\mathrm{Det}\left[ I+\mathcal{K}\right]_{\mathbb{L}^2(\mathbb{R}^+)}\left(1-\frac{1}{2}\bra{1} \mathcal{K}(I+\mathcal{K})^{-1}\ket{\delta}  \right)
\end{equation}
We now want to prove the following identity to be able to conclude
\begin{equation}
\label{app:identity_kernel}
\bra{1} \mathcal{K}(I+\mathcal{K})^{-1}\ket{\delta} =0
\end{equation}
The main ingredient to prove this is the remarkable fact that $\mathcal{K}$ is expressed as a product $\mathcal{K}=Dk$, where $D=\partial_x$ and $k$ is a skew-symmetric kernel as introduced in \eqref{skew_symm_kernel}. For this type of kernels, \eqref{app:identity_kernel} was proven in \cite{CLDflat} Appendix H, and we re-derive the proof here for completeness. We first expand $(I+\mathcal{K})^{-1}$ as a series 
\begin{equation}
\bra{1} \mathcal{K}(I+\mathcal{K})^{-1}\ket{\delta} =-\sum_{n=1}^{+\infty} (-1)^n \bra{1} \mathcal{K}^n\ket{\delta}
\end{equation}
A sufficient condition for \eqref{app:identity_kernel} to hold is that for all $ n\geq 1, \;\bra{1}\mathcal{K}^n\ket{\delta}=0$. We introduce the notation $Q_n= \bra{1}\mathcal{K}^n\ket{\delta}$, show explicitly for $n=1$ that this is true and we will finally proceed by induction.
 \begin{equation}
 \begin{split}
 Q_1&=\int_{\mathbb{R}^{+,\otimes 2}}\mathrm{d}x_1 \mathrm{d}x_2 \, \mathcal{K}(x_1,x_2)\delta(x_2)
% =\int_{\mathbb{R}^{+}}\mathrm{d}x_1 \, \mathcal{K}(x_1,0)\\
  =\int_{\mathbb{R}^{+}}\mathrm{d}x_1 \, \partial_{x_1}k(x_1,0)=0
 \end{split}
 \end{equation}
% and
% \begin{equation}
% \begin{split}
% Q_2&=\int_{\mathbb{R}^{+,\otimes 3}}\mathrm{d}x_1 \mathrm{d}x_2 \mathrm{d}x_3 \, \mathcal{K}(x_1,x_2)\mathcal{K}(x_2,x_3)\delta(x_3)\\
% &= \int_{\mathbb{R}^+} \mathrm{d}x_2 \, k(x_2,0)\partial_{x_2} k(x_2,0)=- \frac{1}{2}k(0,0)^2=0
% \end{split}
% \end{equation}
 where we used that $k(0,0)=0$. The general term $Q_n$ is given by
 \begin{equation}
 \label{app:def_q}
 \begin{split}
 Q_n&=\int_{\mathbb{R}^{+,\otimes (n+1)}}\mathrm{d}x_1\dots\mathrm{d}x_{n+1}\,  \mathcal{K}(x_1,x_2)\prod_{i=2}^n \left[\mathcal{K}(x_i,x_{i+1})\right]\delta(x_{n+1})\\
% &=\int_{\mathbb{R}^{+,\otimes (n+1)}}\mathrm{d}x_2\dots\mathrm{d}x_{n-1}\,  \partial_{x_1}k(x_1,x_2)\prod_{i=2}^n \left[\partial_{x_i} k(x_i,x_{i+1})\right]\delta(x_{n+1})\\
  &=\int_{\mathbb{R}^{+,\otimes (n-1)}}\mathrm{d}x_2\dots\mathrm{d}x_{n}\,  k(x_2,0)\prod_{i=2}^{n-1} \left[\partial_{x_i} k(x_i,x_{i+1})\right]\partial_{x_n}k(x_n,0)\\
 \end{split}
 \end{equation}
where in the last line, we have integrated w.r.t $x_1$ and $x_{n+1}$. We shall prove that $Q_n$ verifies the following identity for any $n\geq 2$
  \begin{equation}
  \label{app:convolution}
 Q_n=\frac{1}{2}\sum_{p=1}^{n-1}Q_p Q_{n-p}
 \end{equation}
 To do so, we use two observations. Due to the skew-symmetry of $k$, for any $x$, we have $\partial_x k(x,y)=-\partial_x k(y,x)$. Besides, all derivatives are applied on the first variable of $k$ in the definition of $Q_n$, so we use the integration by part identity (coupled to the skew-symmetry of $k$)
 \begin{equation}
 \label{app:ipp}
\int_y \partial_x k(x,y)\partial_y k(y,z)= \partial_x k(x,0)k(z,0)+\int_y \partial_x \partial_y k(y,x) k(y,z)
 \end{equation}
% Schematically, we write this equality as
% \begin{equation}
% k \overset{\rightarrow}{\partial}k=-k\delta k - k \overset{\leftarrow}{\partial}k
% \end{equation}
The idea is to push all derivatives from the right to the left in the second line of \eqref{app:def_q} by successive integrations by part. The boundary terms in \eqref{app:ipp} will cut the integral into two parts, giving the discrete convolution term $Q_p Q_{n-p}$. The very last term not coming from the boundaries will be 
\begin{equation}
\int_{\mathbb{R}^{+,\otimes (n-1)}}\mathrm{d}x\,  \partial_{x_2}k(0,x_2)\prod_{i=2}^{n-1} \left[\partial_{x_{i+1}} k(x_{i+1},x_{i})\right]k(x_n,0)
\end{equation}
Up to the relabeling $x_i \to x_{n+1-i}$ one recognizes that this is equal to $-Q_n$ leading to
  \begin{equation}
 Q_n=-Q_n+\sum_{p=1}^{n-1}Q_p Q_{n-p}
 \end{equation}
which is exactly \eqref{app:convolution}. Finally, from \eqref{app:convolution} and the fact that $Q_1=0$, by induction we have for all $n\geq 1$, $Q_n=0$. And therefore \eqref{app:identity_kernel} is verified so that the Fredholm Pfaffian reads
\begin{equation}
\label{app:final}
\mathrm{Pf}\left[ J-K\right]^2_{\mathbb{L}^2(\mathbb{R}, \nu_z)}=\mathrm{Det}\left[ I+\mathcal{K}\right]_{\mathbb{L}^2(\mathbb{R}^+)}
\end{equation}
Defining $\bar{K}=-\mathcal{K}$ and taking the square root on both side of \eqref{app:final} ends the proof.
\end{proof}

\begin{remark}
By expanding on powers of $\tau$, one can check that
\begin{equation}
\label{app:identityG}
\begin{split}
&\mathrm{Det}\left[I+\tau\int_0^{+\infty} \mathrm{d}x \left( 
\begin{array}{c}
g_1(r,x) \\ 
g_2(r,x)
\end{array} 
\right)
\left( 
\begin{array}{c c}
g_3(x,r') & g_4(x,r') \\ 
\end{array} 
\right) \,  \right]_{\mathbb{L}^2(\mathbb{R}, \nu_z)}\\
&\hspace*{5cm}=\\
&\mathrm{Det}\left[I+\tau \int_{\mathbb{R}} \nu_z(\mathrm{d}r) 
\left( 
\begin{array}{c c}
g_3(x,r) & g_4(x,r) \\ 
\end{array} 
\right)
\left( 
\begin{array}{c}
g_1(r,y) \\ 
g_2(r,y)
\end{array} 
\right)
 \,  \right]_{\mathbb{L}^2(\mathbb{R}^+)}
\end{split}
\end{equation}
It is easy to see that the traces of arbitrary powers of the two operators of \eqref{app:identityG} coincide.
\end{remark}
\begin{remark}
A shorter version of the proof concerning the identity $\bra{1} \mathcal{K}(I+\mathcal{K})^{-1}\ket{\delta}=0$ can be given as follows. Since $\mathcal{K}=Dk$, we rewrite this identity as
\begin{equation}
\begin{split}
Q&=\bra{1} \mathcal{K}(I+\mathcal{K})^{-1}\ket{\delta}=-\bra{\delta} k(I+Dk)^{-1}\ket{\delta}\\
\end{split}
\end{equation}
where we used for any function $f$ that $\bra{1}Df=-\bra{\delta} f$. Note also the commutation relation $k(I+Dk)^{-1}=(I+kD)^{-1}k$. We recall that $D$ is the derivative operator defined by its matrix element $\bra{f}D\ket{g}=\int_{\mathbb{R}^+}\mathrm{d}x\, f(x)g'(x)$. By integration by part, the adjoint of $D$ is
\begin{equation}
D^T=-D-\ket{\delta}\bra{\delta}
\end{equation}
Taking the adjoint of the operator $k(I+Dk)^{-1}$, we have
\begin{equation}
\begin{split}
Q&=-\bra{\delta} (I+k^TD^T)^{-1}k^T\ket{\delta}=\bra{\delta} \left(I+kD+k\ket{\delta}\bra{\delta}\right)^{-1}k\ket{\delta}
\end{split}
\end{equation}
We can use the Sherman-Morrison identity since the last term in the inverse is a rank 1 operator.
\begin{equation}
\begin{split}
Q&=\bra{\delta} \left(I+kD\right)^{-1}k\ket{\delta}-\frac{\bra{\delta} \left(I+kD\right)^{-1}k\ket{\delta}\bra{\delta} \left(I+kD\right)^{-1}k\ket{\delta}}{1+\bra{\delta}\left(I+kD\right)^{-1}k\ket{\delta}}\\
&=-Q-\frac{Q^2}{1-Q}
\end{split}
\end{equation}
which implies $Q=0$ or $Q=2$. Since the amplitude of $k$ can be increased continuously from $0$ to any value, by continuity, the solution is $Q=0$. This agrees with the previous calculation using a power series expansion.
\end{remark}
\begin{remark}
As these proofs did not depend on the measures $\nu$ and $\mu$, we can apply these lemmas to the solution of the KPZ equation at all times.
\end{remark}

\section{Short time limit of the $A=\infty$ kernel}
\label{app:infty}
As required from Section \ref{method11}, we study at short time the off-diagonal element 
\begin{equation}
\begin{split}
K_{12}(rt^{-\frac{1}{3}},r't^{-\frac{1}{3}})=&\int_{C_{v}}\int_{C_{w}} \frac{\mathrm{d}v\mathrm{d}w}{(2i\pi)^2 \pi t^{\frac{2}{3}}} \frac{v-w}{v+w}\Gamma(2vt^{-\frac{1}{2}})\Gamma(2wt^{-\frac{1}{2}})\cos(\pi vt^{-\frac{1}{2}})\sin(\pi wt^{-\frac{1}{2}})\\
&e^{ -\frac{1}{\sqrt{t}}[rv+r'w -  \frac{v^3+w^3}{3}] }
\end{split}
\end{equation}
We rewrite the pre-factors of the exponential in the integrand of $K_{12}$ using the following continuations in the complex plane and their associated Stirling asymptotics
\begin{equation}
\begin{split}
&\Gamma(2w)\sin(\pi w)=\frac{2^{2w}\sqrt{\pi}}{2}\frac{\Gamma(w+\frac{1}{2})}{\Gamma(1-w)}\underset{|w|\gg1}{\simeq} \sqrt{\frac{-\pi}{4w}}\exp \left( (\log(-4w^2)-2)w \right)\\
&\Gamma(2v)\cos(\pi v)=\frac{2^{2v}\sqrt{\pi}}{2}\frac{\Gamma(v)}{\Gamma(\frac{1}{2}-v)} \underset{|v|\gg1}{\simeq} \sqrt{\frac{\pi}{4v}}\exp\left((\log(-4 v^2)-2)v\right)
\end{split}
\end{equation}
Within these approximations and continuations, the off-diagonal kernel reads
\begin{equation}
\begin{split}
K_{12}(rt^{-\frac{1}{3}},&r't^{-\frac{1}{3}})=\int_{C_{v}}\int_{C_{w}} \frac{\mathrm{d}v\mathrm{d}w}{4 (2i\pi)^2  t^{\frac{1}{6}}} \frac{v-w}{v+w}\frac{1}{\sqrt{-w}}\frac{1}{\sqrt{v}}\\
& \exp\left( -\frac{1}{\sqrt{t}}[(r+2-\log(\frac{-4v^2}{t}))v+(r'+2-\log(\frac{-4w^2}{t}))w -  \frac{v^3+w^3}{3}] \right)
\end{split}
\end{equation}
We define the rate function $\varphi_r(w)=(-r+\log(\frac{-4w^2}{t})-2)w+\frac{w^3}{3}$ to write the off-diagonal in a suitable form for a saddle point approximation
\begin{equation}
\begin{split}
K_{12}(rt^{-\frac{1}{3}},r't^{-\frac{1}{3}})=&\int_{C_{v}}\int_{C_{w}} \frac{\mathrm{d}v\mathrm{d}w}{4 (2i\pi)^2 t^{\frac{1}{6}}} \frac{v-w}{v+w}\frac{1}{\sqrt{-w}}\frac{1}{\sqrt{v}} \exp\left(\frac{1}{\sqrt{t}} [\varphi_r(v)+\varphi_{r'}(w) ]\right)
\end{split} 
\end{equation}
The saddle points are solution of $\varphi_r'(w)=0$, i.e. $w^2+\log(-w^2)=r+\log(\frac{t}{4}) $ which yields four solutions expressed with the two real branches of the Lambert function $W_{0,-1}$, see \cite{corless1996lambertw}
\begin{equation}
w^{(c)}=\pm i \sqrt{-W_{0,-1}(-\frac{t}{4}	e^{r})}, \quad \varphi(w^{(c)})=-\frac{2}{3}w^{(c)3}-2w^{(c)}, \quad \varphi''(w^{(c)})=2[w^{(c)}+\frac{1}{w^{(c)}}]
\end{equation}
For the saddle points to belong to the contour $C_w$, one needs $-\frac{t}{4}	e^{r} \in [-\frac{1}{e},0]$, and up to redefinition of $r$ by a shift of $\frac{t}{4}$, we have the constraint $r\in \left]-\infty,-1\right]$. This accounts to a redefinition of the initial condition $H\to H+\log(\frac{t}{4})$. For $w^2 \in [-1,0]$, we choose the branch $W_0$ and for $w^2 \in \left]-\infty,-1\right]$, we choose the branch $W_{-1}$. In the overall we have 16 purely imaginary saddle-point combinations, four for each variable which we denote by an index $i\in [1,4]$. A saddle-point expansion yields
\begin{equation}
\begin{split}
K_{12}(rt^{-\frac{1}{3}},r't^{-\frac{1}{3}})\simeq\frac{t^{1/3}}{16\pi i^2 }&\sum_{i,j=1}^4\frac{ v_i^{(c)}-w_j^{(c)}}{v_i^{(c)} + w_j^{(c)}}\frac{1}{\sqrt{v_i^{(c)}}}\frac{1}{\sqrt{-w_j^{(c)}}}\sqrt{\frac{-v_i^{(c)}}{1+v_i^{(c)2}}}\sqrt{\frac{-w_j^{(c)}}{1+w_j^{(c)2}}}\\
& \exp\left(\frac{1}{\sqrt{t}} [\varphi_r(v_i^{(c)})+\varphi_{r'}( w_j^{(c)}) ]\right)
\end{split} 
\end{equation}
The square roots are not simplified as there exist different branches in the complex plane.
\begin{comment}
Because of the numerator, the saddle-points where $i=j$ have a zero contribution, therefore leaving 12 suitable saddle-points
\begin{equation}
\begin{split}
K_{12}(rt^{-\frac{1}{3}},r't^{-\frac{1}{3}})&\simeq\frac{1}{16\pi i^2}\sum_{i\neq j\in [1,4]}\frac{ w_i^{(c)}-z_j^{(c)}}{w_i^{(c)} + z_j^{(c)}}\frac{1}{\sqrt{-(1+w_i^{(c)2})(1+z_j^{(c)2})}} \exp\left(\frac{1}{\sqrt{t}}[ \varphi_r(v_i^{(c)})+\varphi_{r'}( w_j^{(c)}) ]\right)
\end{split} 
\end{equation}
\end{comment}
At the very end we are interested in the $r=r'$ limit, hence we write $r'=r+\kappa$ and aim at taking the $\kappa=0$. As $\kappa$ is small, we expand the critical point and the value of the rate function at the critical point
\begin{equation}
w_j^{(c)}=v_j^{(c)}+\frac{1}{2}\frac{v_j^{(c)}\kappa}{1+v_j^{(c)2}} \qquad , \qquad \varphi_{r'}( w_j^{(c)})=\varphi_r(v_j^{(c)})-v_j^{(c)}\kappa
\end{equation}
Within the linear regime, the off-diagonal kernel reads
\begin{equation}
\begin{split}
K_{12}(\frac{r}{t^{\frac{1}{3}}},\frac{r+\kappa}{t^{\frac{1}{3}}})\simeq\frac{t^{1/3}}{16\pi i^2 }&\sum_{i,j=1}^4\frac{ v_i^{(c)}-v_j^{(c)}-\frac{1}{2}\frac{v_j^{(c)}\kappa}{1+v_j^{(c)2}}}{v_i^{(c)} + v_j^{(c)}+\frac{1}{2}\frac{v_j^{(c)}\kappa}{1+v_j^{(c)2}}}\frac{1}{\sqrt{v_i^{(c)}}}\frac{1}{\sqrt{-v_j^{(c)}}}\sqrt{\frac{-v_i^{(c)}}{1+v_i^{(c)2}}}\sqrt{\frac{-v_j^{(c)}}{1+v_j^{(c)2}}}\\
& \exp\left(\frac{1}{\sqrt{t}} [\varphi_r(v_i^{(c)})+\varphi_{r}( v_j^{(c)})-v_j^{(c)}\kappa ]\right)
\end{split} 
\end{equation}
From there, either we have three different combinations $W_0\times W_{-1}$, $W_0 \times W_0 $ and $W_{-1}\times W_{-1}$. The leading term of this expansion is obtained for $v_i^{(c)}=-v_j^{(c)}$, i.e. the saddle points are of the same Lambert branch but with opposite sign. This cancels the $\varphi$ functions in the exponential and the denominator in the sum is of order $\kappa$. We additionally rescale $\kappa$ by a factor $\sqrt{t}$ to obtain 
\begin{equation}
\begin{split}
K_{12}(\frac{r}{t^{\frac{1}{3}}},\frac{r+\kappa \sqrt{t}}{t^{\frac{1}{3}}})\simeq&\sum_{i=1}^2\frac{ 1+v_i^{(c)2}}{\kappa 4\pi t^{1/6} }\frac{1}{\sqrt{v_i^{(c)}}}\frac{1}{\sqrt{v_i^{(c)}}}\sqrt{\frac{-v_i^{(c)}}{1+v_i^{(c)2}}}\sqrt{\frac{v_i^{(c)}}{1+v_i^{(c)2}}} \exp\left(v_i^{(c)}\kappa \right)
\end{split} 
\end{equation}
Noticing that $1+v_j^{(c)2}$ is positive for the branch $W_0$ and negative for the branch $W_{-1}$ and that whenever we take opposite saddle points we cross the branch cut of the square root, the off-diagonal kernel $K_{12}$ simplifies into the difference of two sine kernels
\begin{comment}
\begin{equation}
\begin{split}
K_{12}(r,r+\kappa \sqrt{t})&\simeq\frac{1}{2\pi \sqrt{t} }\sum_{i\in [1,4]}\frac{ 1}{\kappa \sqrt{-1}} \exp\left(w_i^{(c)}\kappa\right)
\end{split} 
\end{equation}
We need an extra factor $(-1)$ between the different saddle points to obtain the sine kernel. Up to redefinition of $r$, the kernel is given by
\end{comment}
\begin{equation}
K_{12}(\frac{r}{t^{\frac{1}{3}}},\frac{r+\kappa \sqrt{t}}{t^{\frac{1}{3}}})\simeq\frac{1}{2\pi \kappa t^{1/6}}\left( \sin(\kappa \sqrt{-W_{-1}(-e^{r})})-\sin(\kappa \sqrt{-W_{0}(-e^{r})})\right)
\end{equation}
Taking $\kappa=0$ yields the following density which vanishes at $r=-1$ and is strictly positive
\begin{equation}
\rho(rt^{-1/3})\simeq\frac{1}{2\pi t^{1/6}}\left(\sqrt{-W_{-1}(-e^{r})}-\sqrt{-W_{0}(-e^{r})}\right)\theta(r\leq -1)
\end{equation}

\newpage

\section{Half-space propagator for the heat equation}
\label{derivation_propagator}
We derive here the propagator of the heat equation in half-space\eqref{propagator_HS}. To do so, we shall find its fundamental solution for a point source, and as the heat equation is invariant by time-translation, we consider the problem
\begin{equation}
\partial_t Z(x,t)=\partial^2_x Z(x,t) 
\end{equation}
along with a delta IC $Z(x,t=0)=\delta(x-\varepsilon)$, $\varepsilon>0$ and Robin b.c. $ \partial_x Z(x,t)\mid_{x=0}=A Z(0,t)$.\\

We introduce the functions $v=\partial_x Z -AZ$, $\phi(x)=\delta'(x-\varepsilon)-A \delta(x-\varepsilon)$ and the heat kernel $G(x,t)=\frac{1}{\sqrt{4\pi t}}\exp(-\frac{x^2}{4t})\theta(t)$ along with $G(x,0)=\delta(x)$. As $v$ verifies the heat equation with Dirichlet b.c., i.e. $v(x=0,t)=0$, with $\phi$ as IC, we obtain its expression by the image method, i.e. the anti-symmetrization of the full-space kernel
\begin{equation}
\begin{split}
v(x,t)&=\int_{0}^{+\infty} \mathrm{d}y \, \left(G(x-y,t)-G(x+y,t)\right) \phi(y)\\
&=\partial_x G(x-\varepsilon,t)+\partial_x G(x+\varepsilon,t)-A\left( G(x-\varepsilon,t)-G(x+\varepsilon,t)\right)\\
\end{split}
\end{equation}
We solve the ODE $v=\partial_x Z -AZ$ to obtain the partition function. For this, we need to introduce two constants $a$ and $C$ so that
\begin{equation}
\begin{split}
&Z(x,t)=\int_{a}^{x} \mathrm{d}y\, e^{A(x-y)}v(y,t)+C e^{Ax}\\
&=\left[ (G(y-\varepsilon,t)+G(y+\varepsilon,t)) e^{A(x-y)}\right]_{a}^x +2A \int_{a}^{x} \mathrm{d}y\, e^{A(x-y)} G(y+\varepsilon,t)+Ce^{Ax}
\end{split}
\end{equation}
The determination of $a$ and $C$ should enforce the matching with the IC of $Z$. Indeed, at $t=0$, choosing
\begin{comment}
\begin{equation}
Z(x,0)=\left[ (\delta(y-\varepsilon)+\delta(y+\varepsilon)) e^{A(x-y)}\right]_{a}^x +2A\int_{a}^{x} \mathrm{d}y\, e^{A(x-y)} \delta(y+\varepsilon)+Ce^{Ax}
\end{equation}
We choose
\end{comment}
 $a=+\infty$ and $C=0$ one obtains
\begin{equation}
\begin{split}
Z(x,0)&=\delta(x-\varepsilon)+\delta(x+\varepsilon)+2Ae^{Ax}\int_{+\infty}^{x} \mathrm{d}y\, e^{-Ay} \delta(y+\varepsilon)\\
\end{split}
\end{equation}
Two of the three $\delta$ functions, $\delta(x+\varepsilon)$ and $\delta(y+\varepsilon)$, are zero as we define the partition function over the positive real line and as their support is on the negative real line. Therefore, the fundamental solution of the heat equation with Robin b.c. is 
\begin{equation}
\label{heat_kernel}
Z(x,t)= G(x-\varepsilon,t)+G(x+\varepsilon,t) - 2A\int_{0}^{+\infty} \mathrm{d}y\, e^{-Ay} G(y+x+\varepsilon,t)
\end{equation}
\begin{remark}
If $A=0$, then we find back the solution for the Neumann b.c. $\partial_x Z(x,t) \mid_{x=0}=0$ which is $Z(x,t)= G(x-\varepsilon,t)+G(x+\varepsilon,t) $
\end{remark}
\begin{remark}
The infinite $A$ limit, i.e. the Dirichlet b.c. $Z(0,t)=0$, and its first $1/A$ corrections are obtained by solving the integral in \eqref{heat_kernel} exactly and expanding the result
\begin{equation}
Z(x,t)= G(x-\varepsilon,t)+G(x+\varepsilon,t) - A e^{A (A t+x+\varepsilon )} \text{Erfc}\left(\frac{2 A t+x+\varepsilon }{2 \sqrt{t}}\right)
\end{equation}
The expansion for large $A$ gives
\begin{equation}
\begin{split}
Z(x,t) &= G(x-\varepsilon,t)-(1-\frac{x+\varepsilon}{A t}+\frac{x^2+2 x \epsilon +\epsilon
   ^2-2t }{2A^2 t^2}+\mathcal{O}(\frac{1}{A^3}))	G(x+\varepsilon,t)
   \end{split}
\end{equation}
\end{remark}
We finally interpret the solution as the propagator from a source point situated at $\varepsilon$ at time $\tau=0$ going to the point $x$ in a time $t$, hence we define the propagator from $(y,t')$ to $(x,t)$, using the time-translation invariance.
\begin{equation}
\mathcal{G}(y,x,t',t)=G(x-y,t-t')+G(x+y,t-t') - 2A\int_{0}^{+\infty} \mathrm{d}z\, e^{-Az} G(x+y+z,t-t')
\end{equation}
which is the propagator announced in \eqref{propagator_HS}.
\begin{comment}
The propagator allows us to extract the general solution of the SHE where the multiplicative noise is seen as a source term.
\begin{equation}
\begin{split}
Z(x,t)&=\int_{0}^t \int_{0}^{+\infty}\mathrm{d}s \mathrm{d}y \, \mathcal{G}(y,x,s,t)\left[ \xi(y,s)Z(y,s) + \delta(y-\varepsilon)\delta(s)\right]\\
&= \mathcal{G}\star (\xi Z+\delta \delta)
\end{split}
\end{equation}
We hereby define $\star$ as the space-time convolution.
\end{comment}
\newpage
\end{appendix}
\newpage

%Your references should start with the comma-separated author list (initials + last name), the publication title in italics, the journal reference with volume in bold, start page number, publication year in parenthesis, completed by the DOI link (linking must be implemented before publication). 

% TODO:
% Provide your bibliography here. You have two options:

% FIRST OPTION - write your entries here directly, following the example below, including Author(s), Title, Journal Ref. with year in parentheses at the end, followed by the DOI number.
%\begin{thebibliography}{99}
%\bibitem{1931_Bethe_ZP_71} H. A. Bethe, {\it Zur Theorie der Metalle. i. Eigenwerte und Eigenfunktionen der linearen Atomkette}, Zeit. f{\"u}r Phys. {\bf 71}, 205 (1931), \doi{10.1007\%2FBF01341708}.
%\bibitem{arXiv:1108.2700} P. Ginsparg, {\it It was twenty years ago today... }, \url{http://arxiv.org/abs/1108.2700}.
%\end{thebibliography}

% SECOND OPTION:
% Use your bibtex library
% \bibliographystyle{SciPost_bibstyle} % Include this style file here only if you are not using our template
%\bibliography{SciPost_Example_BiBTeX_File.bib}

\nolinenumbers

\end{document}